\documentclass[a4paper,11pt]{article}

\usepackage[german, english]{babel}
\usepackage[utf8]{inputenc}

\usepackage{amssymb}
\usepackage{amsmath}
\usepackage{amsthm}
\usepackage{amsfonts}
\usepackage{mathtools}
\usepackage{thmtools}
\usepackage{latexsym}
\usepackage{graphicx}
\usepackage{bm}  
\usepackage{overpic}
\usepackage[normalem]{ulem} 
\usepackage[colorlinks=false,linkbordercolor={0 0 1}]{hyperref} 
\usepackage{exscale} 
\usepackage[usenames,dvipsnames]{color} 
\usepackage{enumerate}
\usepackage{xfrac} 
\usepackage{booktabs}
\usepackage{tikz}
\usetikzlibrary{matrix, fit, trees, positioning}
\usepackage{layout}
\usepackage{geometry} 
\usepackage{subcaption}
\captionsetup{justification=raggedright,singlelinecheck=false}
\usepackage{adjustbox}
\usepackage{framed}

\usepackage{natbib}
\bibliographystyle{agsmdoi} 
\AtBeginDocument{\renewcommand{\harvardand}{and}}
\renewcommand\harvardurl{URL: \url}  



\def\wt{\widetilde}

\def\eps{\varepsilon}

\def\ignore#1{}

\def\cB{\mathcal B}
\def\cC{\mathcal C}

\def\cI{\mathcal I}

\def\cM{\mathcal M}
\def\cP{\mathcal P}

\def\cS{\mathcal S}

\def\bN{\mathbb N}

\def\bR{\mathbb R}
\def\bT{\mathbb T}
\def\bZ{\mathbb Z}


\DeclareMathOperator*{\argmax}{arg\,max}

\newcommand{\interior}[1]{%
	{\kern0pt#1}^{\mathrm{o}}%
}
\newcommand{\1}{{\rm 1\hspace*{-0.4ex}
		\rule{0.1ex}{1.52ex}\hspace*{0.2ex}}}
\def\Ind#1{\1_{_{\scriptstyle #1}}}			

\DeclarePairedDelimiter\abs{\lvert}{\rvert}   

\DeclarePairedDelimiter\floor{\lfloor}{\rfloor}

\theoremstyle{plain} 		
\newtheorem{theorem}{Theorem}[section]

\newtheorem{lemma}[theorem]{Lemma}

\newtheorem{definition}[theorem]{Definition}

\theoremstyle{definition}	
\newtheorem{example}[theorem]{Example}

\newtheorem{remark}[theorem]{Remark}
\newtheorem{assumption}[theorem]{Assumption}

\numberwithin{equation}{section}
\numberwithin{figure}{section}

\topmargin=-1cm
\textheight=23cm
\textwidth=436pt
\renewcommand{\baselinestretch}{1}\normalsize

\newcommand\extrafootertext[1]{%
	\bgroup
	\renewcommand\thefootnote{\fnsymbol{footnote}}%
	\renewcommand\thempfootnote{\fnsymbol{mpfootnote}}%
	\footnotetext[0]{#1}%
	\egroup
}


\begin{filecontents*}{plot1.table}
	# Curve 0 of 1, 25 points
	# Curve title: "(-6914880-28812000*x+1359750*x**2-4659375*x**4+21590625*x**3)/(3000000*x**6-30500000*x**5+34360000*x**4+95060000*x**3-73225600*x**2-69148800*x-11063808)"
	# x y type
	0.00000 0.62500  i
	0.04167 0.57671  i
	0.08333 0.53791  i
	0.12500 0.50605  i
	0.16667 0.47942  i
	0.20833 0.45689  i
	0.25000 0.43762  i
	0.29167 0.42102  i
	0.33333 0.40664  i
	0.37500 0.39414  i
	0.41667 0.38324  i
	0.45833 0.37373  i
	0.50000 0.36547  i
	0.54167 0.35830  i
	0.58333 0.35214  i
	0.62500 0.34691  i
	0.66667 0.34255  i
	0.70833 0.33901  i
	0.75000 0.33629  i
	0.79167 0.33437  i
	0.83333 0.33327  i
	0.87500 0.33304  i
	0.91667 0.33374  i
	0.95833 0.33548  i
	1.00000 0.33844  i
\end{filecontents*}

\begin{filecontents*}{plot2.table}
	# Curve 0 of 1, 25 points
	# Curve title: "(2420208+7815255*x-8332842*x**2+2794575*x**4-2829750*x**3-315000*x**5)/(25538800*x**4-5880000*x**5+360000*x**6-6722800*x**3-57201424*x**2+35765296*x+11294304)"
	# x y type
	0.00000 0.21429  i
	0.04167 0.21531  i
	0.08333 0.21709  i
	0.12500 0.21946  i
	0.16667 0.22233  i
	0.20833 0.22565  i
	0.25000 0.22941  i
	0.29167 0.23362  i
	0.33333 0.23829  i
	0.37500 0.24346  i
	0.41667 0.24916  i
	0.45833 0.25546  i
	0.50000 0.26243  i
	0.54167 0.27017  i
	0.58333 0.27879  i
	0.62500 0.28843  i
	0.66667 0.29929  i
	0.70833 0.31161  i
	0.75000 0.32571  i
	0.79167 0.34202  i
	0.83333 0.36114  i
	0.87500 0.38393  i
	0.91667 0.41168  i
	0.95833 0.44651  i
	1.00000 0.49219  i
\end{filecontents*}

\begin{document}
	
\title{Insider trading in discrete time Kyle games}
\author{Christoph K\"uhn$^\star$ \and Christopher Lorenz$^\star$}
\date{%
	$^\star$Institute of Mathematics,
	Goethe University Frankfurt, \\
	D-60054 Frankfurt am Main, Germany \\
	\texttt{\{ckuehn,lorenz\}@math.uni-frankfurt.de}\\[2ex]%
}

\maketitle

\begin{abstract}
We present a new discrete time version of 
\citeauthor{kyle_continuous_1985}'s \citeyearpar{kyle_continuous_1985} classic model of insider trading, formulated as a generalised extensive form game.
	The model has three kinds of traders: an insider, random noise traders, and a market maker. The insider aims to exploit her informational advantage and maximise expected profits while the market maker observes the total order flow and sets prices accordingly.
	
	First, we show how the multi-period model with finitely many pure strategies can be reduced to a (static) social system in the sense of 
	\cite{debreu_social_1952} and prove the existence of a sequential Kyle equilibrium, following \cite{kreps_sequential_1982}. 
This works for any probability distribution with finite support of the noise trader's demand and the true value, and for any finite 
information flow of  the insider.	
In contrast to \cite{kyle_continuous_1985} with normal distributions, 
equilibria exist in general only in mixed strategies and not in pure strategies.

In the single-period model we establish bounds for the insider's strategy in equilibrium.
	Finally, we prove the existence of an equilibrium for the game with a continuum of actions, by considering an approximating sequence of games with finitely many actions. Because of the lack of compactness of the set of measurable price functions, standard infinite-dimensional fixed point theorems are not applicable.
\end{abstract}

\vskip10pt\noindent\textbf{Keywords:} Information asymmetry, Kyle model, extensive form game, sequential equilibrium, Koml\'os' theorem\\
\noindent\textbf{2020 MSC Classification:} 91A18, 91A27, 91A11, 28A33\\
\noindent\textbf{JEL Classification:} C73, D53, D82, G14\\

\extrafootertext{\emph{Acknowledgments.} We would like to thank Umut \c{C}etin, Ralph Neininger, Emanuel Rapsch, Marius Schmidt, and participants at the Workshop on Stochastic Dynamic Games, University of Kiel (2023), for useful comments and two anonymous referees for valuable suggestions that led to an improvement of the presentation.}


\section{Introduction}

In this paper, we study the discrete time version of \citeauthor{kyle_continuous_1985}'s \citeyearpar{kyle_continuous_1985} classic model of a specialist market with asymmetrically informed agents.
We propose to model it as a generalised extensive form game that is explained below.
The Kyle model features three (kinds of) traders: an informed trader, called the \textit{insider}, random noise traders, and a market maker. 

While the insider acts strategically in order to exploit her informational advantage about the true value of an asset, the uninformed noise trader submits orders stochastically independent. The market maker cannot distinguish between informed and uninformed trades, and thus he sets prices depending only on the total order flow, from which he learns conditional distributions of the true value.
In a \emph{Kyle equilibrium}, the insider trades optimally given the price function of the market maker, and given the insider's strategy the market maker's prices are rational (in the case of risk-neutrality, they fulfill a zero-profit condition). 

Kyle's model has been a workhorse model for understanding the role of asymmetric information in the formation of prices. For practitioners it can serve as a conceptual tool to explain the incentives of market makers to provide different prices to different market participants based on their level of informedness. This can be one reason why brokers offer better terms to retail customers, a phenomenon widely discussed in connection with 
\emph{payment for order flow} (we refer to \cite{cetin_order_2022} and for a recent empirical study see \cite{lynch_price_2022}). 

In his seminal paper, \cite{kyle_continuous_1985} proposes a discrete time model assuming normal distributions and a risk-neutral insider and market maker. The continuous time model was first comprehensively studied in \cite{back_insider_1992} under the distributional assumption that noise trades follow a Brownian motion. The Kyle-Back model has been extended in various directions, and we can only cite a selection here. 

A general principle to establish equilibria follows the \emph{inconspicuous trade} ansatz. 
The resulting equilibrium insider strategy is inconspicuous in the sense that the accumulated total demand process has the same law as the accumulated demand of the noise trader alone. 
Furthermore, the true value, which is independent of the noise
trader's demand, has to be a nondecreasing function of the accumulated total
demand at maturity. 
The market maker's price function of the total demand and time is chosen such that {\em any} insider demand process that leads to the terminal accumulated total demand described above is optimal.
In \cite{cetin_point_2013} the existence of an inconspicuous equilibrium is shown if noise trades follow a Poisson process instead of Brownian motion. Further extensions include trading with default risk in \cite{campi_insider_2007} and a random trading horizon in \cite{cetin_financial_2018}. 
The case of dynamic private information where the insider's information evolves over time is considered in  \cite{back_long-lived_1998} and \cite{danilova_stock_2010}.
For a thorough overview we refer to the monograph \cite{cetin_dynamic_2018}.

The inconspicuous trade ansatz does not work for risk-averse market makers. In the continuous time model, \cite{cetin_markovian_2016} prove the existence of a non-inconspicuous equilibrium that is derived from a fixed point (see the discussion below). In \cite{back_optimal_2021} there are several assets, and the distribution of the true value and the accumulated total demand at maturity are coupled through an optimal transport map.

In the single-period model, under normality assumptions, \cite{kyle_continuous_1985} shows that there exists a unique equilibrium in the class of affine-linear equilibria (if the insider's demand in equilibrium is an affine function in the true value, then the price function is automatically affine in the total demand). \cite{mclennan_uniqueness_2017} extend uniqueness to a broader class of equilibria. \cite{cetin_is_2023} study whether Kyle's affine-linear equilibrium is stable for different trading times.
\cite{rochet_insider_1994} study a related single-period model where the insider can observe the trade of the noise trader and show the existence of a unique equilibrium while relaxing the assumption of normality. \cite{kramkov_optimal_2022} further relax distributional assumptions by connecting the problem to the dual problem of a certain class of optimal martingale transport problems.

\smallskip

In the present paper, we specify the discrete time Kyle model with a risk-neutral insider and market maker as a {\em generalised extensive form game}. The approach is purely discrete but has the advantage that the
proof of existence of an equilibrium is {\em not} based on the normal distribution or on any other special probability distribution as in most of the previous literature.
Rather, our approach works for {\em any} probability distribution with finite support of the noise trader's demand and the true value, and for {\em any} finite 
information flow of  the insider.
Such a flexible mathematical framework is particularly useful for realistic models with a market participant possessing superior information that is not perfect and evolves over time. 
It turns out that equilibria exist in general only in mixed strategies of the insider (see Example~\ref{example:no_pure_equilibrium}). This is in contrast to the classic Kyle model with normal distributions, in which equilibria exist in pure strategies.
In addition, we can characterise the set of all equilibria (see Remark~\ref{25.6.2024.3}), which enables a systematic study of
Kyle equilibria.
%
%

The game tree, the so-called extensive form, gets to the heart of the decision theoretical nature of the Kyle model: nodes in the tree signify the realised history of order and information flows, based on which the insider makes a trading decision. 
Even though the market maker ``does not explicitly maximize any particular objective'' (\cite{kyle_continuous_1985}), rational pricing can be considered as an action of the market maker that is restricted by the assumed behaviour strategy of the insider (we refer to Remark~\ref{remark:social_system} for a detailed discussion). 
Our purely discrete approach makes the assumptions of the Kyle model very explicit: The trades of the insider and the noise trader occur truly simultaneously, and the insider is not able to observe the current trade of the noise trader. After the orders are submitted the market maker sets the price. 

In continuous time models the chronological sequence is a bit less explicit. In most models, accumulated trades of the noise trader are given by a Brownian motion or another diffusion, and admissible insider strategies are of finite variation. In this constellation, trading volumes are of different orders, and  
the chronological sequence of trades becomes irrelevant in the continuous time limit.
On the other hand, there are continuous time generalizations of classic game trees, but with serious and in some cases even insoluble conceptual problems caused by the lack of immediate successors of nodes. We refer to \cite{alos-ferrer_theory_2016}, especially Subsection~5.7, for a detailed discussion. 
It is beyond the scope of the paper to relate this to continuous time Kyle models. In continuous time {\em timing games} (in which each player has only one move) \cite{riedel_subgame-perfect_2017} 
show the existence of a subgame-perfect Nash equilibrium. 

Our first main result (Theorem~\ref{theorem:sequential_Kyle_equilibrium}) shows the existence of a \emph{sequential Kyle equilibrium} in the sense of \cite{kreps_sequential_1982}. Since the market maker does not have perfect information,
a sequential equilibrium is based on his beliefs regarding the true value that generalise conditional probabilities derived from Bayes' rule.
The concept of a sequential equilibrium is a refinement of a subgame-perfect Nash equilibrium. Subgame-perfectness rules out non-credible strategic plans that are not realised in equilibrium (cf. Subsection~\ref{section:sequential_equilibrium}).   

The proof of Theorem~\ref{theorem:sequential_Kyle_equilibrium} relies on a fixed point theorem for a self-correspondence acting on the insider's dynamic strategies and the market maker's pricing functions. This is conceptually different to the fixed point theorem in the continuous time model of \cite{cetin_markovian_2016} in which the functional acts on the probability distributions of the accumulated total demand at maturity, and the dynamic quantities are constructed from a fixed point distribution.    

In Section~\ref{section:single_period_structure}, we establish basic properties of equilibria in single-period Kyle games.
The insider's demand is nondecreasing in the true value, but there need not exist a nondecreasing price function of the market maker (Example~\ref{example:S_not_monotone}).
Since the insider uses the noise trades as camouflage to remain unobserved by the market maker, 
the following obvious question arises: Are the insider's demands in equilibrium uniformly bounded for a family of models such that the
(exogenous) noise trader's demands lie in  a fixed bounded interval? The answer turns out to be positive if the probabilities of the insider's demand at the boundaries of the interval are bounded away from zero (see Theorem~\ref{theorem:single_period_structure} that also provides the range of insider's demands explicitly).  

In Section~\ref{section:continuous_state_game}, we prove the existence of an equilibrium in the continuous state game (Theorem~\ref{theorem:continuous_trade}) by considering an approximating sequence of discrete state games.  In view of Example~\ref{example:S_not_monotone}, admissible price functions only need to be measurable but not continuous in the total demand.
This leads to the problem that the set of price functions (equipped with pointwise convergence almost everywhere) is not sequentially compact, and standard infinite-dimensional fixed point theorems of Schauder-Tychonoff's or Kakutani-Fan's type (see, e.g., Theorem~10.1 and Theorem~13.1 in \cite{pata_fixed_2019}, respectively) cannot be applied directly to the continuous game. 

\section{The Kyle model as extensive form game} \label{section:extensive_form_game}

We specify the discrete time, discrete state Kyle model as a generalised extensive form game. 
Extensive form games have a reputation of being notationally burdensome, but possess great interpretive power that is rooted in its tree structure.
In general, we follow the notation of \cite[Chapter~3]{gonzalez-diaz_introductory_2010} in the spirit of \cite{selten_reexamination_1975}. In line with the literature, we label quantities related to the insider by $X$, related to the noise trader by $Z$, and related to the market maker by $Y$.

\subsection{Extensive form}

Trading takes place over the course of $T\in\bN$ trading rounds at times~$t= 1, 2, \ldots, T$. The insider possesses private information about the true value of the asset, not known to the noise trader or market maker. There are $N\in\bN$ fundamental information states~$1,\ldots,N$, and the true value is a mapping 
\begin{equation*}
	v : I \to \bR, 	\quad i \mapsto v^i,\qquad\mbox{where}\ I := \{1, 2, \ldots, N \},
\end{equation*}
such that $v^1 \ge v^2 \ge \ldots \ge v^N$. The flow of information is specified exogenously by a refining sequence of partitions of $I$:
\begin{equation*}
	\cI_t := \{ I_{t, 1}, \ldots, I_{t, N_t} \} \subseteq 2^I,  \qquad t \in \{1, 2, \ldots, T\},\ N_t \in \{1,2,\ldots,N\}.
\end{equation*}
Refining means that for every element~$I_{t, i} \in \cI_t$ with $t\ge 2$, there is $I_{t-1, j} \in \cI_{t-1}$, called a predecessor, such that $I_{t-1, j} \supseteq I_{t, i}$. A subset~$I_{t, i} \subseteq I$ represents the information states in $I$ which are still attainable at time $t$. There can be multiple states $i$ leading to the same true value. This allows to model the change of conditional probabilities of the true value over time in a flexible way. By contrast, for some injective function $i\mapsto v^i$, the conditional probability of a state can only increase or drop to zero over time. For notational convenience we define 
\begin{equation*}
	\cI_{T+1} := \big\{\{1\}, \{2\}, \ldots, \{N\}\big\}.
\end{equation*}
The full revealing of the true value is not a restriction of generality since at the future fictive date~$T+1$, trading has already concluded. Even the extreme case that the insider has no information at all until the end is not excluded: the sequence of partitions would be given by $\mathcal{I}_1 = \mathcal{I}_2 = \ldots = \mathcal{I}_T =\{I\}$. 

Each round of trading can be broken down into two steps. Firstly, new information about the true value of the asset is revealed to the insider. Secondly, both insider and noise trader \emph{simultaneously} trade a discrete quantity of shares from the set
\begin{equation*}
	E_X := \left\{x^1, \ldots, x^K\right\}\subseteq\bR \quad\mbox{and}\quad E_Z := \left\{z^1, \ldots, z^L\right\}\subseteq\bR,\quad \mbox{resp., where}\ K,L\in\bN.
\end{equation*}
The set~$E_X$ could be, e.g., multiples of the minimal (fractional) order size.
The market maker only observes the sum and sets a price. 
Since the noise trader does not trade strategically, her action can be placed after that of the insider in the tree in Definition~\ref{definition:game_tree} below (see also Figure \ref{figure:game_tree}).
This sequence of information disclosures, insider and noise trades spans the game tree, formally given by: 

\begin{definition}[Game tree] \label{definition:game_tree}
	The \emph{game tree}~$(\bT, E)$ is given by the set of nodes
	\begin{equation*}
		\begin{aligned}
			\bT &:= \{ r, I_{1, i_1}, (I_{1, i_1}, x_1), (I_{1, i_1}, x_1, z_1), (I_{1, i_1}, x_1, z_1, I_{2, i_2}), (I_{1,i_1}, x_1, z_1, \ldots, I_{t,i_t}, x_{t}),\\
			&\qquad (I_{1,i_1}, x_1, z_1, \ldots, I_{t,i_t}, x_{t}, z_{t}), (I_{1,i_1}, x_1, z_1, \ldots, I_{t,i_t}, x_{t}, z_{t}, I_{t+1, i_{t+1}}): \\
			&\qquad t \in \{2, \ldots, T\},\; (x_1, \ldots, x_T) \in (E_X)^T, \; (z_1, \ldots, z_T) \in (E_Z)^T, \\
			&\qquad (I_{1,i_1}, \ldots, I_{T+1,i}) \in \cI_1 \times \cdots \times \cI_{T+1}\; \text{ such that } I_{1,i_1} \supseteq \ldots \supseteq I_{T+1,i} \},
		\end{aligned}
	\end{equation*}
	where $r$ denotes the root node, and by the set of canonical edges~$E \subseteq \bT \times \bT$, meaning that, for example,
	$(r,I_{1, i_1})\in E$ and $(I_{1, i_1}, (I_{1, i_1}, x_1)) \in E$.
\end{definition}%
\noindent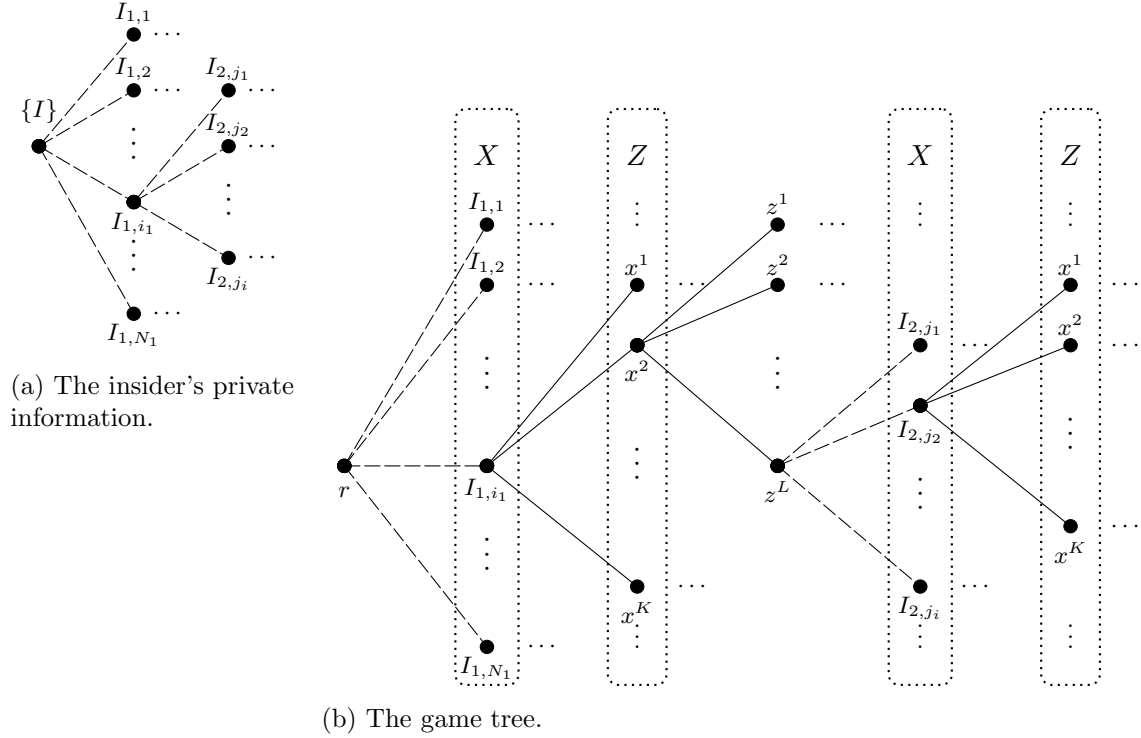
\begin{figure} 
	\tikzset{
		dot/.style={draw, solid, fill, circle, inner sep=0.0cm, outer sep=0cm, minimum size=5pt}
	}
	\tikzstyle{every node}=[font=\footnotesize]
	\tikzstyle{densely dashed}=[dash pattern=on 6pt off 1.5pt]
	\begin{subfigure}[c][][c]{0.25\textwidth}
		\begin{tikzpicture}[>=stealth]
			\matrix (m) [
			matrix of nodes,
			column sep=0.9cm,
			row sep=0.45cm
			]
			{
				& .		& 		\\ 
				& .   	& .		\\ 
				.	&   	& .		\\ 
				& .  	& 		\\ 
				&   	& .		\\ 
				& . 	& 		\\ 
			};
			\draw[densely dashed] (m-3-1.center) node [dot]{} node [above=5pt]{$\{I\}$} -- (m-1-2.center) node [dot]{} node [above=0pt] {$I_{1,1}$} node [right=-6pt] {$\quad\cdots$};		
			\draw[densely dashed] (m-3-1.center) node [dot]{} -- (m-2-2.center) node [dot]{} node [above=0pt] {$I_{1,2}$} node [right=-6pt] {$\quad\cdots$};
			\draw[densely dashed] (m-3-1.center) node [dot]{} -- (m-4-2.center) node [dot]{} node [below=0pt] {$I_{1,i_1}$};
			\draw[densely dashed] (m-3-1.center) node [dot]{} -- (m-6-2.center) node [dot]{} node [below=0pt] {$I_{1,N_1}$} node [right=-6pt] {$\quad\cdots$};
			\path (m-2-2) -- (m-4-2) node [font=\large, midway, sloped] {$\dots$};
			\path (m-4-2) -- (m-6-2) node [font=\large, midway, sloped] {$\dots$};
			\draw[densely dashed] (m-4-2.center) node [dot]{} -- (m-2-3.center) node [dot]{} node [midway,above,pos=.60] {} node [above=0pt] {$I_{2,j_1}$} node [right=-6pt] {$\quad\cdots$};
			\draw[densely dashed] (m-4-2.center) node [dot]{} -- (m-3-3.center) node [dot]{} node [midway,below,pos=.60] {} node [above=0pt] {$I_{2,j_2}$} node [right=-6pt] {$\quad\cdots$};
			\draw[densely dashed] (m-4-2.center) node [dot]{} -- (m-5-3.center) node [dot]{} node [midway,below,pos=.60] {} node [below=0pt] {$I_{2,j_i}$} node [right=-6pt] {$\quad\cdots$};
			\path (m-3-3) -- (m-5-3) node [font=\large, midway, sloped] {$\dots$};
		\end{tikzpicture}
		\caption{The insider's private information.}
		\vspace{-3cm}							
	\end{subfigure}	
	\hspace{0\textwidth} 
	\begin{subfigure}[t][][c]{.7\textwidth} 
		\begin{tikzpicture}[>=stealth]
			\matrix (m) [%
			matrix of nodes,
			column sep=1.4cm, 
			row sep={.80cm,between origins} 
			]
			{
				& {\normalsize $X$}	& {\normalsize $Z$}	& 		& {\normalsize $X$}	& {\normalsize $Z$}    \\ 
				& .			& \vdots	& .			& \vdots	& \vdots	\\ 
				& .   		& .			& .  		& 			& .			\\ 
				&   		& .			&       	& .			& .			\\ 
				&   		& 			&   		& .			&			\\ 
				. 	& .  		& 			& .	 		& 			&			\\ 
				&   		& 			&   		& 			& .			\\ 
				&   		& .			&   		& .			& 			\\ 
				& . 		& \vdots	&   		& \vdots	&	\vdots	\\ 
			};
			\draw[densely dashed] (m-6-1.center) node [dot]{} node [below=4pt]{$r$} -- (m-2-2.center) node [dot]{} node [above=0pt] {$I_{1,1}$} node [right=2pt] {$\quad\cdots$};
			\draw[densely dashed] (m-6-1.center) node [dot]{} -- (m-3-2.center) node [dot]{} node [above=0pt] {$I_{1,2}$} node [right=2pt] {$\quad\cdots$};
			\draw[densely dashed] (m-6-1.center) node [dot]{} -- (m-6-2.center) node [dot]{} node [below=1pt] {$I_{1,i_1}$};
			\draw[densely dashed] (m-6-1.center) node [dot]{} -- (m-9-2.center) node [dot]{} node [below=0pt] {$I_{1,N_1}$} node [right=2pt] {$\quad\cdots$};
			\path (m-3-2) -- (m-6-2) node [font=\large, midway, sloped] {$\dots$};
			\path (m-6-2) -- (m-9-2) node [font=\large, midway, sloped] {$\dots$};
			\draw[-] (m-6-2.center) node [dot]{} -- (m-3-3.center) node [dot]{} node [above=0pt] {$x^1$} node [right=2pt] {$\quad\cdots$}; 
			\draw[-] (m-6-2.center) node [dot]{} -- (m-4-3.center) node [dot]{} node [
			below=2pt] {$x^2$};
			\draw[-] (m-6-2.center) node [dot]{} -- (m-8-3.center) node [dot]{} node [below=2pt] {$x^K$} node [right=2pt] {$\quad\cdots$}; 
			\path (m-4-3) -- (m-8-3) node [font=\large, midway, sloped] {$\dots$};
			\draw[-] (m-4-3.center) node [dot]{} -- (m-2-4.center) node [dot]{} node [above=0pt] {$z^1$} node [right=2pt] {$\quad\cdots$};
			\draw[-] (m-4-3.center) node [dot]{} -- (m-3-4.center) node [dot]{} node [above=0pt] {$z^2$} node [right=2pt] {$\quad\cdots$}; 
			\draw[-] (m-4-3.center) node [dot]{} -- (m-6-4.center) node [dot]{} node [below=2pt] {$z^L$}; 
			\path (m-3-4) -- (m-6-4) node [font=\large, midway, sloped] {$\dots$};
			\draw[densely dashed] (m-6-4.center) node [dot]{} -- (m-4-5.center) node [dot]{} node [midway,above,pos=.60] {} node [above=0pt] {$I_{2,j_1}$} node [right=2pt] {$\quad\cdots$};
			\draw[densely dashed] (m-6-4.center) node [dot]{} -- (m-5-5.center) node [dot]{} node [midway,below,pos=.60] {} node [below=2pt] {$I_{2,j_2}$};
			\draw[densely dashed] (m-6-4.center) node [dot]{} -- (m-8-5.center) node [dot]{} node [midway,below,pos=.60] {} node [below=0pt] {$I_{2,j_i}$} node [right=2pt] {$\quad\cdots$};
			\path (m-5-5) -- (m-8-5) node [font=\large, midway, sloped] {$\dots$};
			\draw[-] (m-5-5.center) node [dot]{} -- (m-3-6.center) node [dot]{} node [above=0pt] {$x^1$} node [right=2pt] {$\quad\cdots$};
			\draw[-] (m-5-5.center) node [dot]{} -- (m-4-6.center) node [dot]{} node [above=0pt] {$x^2$} node [right=2pt] {$\quad\cdots$};
			\draw[-] (m-5-5.center) node [dot]{} -- (m-7-6.center) node [dot]{} node [below=2pt] {$x^K$} node [right=2pt] {$\quad\cdots$};
			\path (m-4-6) -- (m-7-6) node [font=\large, midway, sloped] {$\dots$};
			\node () [rectangle, thick, rounded corners, draw, dotted, fit=(m-1-2) (m-9-2), inner xsep=3pt, inner ysep=10pt] {}; 
			\node () [rectangle, thick, rounded corners, draw, dotted, fit=(m-1-3) (m-9-3), inner xsep=3pt, inner ysep=10pt] {};
			\node () [rectangle, thick, rounded corners, draw, dotted, fit=(m-1-5) (m-9-5), inner xsep=3pt, inner ysep=10pt] {};
			\node () [rectangle, thick, rounded corners, draw, dotted, fit=(m-1-6) (m-9-6), inner xsep=3pt, inner ysep=10pt] {};
		\end{tikzpicture}
		\caption{The game tree.}
	\end{subfigure}
	\caption[Extensive for diagram]{Diagram of the game tree from Definition~\ref{definition:game_tree}. The dotted boxes denote the set of nodes where the insider (marked $X$) and the noise trader (marked $Z$) take a decision.}
	\label{figure:game_tree} 
\end{figure}%
The set of nodes where new information is revealed, and where the insider and the noise trader make a move are given by
\begin{align*}
	&\bT_V := \{ \tau \in \bT :\; \tau = r \text{ or } \tau = (I_{1, i_1}, x_1, z_1, \ldots, z_t) \text{ for } t \in \{1, \ldots, T \}\}, \\
	&\bT_X := \{ \tau \in \bT :\; \tau = I_{1, i_1} \text{ or } \tau = (I_{1, i_1}, x_1, z_1, \ldots, I_{t, i_t}) \text{ for } t \in \{2, \ldots, T \}\},\ \text{and} \\
	&\bT_Z := \{ \tau \in \bT :\; \tau = (I_{1, i_1}, x_1) \text{ or } \tau = (I_{1, i_1}, x_1, z_1, \ldots, I_{t, i_t}, x_t) \text{ for } t \in \{2, \ldots, T \}\},
\end{align*}
respectively. Next, we furnish these sets of nodes with transition probabilities, starting with the \emph{nature player}. For a finite set~$A$, the simplex~$\Delta A := \{ p \in [0,1]^A :\ \sum_{a \in A} p(a) = 1 \}$ is identified with the set of probability distributions over the points in $A$ by $p(\{a\}):=p(a)$.
\begin{definition}[Probability assignments] \label{definition:probability_assignment}
	We fix $\nu \in \Delta I$ with $\nu > 0$ and $\zeta \in \Delta E_Z$ with $\zeta > 0$.
	The \emph{probability assignment} $p_V$ assigns to every $\tau \in \bT_V$ a probability distribution over the direct successors of $\tau$ with probabilities $p_V (\tau, I_{1, i_1}) := \sum_{i \in I_{1, i_1}} \nu (\{i\})$ for $\tau = r$, and
	\begin{equation*}
		\begin{aligned}
			p_V (\tau, I_{t+1, i_{t+1}}) := \frac{\sum_{i \in I_{t+1, i_{t+1}}} \nu (\{i\})}{\sum_{j \in I_{t, i_t}} \nu (\{j\})}
		\end{aligned} \qquad
		\begin{aligned}			
			& \text{for } t \in \{1, \ldots, T\},\ I_{t+1, i_{t+1}} \subseteq I_{t, i_{t}},\\
			& \tau = (I_{1, i_1}, x_1, z_1, \ldots, I_{t, i_t}, x_t, z_t).
		\end{aligned}
	\end{equation*}
	The \emph{probability assignment} $p_Z$ is given by
	\begin{equation} \label{equation:probability_assignment_zeta}
		p_Z : \bT_Z \to \Delta E_Z, \qquad p_Z(\tau, \,\cdot\,) := \zeta.
	\end{equation}
\end{definition}
In (\ref{equation:probability_assignment_zeta}), with slight abuse of notation, we identify each direct successor of $\tau \in \bT_Z$ with an order~$z \in E_Z$.
The insider, on the other hand, observes besides the fundamental information about the true value also the noise trader's {\em past} trades before making a trading decision (the latter is a standard assumption in the Kyle model motivated by the observability of the past prices that depend on the total demand):
\begin{definition}[Behaviour strategy] \label{definition:behaviour_strategy}
	A \emph{behaviour strategy}~$\xi$ is a mapping from the insider's nodes~$\bT_X$ to the set of probability distributions over trades~$E_X$, 
	\begin{equation*}
		\xi : \bT_X \to \Delta E_X, \qquad
		\tau \mapsto \xi(\tau, \;\cdot\;).
	\end{equation*}
	With $\Xi$ we denote the set of behaviour strategies.
\end{definition}

For a behaviour strategy~$\xi \in \Xi$ and a node~$\tau \in \bT_X$, one interprets $\xi(\tau, \{x\})$ with $x>0$ ($x<0$) as the probability of buying (selling) $\abs{x}$ shares at node $\tau$. We define the set of terminal nodes by
\begin{equation*}
	\Omega := \left\{\tau \in \bT :\ \tau = (I_{1, i_1}, x_1, z_1, \ldots,\allowbreak I_{T+1, i}) \right\}.
\end{equation*}
The realisation probability~$p$ is a mapping from $\Xi$ to $\Delta \Omega$, assigning to every strategy a probability distribution over outcomes according to
\begin{equation} \label{equation:realisation_probabilities}
	\begin{split}
		p^\xi(\omega) := 
		\nu(\{i\}) \prod_{s=1}^{T} \xi((I_{1, i_1}, x_1, z_1, \ldots , I_{s, i_s}), \{x_s\}) \; \zeta(\{z_s\}) \qquad\\\qquad
		\text{for } \omega = (I_{1, i_1}, x_1, z_1, \ldots , I_{T+1, i_{}}). 
	\end{split}
\end{equation}

\subsection{Kyle equilibrium}

It is a defining feature of Kyle-type models that the market maker can not distinguish the orders of the insider and the noise trader. In each trading round, he only observes the total order flow~$y \in E_Y := \{ x + z :\; x \in E_X, z \in E_Z \}$. 
Following \cite[Equation (3.3)]{kyle_continuous_1985}, a price~$S_t$ at time~$t=1, \ldots, T$ is a function of the total order flow~$(x_1+z_1, \ldots, x_t+z_t)$.

\begin{definition}[Pricing system] \label{definition:pricing_system}
	A \emph{pricing system} $S := (S_t)_{t \in \{1,\ldots,T\}}$ of the market maker is a family of functions
	\begin{equation*} 
		S_t : (E_Y)^t  \to [v^N, v^1], \qquad
		(y_1, \ldots, y_t)  \mapsto S_t(y_1, \ldots, y_t).
	\end{equation*}
	With $\mathcal{S}$ we denote the set of pricing systems.
\end{definition}

Given a pricing system~$S \in \cS$, the payoff of the insider is defined by 
\begin{equation*}
	U(\omega, S) := \sum_{t=1}^T \left[v^i - S_t(x_1+z_1, \ldots, x_t+z_t)\right] \, x_t \quad \text{for } \omega = (I_{1, i_1}, x_1, z_1, \ldots, I_{T+1, i}),
\end{equation*}
and her objective is to maximise the expected utility
\begin{equation} \label{equation:expected_utility_discrete}
	u(\xi, S) := \sum_{\omega \in \Omega} p^\xi(\omega)\; U(\omega, S)  \longrightarrow \max_{\xi \in \Xi}!
\end{equation}
Assuming that the insider plays $\xi \in \Xi$, the joint probability of fundamental information~$i$ and order flow~$(y_1, \ldots, y_t)$ is given by
\begin{equation} \label{equation:probabilities_v_y}
	p^\xi_Y(i, y_1, \ldots, y_t):=\quad \sum_{\mathclap{\substack{(I_{1, i_1}, x_1, z_1, \ldots, I_{T+1, i}) \in \Omega\\\text{s.t. } x_s + z_s = y_s\text{ for all }s=1,\ldots,t}}} \quad p^\xi(I_{1, i_1}, x_1, z_1, \ldots, I_{t, i_t}, x_t, z_t,\ldots,  I_{T+1, i}),
\end{equation}
where $i$ uniquely determines $(I_{1, i_1}, I_{2, i_2}, \ldots, I_{T+1, i})$. 
Using Bayes' rule, the market maker can infer the conditional probability of $i$ given $(y_1, \ldots, y_t)$ assuming the insider plays $\xi$.
If $p^\xi_Y( y_1, \ldots, y_t) := \sum_{i=1}^{N} p^\xi_Y(i, y_1, \ldots, y_t) > 0$, it reads 
\begin{equation} \label{equation:probabilities_conditional}
	p^\xi_Y(i \mid  y_1, \ldots, y_t) := \frac{p^\xi_Y(i, y_1, \ldots, y_t)}{p^\xi_Y( y_1, \ldots, y_t)}.
\end{equation}
\begin{definition}[Rational pricing] \label{definition:rational_pricing_discrete}
	A pricing system $S \in \cS$ for the market maker is \emph{rational} assuming the insider plays $\xi \in \Xi$ if
	\begin{equation} \label{equation:rational_pricing_discrete}
		S_t(y_1, \ldots, y_t) = \sum_{i=1}^N p^\xi_Y(i \mid y_1, \ldots, y_t) \; v^i,
	\end{equation}
	for all $(y_1, \ldots, y_t) \in (E_Y)^t$, $t=1, \ldots, T$, with $p_Y^\xi( y_1, \ldots, y_t) > 0$, i.e, the price equals the expectation of the true value of the asset conditional on the market maker's information. 
\end{definition}

In equilibrium, the assumption of the market maker coincides with the strategy played by the insider. In addition, the insider's strategy is optimal given the prices quoted by the market maker.

\begin{definition}[Kyle equilibrium]\label{25.6.2024.2}
	A \emph{Kyle equilibrium} is a pair~$(\xi^\star, S^\star) \in \Xi \times \cS$ satisfying
	\begin{enumerate}[(i)]		
		\item \textit{Profit maximisation:} Given $S^{*}$, the strategy~$\xi^\star$ is optimal according to (\ref{equation:expected_utility_discrete}),
		\item \textit{Rational pricing:} Given $\xi^\star$, the pricing system $S^\star$ is rational according to (\ref{equation:rational_pricing_discrete}).
	\end{enumerate} 
\end{definition}
The existence of an equilibrium can be shown without further probabilistic restrictions on the true value, the insider's dynamic information, and the noise trader's actions. Example~\ref{example:no_pure_equilibrium} at the end of this section shows that equilibria need not exist in pure strategies of the insider. 
\begin{theorem} \label{theorem:discrete_trade}
	Every Kyle game has a Kyle equilibrium.
\end{theorem}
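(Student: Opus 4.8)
The plan is to realise the equilibrium as a fixed point of a joint best-response correspondence via Kakutani's theorem, after two reformulations that make the correspondence well behaved. The first deals with the insider. Because $\Xi$ is a finite product of copies of $\Delta E_X$ and $\xi\mapsto p^\xi$ is multilinear by (\ref{equation:realisation_probabilities}), the payoff $u(\cdot,S)$ is not affine on $\Xi$ and $\argmax_{\xi\in\Xi}u(\xi,S)$ is in general not convex, so a naive Kakutani argument on $\Xi\times\cS$ fails. I would therefore enumerate the finitely many pure strategies $\pi_1,\dots,\pi_M$ of the insider (maps $\bT_X\to E_X$), set $\Sigma:=\Delta\{1,\dots,M\}$ and $p^\sigma:=\sum_{j}\sigma_j p^{\pi_j}\in\Delta\Omega$ for $\sigma\in\Sigma$, and work with $u(\sigma,S):=\sum_{\omega}p^\sigma(\omega)U(\omega,S)$, which is affine in $\sigma$; hence $\argmax_{\sigma}u(\sigma,S)$ is nonempty, compact and convex. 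Since the insider has perfect recall — each of her nodes records her full past information together with her own and the noise trader's past trades — Kuhn's theorem gives $\{p^\sigma:\sigma\in\Sigma\}=\{p^\xi:\xi\in\Xi\}$, which will let me return to a behaviour strategy at the end.

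The second reformulation deals with the market maker: rational pricing is not \emph{a priori} an optimisation, and it is not determined at all on order flows of zero probability. I would make it a best response by letting the market maker minimise the expected squared pricing error
\begin{equation*}
	L(\sigma,S):=\sum_{t=1}^{T}\ \sum_{i=1}^{N}\ \sum_{(y_1,\dots,y_t)\in(E_Y)^t} p^\sigma_Y(i,y_1,\dots,y_t)\,\bigl(v^i-S_t(y_1,\dots,y_t)\bigr)^2 .
\end{equation*}
For fixed $\sigma$ this is a separable convex quadratic in the finitely many coordinates $S_t(y_1,\dots,y_t)$ of $S\in\cS$: on an order flow with $p^\sigma_Y(y_1,\dots,y_t)>0$ the unique minimiser is $\sum_i p^\sigma_Y(i\mid y_1,\dots,y_t)v^i$, which is a convex combination of the $v^i$ and hence lies in $[v^N,v^1]$ so that the box constraint is inactive, and on a null order flow every point of $[v^N,v^1]$ is optimal. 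Thus $\argmin_{S\in\cS}L(\sigma,S)$ is exactly the nonempty, compact, convex set of pricing systems that are rational given $\sigma$ in the sense of (\ref{equation:rational_pricing_discrete}).

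I would then set $\Psi:\Sigma\times\cS\rightrightarrows\Sigma\times\cS$, $\Psi(\sigma,S):=\bigl(\argmax_{\sigma'\in\Sigma}u(\sigma',S)\bigr)\times\bigl(\argmin_{S'\in\cS}L(\sigma,S')\bigr)$. Here $\Sigma\times\cS$ is a compact convex subset of a Euclidean space — this uses that $E_X$, $E_Z$, $I$ and $T$ are finite, so $\cS$ is finite dimensional — and $u$ and $L$ are polynomials in the real coordinates of $(\sigma,S)$, hence jointly continuous; by Berge's maximum theorem each factor of $\Psi$ is upper hemicontinuous with nonempty compact values, and these values are convex by the two computations above, so Kakutani's theorem yields $(\sigma^\star,S^\star)$ with $\sigma^\star\in\argmax_\sigma u(\sigma,S^\star)$ and $S^\star\in\argmin_S L(\sigma^\star,S)$. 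Finally I would choose, by Kuhn's theorem, $\xi^\star\in\Xi$ with $p^{\xi^\star}=p^{\sigma^\star}$; since $u$ and all the marginals $p_Y$ depend on a strategy only through its outcome distribution, and $\{p^\xi:\xi\in\Xi\}=\{p^\sigma:\sigma\in\Sigma\}$, the pair $(\xi^\star,S^\star)$ satisfies profit maximisation and rational pricing, i.e.\ it is a Kyle equilibrium.

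The delicate points are exactly these two reformulations. The non-convexity of $\argmax_{\xi\in\Xi}u(\xi,S)$ is the genuine obstacle: it is unavoidable to pass through mixed strategies (equivalently, to treat each of the insider's nodes as a separate agent) and to invoke Kuhn's theorem, which is also why the paper reduces the game to a Debreu social system. The second delicate point is the upper hemicontinuity of the market maker's reaction across the boundary between order flows of positive and of zero probability, where the minimiser set jumps from a singleton to all of $[v^N,v^1]$: upper hemicontinuity (closed graph) survives and is all Kakutani needs, but lower hemicontinuity fails, and this failure is precisely what makes off-support prices indeterminate and motivates the sequential refinement. (Once a sequential Kyle equilibrium is available, Theorem~\ref{theorem:sequential_Kyle_equilibrium} aside, Theorem~\ref{theorem:discrete_trade} is immediate, since sequential rationality forces global optimality of $\xi^\star$ and consistency of beliefs forces rational pricing on the support.)
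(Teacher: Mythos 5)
Your argument is correct, but it takes a genuinely different route from the paper. The paper proves Theorem~\ref{theorem:discrete_trade} as an immediate corollary of the stronger Theorem~\ref{theorem:sequential_Kyle_equilibrium}: it perturbs the game so that every insider action gets probability at least $\epsilon$, applies Kakutani to a node-wise (agent-form) correspondence acting on behaviour strategies and \emph{beliefs}, shows via backward induction that the fixed points are subgame optimal, and then lets $\epsilon\downarrow 0$ \`a la Selten/Kreps--Wilson. Your proof instead establishes existence of a plain Kyle equilibrium (Definition~\ref{25.6.2024.2}) by a single Kakutani argument on mixed strategies times pricing systems; this is essentially an implementation of the fixed-point characterisation the paper only sketches in Remark~\ref{25.6.2024.3} via the correspondence $F^m$ in (\ref{27.6.2024.1}), which the paper uses to characterise, not to construct, equilibria. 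Your two reformulations are sound: passing to the mixture simplex $\Sigma$ restores linearity of the payoff and hence convexity of the insider's best-reply set (the behaviour-strategy argmax is indeed non-convex in general), and Kuhn's theorem applies because the insider has perfect recall, exactly as the paper notes in Remark~\ref{25.6.2024.3}; the quadratic-loss device for the market maker is a clean (though not strictly necessary) way to see that the rational-pricing correspondence is nonempty-, compact-, convex-valued and upper hemicontinuous, with the value set jumping to all of $[v^N,v^1]$ on null order flows, which is compatible with Kakutani. What your route buys is a short, self-contained proof of Theorem~\ref{theorem:discrete_trade}; what it does not deliver is the refinement the paper is after, namely subgame optimality off the equilibrium path and consistency of the market maker's beliefs on zero-probability order flows -- precisely the indeterminacy you point out -- and that is what the paper's detour through $\epsilon$-perturbed games and systems of beliefs is designed to handle.
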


\begin{proof}
	Follows from Theorem~\ref{theorem:sequential_Kyle_equilibrium} below.
\end{proof}

\begin{remark}[Kyle model as a dynamic social system in the sense of \citet{debreu_social_1952}] \label{remark:social_system}
	We can now place the interaction of the insider and the market maker in a game theoretic context, even though the latter
	does not maximise any particular objective. Rational pricing in the sense of Definition~\ref{definition:rational_pricing_discrete}
	can be seen as a constraint on market maker's allowed actions (at each observed order flow~ $(y_1,\ldots,y_t)$) that depends on the behaviour strategy he {\em assumes} the insider chooses. If $p_Y^\xi( y_1, \ldots, y_t) > 0$, the market maker has only one choice at $(y_1,\ldots,y_t)$. In a (standard) extensive form game, an action of a player can only depend on past actions of other players but not on their behaviour strategies, i.e., not on the probabilities of all possible actions. 
	
	\cite{debreu_social_1952} extends strategic games by allowing that the choice of an action by a player can be restricted by the 
	(simultaneous) actions of the other players. 
	This idea is applied to dynamic games by \cite{butler_essays_2017}, having constraints that depend on behaviour strategies of other players. 
	The games are referred to as ``generalised''.
	We refer to \cite[Chapter~2]{butler_essays_2017} for a detailed discussion about conceptual issues, but note that the Kyle game does not fit into this framework, since there are infinitely many possible asset prices the market maker can set.
\end{remark}

\subsection{Sequential Kyle equilibrium}\label{section:sequential_equilibrium}

In extensive form games, Nash equilibria can be based on irrational plans of some players that need not be realised since 
the plans refer to nodes that are not reached in the equilibrium that they produce (``non-credible threat'').
To rule out such equilibria, Selten introduced the stronger criterion of a {\em subgame perfect equilibrium} (we refer to \cite[Section 5]{selten_reexamination_1975}): if the game can be restarted in a node, the equilibrium strategies have to be optimal also for the induced subgame. 
There are several refinements of this concept for games with imperfect information.
The most popular one is that of a {\em sequential equilibrium} introduced by \cite{kreps_sequential_1982} (for a detailed discussion we refer to \cite[Section 3.5]{gonzalez-diaz_introductory_2010}). 

In the following, we adapt this concept to Kyle games. While the insider has perfect information the market maker has not. Up to time~$t$, he only observes the total order flow~$(y_1,\ldots,y_t)$. The union of nodes at time~$t$ such that $x_1+z_1=y_1$, \ldots, $x_t+z_t=y_t$ is called an {\em information set} of the market maker. In information sets~$(y_1,\ldots,y_t)$ with $p^\xi_Y(y_1,\ldots,y_t)=0$ the conditional probability~(\ref{equation:probabilities_conditional}) is not defined. Consequently,  following \cite[Section~4]{kreps_sequential_1982}, the market maker has to form more subtle \emph{beliefs} regarding the fundamental information states.

\begin{definition}[System of beliefs] \label{definition:beliefs}
	A \emph{system of beliefs}~$\mu = (\mu_t)_{t=1,\ldots,T}$ of the market maker is a collection of probability distributions over $I$, indexed by the total order flow:
	\begin{equation*}
		\mu_t : (E_Y)^t  \to \Delta I, \qquad
		(y_1, \ldots, y_t) \mapsto \mu_t( \;\cdot\, \mid y_1, \ldots, y_t).  
	\end{equation*}
	With $\cM$ we denote the set of all systems of beliefs.
\end{definition}

\begin{definition}[Pricing with beliefs] \label{definition:rational_pricing_beliefs} 
	A system of beliefs $\mu \in \cM$ induces a pricing system~$S^\mu \in \mathcal{S}$ by
	\begin{equation*}
		S^\mu_t (y_1, \ldots, y_t) := \sum_{i=1}^N \mu_t(i \mid y_1, \ldots, y_t) \; v^i\quad\mbox{for all}\ (y_1, \ldots, y_t) \in (E_Y)^t,\ t=1, \ldots, T.
	\end{equation*}
\end{definition}

\begin{definition}[Rational beliefs] \label{definition:rational_beliefs} 
	A system of beliefs $\mu \in \cM$ is rational assuming $\xi\in \Xi$ if
	\begin{equation} \label{equation:rational_pricing_beliefs}
		\mu_t(i \mid y_1, \ldots, y_t) = p^\xi_Y(i \mid y_1, \ldots, y_t) \qquad \text{for all } i \in I
	\end{equation}
	and for all $(y_1, \ldots, y_t) \in (E_Y)^t$, $t=1, \ldots, T$, with $p_Y^\xi( y_1, \ldots, y_t) > 0$.
\end{definition}

For information sets~$(y_1,\ldots,y_t)$ with $p_Y^\xi( y_1, \ldots, y_t) > 0$ the beliefs~$\mu_t(i \mid y_1, \ldots, y_t)$ are uniquely determined by $\xi$ according to (\ref{equation:rational_pricing_beliefs}). In information sets that cannot be realised, \cite[Section~5]{kreps_sequential_1982} argue that rational beliefs should be the limit of conditional probabilities that result from an approximating sequence of strategies
where every node is reached with positive probability. For the Kyle game, this leads to the following definitions.

\begin{definition}[Completely mixed behaviour strategies] \label{definition:completely_mixed}
	The set~$\Xi^0$ of completely mixed behaviour strategies is given by
	\begin{equation*}
		\Xi^0 := \{ \xi \in \Xi :\; \xi(\tau,\{x\}) > 0 \text{ for all } \tau \in \bT_X, x \in E_X \}.
	\end{equation*}
\end{definition}

\begin{definition}[Consistent beliefs] \label{definition:consistent}
	A system of beliefs~$\mu \in \cM$ is \emph{consistent} with $\xi \in \Xi$ if there is an approximating sequence~$(\xi^n, \mu^n)_{n \in\bN} \subseteq \Xi \times \cM$ with $(\xi^n, \mu^n) \to (\xi, \mu)$ as $n\to\infty$ such that 
	\begin{enumerate}[(i)]
		\item $\xi^n$ is completely mixed for all $n\in\bN$ (see Definition \ref{definition:completely_mixed}),
		\item $\mu^n$ is rational assuming $\xi^n$ for all $n\in\bN$ (see Definition \ref{definition:rational_beliefs}). 
	\end{enumerate}
	(Elements of $\Xi\times\cM$ can be identified with elements of $\bR^d$ for some $d$, and convergence is understood accordingly.)	
\end{definition}
The insider is the only player who appears in the game tree. Since she has perfect information, she need not have beliefs, and at any node a subgame can be started. For notational convenience, we only start subgames at $\tau\in\bT_V$ and do this in the following way: 

For $\tau=r$, the subgame is the game itself. For $\tau = (I_{1, i_1}, x_1, z_1, \ldots, I_{t, i_t}, x_t, z_t) \in \bT_V$, $t\in\{1,\ldots,T\}$, we restrict the game tree to the subtree with root node~$\tau$ (which leads to a restriction of terminal nodes to those that come after $\tau$), and realisation probabilities~$p_\tau^\xi$ conditional on starting in $\tau$, analogously to (\ref{equation:realisation_probabilities}), are given by
\begin{equation} \label{equation:realisation_probabilities_subgame}
	\begin{split}
		p_\tau^\xi(\omega) := \frac{\nu(\{i\})}{\sum_{j \in I_{t, i_t}} \nu(\{j\})} \prod_{s={t+1}}^T \xi((I_{1, i_1}, x_1, z_1, \ldots , I_{s, i_s}), \{x_s\}) \; \zeta(\{z_s\}) \qquad\\\qquad
		\text{for } \omega = (I_{1, i_1}, x_1, z_1, \ldots, I_{t, i_t}, x_t, z_t, I_{t+1, i_{t+1}}, x_{t+1}, z_{t+1},\ldots,  I_{T+1, i}).
	\end{split}
\end{equation}
The expected utility for the subgame starting in $\tau \in \bT_V$ becomes, analogously to (\ref{equation:expected_utility_discrete}),
\begin{equation*}
	u_\tau(\xi, S) := \sum_{\substack{\omega \in \Omega\\\omega \text{ after } \tau}} p_\tau^\xi(\omega)\; U(\omega, S).
\end{equation*}

\begin{definition}[Subgame optimality] \label{definition:subgame_optimal}
	A behaviour strategy~$\xi^\star \in \Xi$ is \emph{subgame optimal} given $S \in \cS$ if at every node $\tau \in \bT_V$,
	\begin{equation} \label{equation:subgame_optimal}
		\xi^\star \in \argmax_{\xi \in \Xi} u_\tau(\xi, S).
	\end{equation}
\end{definition}

Subgame optimality corresponds to \emph{sequential rationality} in \cite[Section~4]{kreps_sequential_1982} (not to be confused with rational pricing). Because the insider has perfect information, subgame optimality is equivalent to the notion of \emph{subgame perfection} in 
\cite[Section~5]{selten_reexamination_1975}.

\begin{definition}[Sequential Kyle equilibrium]
	A \emph{sequential Kyle equilibrium} is a pair~$(\xi^\star, \mu^\star) \in \Xi \times \cM$ satisfying:
	\begin{enumerate}[(i)]
		\item Profit maximisation: Given $S^{\mu^\star}$, the strategy~$\xi^\star$ is subgame optimal (see Definition \ref{definition:subgame_optimal}),
		\item Rational pricing: The system of beliefs~$\mu^\star$ is consistent with $\xi^\star$ (see Definition \ref{definition:consistent}).	
	\end{enumerate} 
\end{definition}

Whereas \cite[Equations (3.5), (3.7)]{kyle_continuous_1985} already envisions a form of sequential equilibrium, the novel aspect is the 
formal relation to \cite{kreps_sequential_1982} with the
consistency condition regarding the beliefs of the market maker that generalises conditional probabilities derived from Bayes' rule and
allows the case $p^\xi_Y(y_1, \ldots, y_t)=0$ to be handled. The main result of this section is the following.

\begin{theorem} \label{theorem:sequential_Kyle_equilibrium}
	Every Kyle game has a sequential Kyle equilibrium.
\end{theorem}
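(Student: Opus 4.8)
The plan is to prove the theorem by a perturbation argument in the spirit of \cite{kreps_sequential_1982}, combined with a Kakutani fixed point theorem on the joint space of insider strategies and market-maker beliefs, followed by a limiting argument. First, for each $\eps\in(0,1/K]$, I would restrict the insider to the completely mixed strategies with uniform floor $\eps$,
\[
\Xi^\eps:=\bigl\{\xi\in\Xi:\ \xi(\tau,\{x\})\ge\eps\ \text{for all }\tau\in\bT_X,\ x\in E_X\bigr\},
\]
a nonempty compact convex subset of a Euclidean space. For $\xi\in\Xi^\eps$ every node, and in particular every market-maker information set, is reached with positive probability, so $p^\xi_Y(y_1,\ldots,y_t)>0$ throughout; hence the rational beliefs $\mu^\xi$ of (\ref{equation:rational_pricing_beliefs}) are uniquely determined, $\xi\mapsto\mu^\xi$ is continuous on $\Xi^\eps$ (Bayes' rule never divides by zero), and composing with the affine map $\mu\mapsto S^\mu$ of Definition~\ref{definition:rational_pricing_beliefs} gives a continuous map $\xi\mapsto S^{\mu^\xi}$.

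The key step is to describe, for a fixed pricing system $S\in\cS$, the insider's best responses within $\Xi^\eps$ through a value function built by backward induction along the finite tree $\bT$: the value of a terminal node is $0$; at a $\bT_V$-node it is the $p_V$-average of successor values; at a $\bT_Z$-node it is the $\zeta$-average of the round-$t$ trading payoff that becomes determined there (an expression affine in $S$) plus the successor value; and at an insider node $\tau\in\bT_X$,
\[
V^{S,\eps}_\tau:=\max_{p\in\Delta^\eps E_X}\ \sum_{x\in E_X}p(x)\,V^{S,\eps}_{(\tau,x)},\qquad \Delta^\eps E_X:=\{p\in\Delta E_X:\ p\ge\eps\}.
\]
Since the horizon is finite, the insider has perfect information, and $\nu>0$, $\zeta>0$, a one-shot-deviation (principle of optimality) argument shows that $\xi\in\Xi^\eps$ is subgame optimal for $S$ within $\Xi^\eps$, i.e.\ $u_\tau(\xi,S)=\max_{\xi'\in\Xi^\eps}u_\tau(\xi',S)$ for all $\tau\in\bT_V$, if and only if $\xi(\tau,\cdot)\in\argmax_{p\in\Delta^\eps E_X}\sum_x p(x)V^{S,\eps}_{(\tau,x)}$ for every $\tau\in\bT_X$. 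Writing $B^\eps(S)$ for this set of strategies, each constraint cuts out a face of the polytope $\Delta^\eps E_X$, so $B^\eps(S)$ is a nonempty product over $\tau\in\bT_X$ of such faces, hence nonempty, convex and compact; and since $S\mapsto V^{S,\eps}$ is continuous, $S\mapsto B^\eps(S)$ has closed graph. I expect this step to be the main obstacle: convexity and upper hemicontinuity of the constrained best-response set are not automatic, because the realisation probabilities (\ref{equation:realisation_probabilities_subgame}) are multilinear rather than linear in $\xi$, and they must be extracted from the value-function characterisation and, in turn, from the principle of optimality for the insider's finite-horizon perfect-information subgames.

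With this in hand I would define, on the nonempty compact convex set $\Xi^\eps\times\cM$, the self-correspondence of the (perturbed) social system in the sense of \cite{debreu_social_1952},
\[
\Psi^\eps(\xi,\mu):=B^\eps(S^{\mu})\times\{\mu^{\xi}\},
\]
which encodes simultaneously the insider optimising against the quoted prices and the market maker quoting Bayes-rational prices for the assumed insider strategy. By the previous paragraph together with continuity of $\xi\mapsto\mu^\xi$, the correspondence $\Psi^\eps$ has nonempty convex compact values and closed graph and maps $\Xi^\eps\times\cM$ into itself, so Kakutani's fixed point theorem yields $(\xi^\eps,\mu^\eps)$ with $\mu^\eps=\mu^{\xi^\eps}$ and $\xi^\eps\in B^\eps(S^{\mu^\eps})$.

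Finally I would take $\eps_n\downarrow 0$ and, by compactness of $\Xi\times\cM$, pass to a subsequence with $(\xi^{\eps_n},\mu^{\eps_n})\to(\xi^\star,\mu^\star)$. The sequence $(\xi^{\eps_n},\mu^{\eps_n})$ then witnesses that $\mu^\star$ is consistent with $\xi^\star$ in the sense of Definition~\ref{definition:consistent}, since each $\xi^{\eps_n}$ is completely mixed and each $\mu^{\eps_n}$ is rational assuming $\xi^{\eps_n}$. For subgame optimality of $\xi^\star$ given $S^{\mu^\star}$, fix $\tau\in\bT_V$ and an arbitrary $\xi\in\Xi$, and put $\xi_n:=(1-K\eps_n)\xi+K\eps_n\,\xi^{\mathrm{unif}}\in\Xi^{\eps_n}$, where $\xi^{\mathrm{unif}}$ assigns the uniform distribution on $E_X$ to every insider node, so that $\xi_n\to\xi$; the fixed point property gives $u_\tau(\xi^{\eps_n},S^{\mu^{\eps_n}})\ge u_\tau(\xi_n,S^{\mu^{\eps_n}})$, and letting $n\to\infty$ along the subsequence, joint continuity of $(\eta,S)\mapsto u_\tau(\eta,S)$ together with $S^{\mu^{\eps_n}}\to S^{\mu^\star}$ yields $u_\tau(\xi^\star,S^{\mu^\star})\ge u_\tau(\xi,S^{\mu^\star})$. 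Hence $(\xi^\star,\mu^\star)$ is a sequential Kyle equilibrium.
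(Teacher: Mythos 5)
Your argument is correct and follows the paper's overall architecture: restrict to the $\epsilon$-perturbed strategy sets $\Xi^\epsilon$, obtain a fixed point of a best-reply/rational-belief correspondence on $\Xi^\epsilon\times\cM$ via Kakutani, and let $\epsilon\downarrow 0$, with the completely mixed approximants certifying consistency and an interpolated competing strategy plus joint continuity of $(\xi,S)\mapsto u_\tau(\xi,S)$ certifying subgame optimality in the limit (your $\xi_n=(1-K\eps_n)\xi+K\eps_n\xi^{\mathrm{unif}}$ is exactly the paper's approximation). Where you differ is the middle step. The paper's correspondence $F^\epsilon$ in (\ref{25.6.2024.1}) is agent-wise: at each node $\tau\in\bT_X$ it takes the \emph{local} argmax over $\Delta E_X^\epsilon$ given $(\xi_{-\tau},\mu)$, so that linearity in the single transition distribution makes convexity and upper hemicontinuity immediate (Lemma~\ref{lemma:fixed_point}); the price is a separate backward-induction argument (Lemma~\ref{lemma:subgame_optimal}) showing that a fixed point is in fact subgame optimal in $\Xi^\epsilon$. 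You instead put the \emph{global} subgame-optimal best-reply set $B^\eps(S^\mu)$ into the correspondence, which depends on $\mu$ alone, and do the dynamic-programming work up front through the value function $V^{S,\eps}$: the one-shot-deviation equivalence (valid here because $\nu>0$, $\zeta>0$ and the $\eps$-floor make every node below a $\bT_V$-root reachable) turns $B^\eps(S)$ into a product over $\tau\in\bT_X$ of faces of $\Delta^\eps E_X$, giving nonemptiness, convexity, compactness and, via continuity of $S\mapsto V^{S,\eps}$, closed graph. So you have effectively merged the paper's Lemmas~\ref{lemma:fixed_point} and~\ref{lemma:subgame_optimal} into one correspondence; the paper's route keeps the Kakutani step trivially linear at the cost of an extra lemma, while yours front-loads the backward induction and then gets subgame optimality of the fixed point for free. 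Both are sound, and you correctly identified (and resolved) the one genuine pitfall of your variant, namely that convexity of the global best-reply set cannot be read off from the multilinear utility directly but only from the value-function characterisation.
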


\subsection{Proof of Theorem~\ref{theorem:sequential_Kyle_equilibrium}}

The proof of Theorem~\ref{theorem:sequential_Kyle_equilibrium} is based on the classical ideas that certain extensive form games have a \emph{(trembling hand) perfect equilibrium}, see \cite[Section 11]{selten_reexamination_1975}, and every perfect equilibrium is a sequential equilibrium, see \cite[Proposition 1]{kreps_sequential_1982}. However, we have to adapt this approach to accommodate the structure of a Kyle game. The proof we present is self-contained, without formally introducing Selten's \emph{agent normal form}, only requiring the reader to be familiar with Kakutani's fixed point theorem (see, e.g., \cite[Theorem~2.2.1]{gonzalez-diaz_introductory_2010}).

For $\epsilon > 0$ small enough such that $\epsilon\abs{E_X}<1$, the \emph{$\epsilon$-perturbed game} is the Kyle game with insider strategies in
\begin{equation*}
	\Xi^\epsilon :=\{ \xi \in \Xi :\; \xi(\tau,\cdot)\in\Delta E_X^\epsilon \text{ for all } \tau \in \bT_X\},
\end{equation*}
where $\Delta E_X^\epsilon := \{ p \in \Delta E_X: p(\{x\}) \ge \epsilon \text{ for all } x \in E_X \}$. 
We write $\xi_\tau := \xi(\tau, \,\cdot\;)$ and follow the usual notation that 
\begin{equation*}
	\xi = (\xi_{-\tau}, \xi_{\tau}) \qquad \text{where}\qquad	\xi_{-\tau} :  \bT_X \setminus \{\tau\} \to \Delta E_X,\quad \xi_{-\tau}(\tau') = \xi_{\tau'}.
\end{equation*}
For a node~$\tau := (I_{1, i_1}, x_1, z_1, \ldots, I_{t, i_t})\in\bT_X$ we define, analogously to (\ref{equation:realisation_probabilities_subgame}), 
\begin{equation*}
	\begin{split}
		p_\tau^\xi(\omega) := \frac{\nu(\{i\})}{\sum_{j \in I_{t, i_t}} \nu(\{j\})} \prod_{s=t}^T \xi((I_{1, i_1}, x_1, z_1, \ldots , I_{s, i_s}), \{x_s\}) \; \zeta(\{z_s\}) \qquad\\\qquad
		\text{for } \omega = (I_{1, i_1}, x_1, z_1, \ldots, I_{t, i_t}, x_t, z_t, I_{t+1, i_{t+1}}, x_{t+1}, z_{t+1},\ldots,  I_{T+1, i})
	\end{split}
\end{equation*}
and
\begin{equation*}
	u_\tau(\xi, S) := \sum_{\substack{\omega \in \Omega\\\omega \text{ after } \tau}} p_\tau^\xi(\omega)\; U(\omega, S).
\end{equation*}

The difference of the perturbed game to the original game is that every
possible trade of the insider has to be chosen with a minimum probability~$\epsilon>0$. 
For small $\epsilon$, the effect of this constraint on the maximiser is small. 
On the other hand, under $\xi\in\Xi^\epsilon$ with $\epsilon>0$, every node in the game tree is reached with positive probability, and rational beliefs of the market maker assuming $\xi$ are given by Bayes' rule for conditional probabilities. 
Having maximisers for the perturbed games, we let $\epsilon$ tend
to zero and show that there is a limiting system of beliefs, and a limiting strategy that maximises every subproblem of the original game.
By definition, this means that the limiting system of beliefs is consistent with the limiting 
strategy~(cf. Definition~\ref{definition:consistent}).

To work out this idea in detail, we next define the self-correspondence
\begin{equation}\label{25.6.2024.1}
	\begin{gathered}
		\begin{aligned}
			F^\epsilon : \Xi^\epsilon \times \cM &\rightrightarrows \Xi^\epsilon \times \cM, \\
			(\xi, \mu) &\mapsto  \prod_{\tau \in \bT_X} f^\epsilon_\tau(\xi, \mu) \times f_0^\epsilon(\xi), \qquad\text{where}
		\end{aligned}
		\\
		f_\tau^\epsilon(\xi, \mu) := \argmax_{p\in \Delta E_X^\epsilon} \; u_\tau((\xi_{-\tau},p), S^\mu) \ \text{ and }\
		f_0^\epsilon(\xi) := \{\mu' \in \cM: \mu' \text{ consistent with } \xi \}.
	\end{gathered}
\end{equation}
The first component of the correspondence~$F^\epsilon$ maps a pair~$(\xi, \mu)\in\Xi^\epsilon \times \cM$ to the set of ``locally optimal'' strategies in $\Xi^\epsilon$ given the pricing system~$S^\mu$ and the insider's strategy~$\xi$ in all other nodes. The second component maps $(\xi, \mu)$ to the set of beliefs consistent with $\xi$. The intuition behind the definition of $F^\epsilon$ is that by the dynamic programming principle, a strategy is optimal if and only if it is locally optimal at each node given the strategy at all other nodes.
This allows to reduce the $\epsilon$-perturbed game to a static game with different players at each node (called ``agents'').

\begin{lemma} \label{lemma:fixed_point}
	For every $\epsilon\in(0,1/|E_X|)$, $F^\epsilon$ has a fixed point.
\end{lemma}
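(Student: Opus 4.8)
The plan is to apply Kakutani's fixed point theorem to the self-correspondence $F^\epsilon$ on $\Xi^\epsilon \times \cM$. First I would observe that $\Xi^\epsilon \times \cM$ is a nonempty, compact, convex subset of a Euclidean space: each of the factors $\Delta E_X^\epsilon$ (for $\tau \in \bT_X$) and $\Delta I$ (for the belief coordinates indexed by the finitely many order flows) is a nonempty compact convex polytope, and a finite product of such sets is again nonempty, compact and convex. So the domain of $F^\epsilon$ is of the required type.

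Next I would verify that $F^\epsilon(\xi,\mu)$ is nonempty and convex for every $(\xi,\mu)$, and that $F^\epsilon$ has closed graph. For the first component, the map $p \mapsto u_\tau((\xi_{-\tau},p),S^\mu)$ is, by inspection of the formula for $p_\tau^\xi$, \emph{affine} (indeed linear) in the coordinates of $p$, since $p$ enters the realisation probability $p_\tau^\xi(\omega)$ only through the single factor $\xi((I_{1,i_1},x_1,z_1,\ldots,I_{t,i_t}),\{x_t\})$ at the node $\tau$ itself. An affine function on the compact convex polytope $\Delta E_X^\epsilon$ attains its maximum, and the argmax is a nonempty closed convex face; this gives nonemptiness and convexity of $f_\tau^\epsilon(\xi,\mu)$, and by Berge's maximum theorem (the objective is jointly continuous in $(\xi,\mu,p)$, since $S^\mu$ depends continuously — in fact affinely — on $\mu$) the correspondence $(\xi,\mu)\mapsto f_\tau^\epsilon(\xi,\mu)$ is upper hemicontinuous with closed graph. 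For the second component, since $\xi \in \Xi^\epsilon \subseteq \Xi^0$ is completely mixed, every node is reached with positive probability, so $p_Y^\xi(y_1,\ldots,y_t)>0$ for every order flow; hence the unique belief system consistent with $\xi$ is the one given by Bayes' rule~(\ref{equation:probabilities_conditional}), so $f_0^\epsilon(\xi)$ is a singleton. One has to check that this singleton is indeed consistent in the sense of Definition~\ref{definition:consistent}: take the constant approximating sequence $(\xi^n,\mu^n)=(\xi,\mu')$, which satisfies (i) and (ii) because $\xi$ is completely mixed and $\mu'$ is rational assuming $\xi$. A single-valued map is convex-valued trivially, and it is continuous because $\xi \mapsto p_Y^\xi(i,y_1,\ldots,y_t)$ is a polynomial in the coordinates of $\xi$ and the denominator $p_Y^\xi(y_1,\ldots,y_t)$ is bounded away from zero uniformly on the compact set $\Xi^\epsilon$ (it is continuous and strictly positive there). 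The product correspondence $F^\epsilon$ then inherits nonempty convex values and closed graph from its factors.

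Having checked all hypotheses, Kakutani's fixed point theorem yields a fixed point $(\xi^\epsilon,\mu^\epsilon) \in F^\epsilon(\xi^\epsilon,\mu^\epsilon)$, which proves the lemma.

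I expect the main point requiring care — rather than a genuine obstacle — to be the linearity/affinity of $p \mapsto u_\tau((\xi_{-\tau},p),S^\mu)$ in the single coordinate block at $\tau$, which is what makes $f_\tau^\epsilon$ convex-valued; this must be read off carefully from the definition of $p_\tau^\xi$, noting that the factor $\xi((\ldots,I_{s,i_s}),\{x_s\})$ for $s>t$ multiplies $p$ but does not itself depend on $p$, and that varying the node $\tau$ over $\bT_X$ keeps the blocks disjoint so that the product structure is genuine. The only other item to be slightly careful about is the uniform positivity of the Bayes denominator on $\Xi^\epsilon$, which is what secures continuity of the belief component; this follows since $p_Y^\xi(y_1,\ldots,y_t)$ is a continuous, strictly positive function on the compact set $\Xi^\epsilon$ and therefore bounded below by a positive constant.
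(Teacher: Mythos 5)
Your proposal is correct and follows essentially the same route as the paper: reduce to Kakutani's theorem on the compact convex product $\Xi^\epsilon\times\cM$, use linearity of $p\mapsto u_\tau((\xi_{-\tau},p),S^{\mu})$ at the single node $\tau$ to get nonempty closed convex values of $f^\epsilon_\tau$, and exploit complete mixedness of $\xi\in\Xi^\epsilon$ so that $f_0^\epsilon(\xi)$ is the singleton given by Bayes' rule, continuous because the denominator is strictly positive. The only cosmetic difference is that you invoke Berge's maximum theorem for upper hemicontinuity where the paper argues directly that suboptimal pure actions stay suboptimal under small perturbations; both are fine.
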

Such a fixed point can be seen as an equilibrium of the $\epsilon$-perturbed Kyle game when the insider maximises her utility by 
choosing trades separately at each node and considering the choices at other nodes as given. 
\begin{proof}[Proof of Lemma~\ref{lemma:fixed_point}]
	We fix $\epsilon\in(0,1/|E_X|)$. One can identify $\Xi$ and $\cM$ with subsets of $\bR^{d}$ for some $d$, and topological properties are to be understood accordingly. In order to apply Kakutani's fixed point theorem, we need to verify that $F^\epsilon$ is upper hemicontinuous, non-empty-, closed-, and convex-valued (see, e.g., \cite[Theorem~2.2.1]{gonzalez-diaz_introductory_2010}). It is sufficient to prove these properties for the components $f_\tau^\epsilon$, $f_0^\epsilon$.
	
	\emph{Step 1.} Fix $\tau \in \bT_X$ and consider $f_\tau^\epsilon: \Xi^\epsilon \times \cM \rightrightarrows \Delta E_X^\epsilon$. Any element of $\Delta E_X^\epsilon$ is a convex combination of the \emph{pure strategies} $\delta_x$, $x \in E_X$. Since (\ref{equation:realisation_probabilities})
	is linear in the transition distribution at a single node and by (\ref{equation:expected_utility_discrete}), 
	the set  $f_\tau^\epsilon(\xi, \mu)$ consists of those convex combinations where suboptimal pure strategies receive only the minimal weight $\epsilon$, i.e., 
	\begin{equation} \label{equation:purification}
		\begin{split}
			f_\tau^\epsilon(\xi, \mu) = \{ p\in \Delta E_X^\epsilon :\; p(\{x\}) = \epsilon\ \mbox{for all }x\in E_X\setminus\widehat{E}_X(\xi,\mu,\tau)\},
		\end{split}
	\end{equation}
	where
	\begin{equation*}
		\widehat{E}_X(\xi,\mu,\tau):=\{x\in E_X : 
		u_\tau((\xi_{-\tau}, \delta_x), S^\mu) \ge u_\tau((\xi_{-\tau}, \delta_{x'}),S^\mu)\ \mbox{for all }x'\in E_X\}.
	\end{equation*}
	Consequently, the set~$f_\tau^\epsilon(\xi, \mu)$ is obviously closed, convex, and non-empty. It remains to show upper hemicontinuity (see, e.g., \cite[page~21]{gonzalez-diaz_introductory_2010} for a definition). 
	The conditional probabilities $p^\xi_\tau$ in (\ref{equation:realisation_probabilities_subgame}) are continuous in $\xi$. As the concatenation of continuous functions, $(\xi,\mu) \mapsto u_\tau((\xi_{-\tau}, \delta_x), S^\mu) $ is continuous for all $x\in E_X$. 
	Thus, $\widehat{E}_X(\xi',\mu',\tau)\subseteq \widehat{E}_X(\xi,\mu,\tau)$ for all $(\xi',\mu')$ in a neighborhood of $(\xi,\mu)$
	since a suboptimal pure strategy remains suboptimal under a slight perturbation of $(\xi, \mu)$ and there are only finitely many of them. In conjunction with (\ref{equation:purification}), we obtain $f_\tau^\epsilon(\xi', \mu')\subseteq f_\tau^\epsilon(\xi, \mu)$, which implies upper hemicontinuity.
	
	\emph{Step 2.} Now we turn to $f_0^\epsilon : \Xi^\epsilon \rightrightarrows \cM$. Any strategy in $\xi \in \Xi^\epsilon$ is completely mixed, and thus has exactly one system of beliefs that is consistent with $\xi$, namely the conditional probabilities $\mu := p^\xi_Y$ from (\ref{equation:probabilities_conditional}), where the denominator is strictly positive for all $(y_1,\ldots,y_t)$.
	The singleton $\{\mu\}$ is obviously non-empty, closed and convex. From (\ref{equation:probabilities_conditional}) and strict positivity it follows that $\mu$ considered as a function of $\xi$ is continuous, and thus $f_0^\epsilon$ is upper hemicontinuous.
	
	Applying Kakutani's fixed point theorem to $F^\epsilon$ one obtains a fixed point.
\end{proof}

\begin{lemma} \label{lemma:subgame_optimal}
	For $\epsilon > 0$ let $(\xi^\epsilon, \mu^\epsilon)$ be a fixed point of $F^\epsilon$. Then the strategy $\xi^\epsilon$ is subgame optimal in $\Xi^\epsilon$ given $S^{\mu^\epsilon}$.
\end{lemma}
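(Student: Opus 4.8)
The plan is to upgrade the \emph{local} optimality encoded in a fixed point of $F^\epsilon$ to \emph{global} subgame optimality by a backward induction over the game tree, invoking the dynamic programming principle. First I would unwind what it means for $(\xi^\epsilon,\mu^\epsilon)$ to be a fixed point: by the definition~(\ref{25.6.2024.1}) of $F^\epsilon$, for every node $\tau\in\bT_X$ we have $\xi^\epsilon_\tau\in f^\epsilon_\tau(\xi^\epsilon,\mu^\epsilon)=\argmax_{p\in\Delta E_X^\epsilon}u_\tau((\xi^\epsilon_{-\tau},p),S^{\mu^\epsilon})$, i.e. $\xi^\epsilon$ cannot be improved at any single node $\tau$ by a unilateral change of the transition distribution there, holding the behaviour at all other nodes fixed. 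What must be shown is the stronger statement that $\xi^\epsilon\in\argmax_{\xi\in\Xi^\epsilon}u_\tau(\xi,S^{\mu^\epsilon})$ for every $\tau\in\bT_V$, where now the competing strategy $\xi$ is allowed to differ from $\xi^\epsilon$ simultaneously at \emph{all} descendant nodes.

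The key structural fact I would use is that the insider has perfect information, so the tree of her decision nodes is genuinely a tree (no nontrivial information sets), and the payoff $u_\tau(\xi,S)$ decomposes recursively over the subgames rooted at the immediate $\bT_X$-successors of $\tau$. Concretely, for $\tau\in\bT_V$ with $\bT_X$-children $\tau'$, one has an identity of the schematic form $u_\tau(\xi,S)=\sum_{\tau'}w_{\tau,\tau'}\,\big(c_{\tau,\tau'}(\xi_{\tau'},S)+\sum_{\tau''}\xi_{\tau'}(\{x\})\,\zeta(\{z\})\,u_{\tau''}(\xi,S)\big)$, where $w_{\tau,\tau'}$ are the (strategy-independent) nature/noise weights, $c_{\tau,\tau'}$ is the one-period trading payoff $[v^i-S_t(\cdots)]x$ collected at $\tau'$, and $\tau''$ ranges over the $\bT_V$-nodes one full trading round below $\tau'$. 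I would make this precise from~(\ref{equation:realisation_probabilities_subgame}) and the definition of $U(\omega,S)$, then argue by downward induction on $t$ (starting from the terminal round $t=T$, where $u_\tau(\xi,S^{\mu^\epsilon})$ depends only on $\xi$ at the single node $\tau\in\bT_X$ above $\tau$, so the fixed-point property \emph{is} subgame optimality): assuming $\xi^\epsilon$ is subgame optimal at all $\bT_V$-nodes strictly below $\tau$, the value $\sum_{\tau''}u_{\tau''}(\xi,S^{\mu^\epsilon})$ appearing in the recursion is maximised over $\Xi^\epsilon$ by $\xi=\xi^\epsilon$ on each subtree, and is unaffected by the choice of $\xi_{\tau'}$ for the finitely many children $\tau'$ of $\tau$ (those probabilities only enter through the linear weights multiplying already-optimised continuation values, plus the one-period term $c_{\tau,\tau'}$); hence the total is maximised if and only if each $\xi_{\tau'}$ maximises $w_{\tau,\tau'}(c_{\tau,\tau'}(\xi_{\tau'},S^{\mu^\epsilon})+\text{const})$ over $\Delta E_X^\epsilon$, which is exactly the single-node problem solved by the fixed-point condition $\xi^\epsilon_{\tau'}\in f^\epsilon_{\tau'}(\xi^\epsilon,\mu^\epsilon)$.

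I would then conclude that $\xi^\epsilon\in\argmax_{\xi\in\Xi^\epsilon}u_\tau(\xi,S^{\mu^\epsilon})$ for every $\tau\in\bT_V$, which is the definition of subgame optimality of $\xi^\epsilon$ in $\Xi^\epsilon$ given $S^{\mu^\epsilon}$ (Definition~\ref{definition:subgame_optimal}, read with $\Xi$ replaced by $\Xi^\epsilon$). The main obstacle is bookkeeping rather than conceptual: one must be careful that ``locally optimal at $\tau'$ given $\xi^\epsilon$ elsewhere'' combines correctly with ``globally optimal on each strict subtree'' to yield global optimality on the subtree rooted at $\tau$ — the point being that, because the continuation values below $\tau'$ do not depend on $\xi_{\tau'}$ and the objective is affine in $\xi_{\tau'}$, there is no tension between the one-shot deviation principle and simultaneous optimisation over all descendants. (This is the standard single-crossing / one-shot-deviation argument for finite perfect-information dynamic programming, e.g. \cite[Section~3]{gonzalez-diaz_introductory_2010}; the only adaptation needed is that the feasible set at each node is the truncated simplex $\Delta E_X^\epsilon$ rather than all of $\Delta E_X$, which changes nothing in the argument since $\Delta E_X^\epsilon$ is still convex and compact and the per-node maximiser is taken over it.)
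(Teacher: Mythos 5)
Your plan is correct and follows essentially the same route as the paper's proof: a backward induction over trading rounds in which the one-shot (fixed-point) optimality at each node $\tau'\in\bT_X$ is combined with the induction hypothesis on the subtrees below, exploiting that $u_{\tau'}$ is affine in $\xi_{\tau'}$ with nonnegative noise weights, and then averaging over the exogenous information-revelation probabilities to handle the $\bT_V$-nodes. The only cosmetic imprecision is writing the continuation term as a constant (it depends on the chosen action $x$ through the successor node), but since you identify the per-node problem with the fixed-point condition $\xi^\epsilon_{\tau'}\in f^\epsilon_{\tau'}(\xi^\epsilon,\mu^\epsilon)$ this does not affect the argument.
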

\begin{proof}
	Let $(\xi^\epsilon, \mu^\epsilon)$ be a fixed point of $F^\epsilon$ according to Lemma~\ref{lemma:fixed_point}. As such, $\xi^\epsilon$ is locally optimal given $S:=S^{\mu^\epsilon}$, i.e.,
	\begin{equation}\label{equation:lemma_subgame_optimal}
		\xi^\epsilon_\tau \in \argmax_{p\in \Delta E_X^\epsilon} \; u_\tau\big((\xi_{-\tau}^\epsilon, p), S\big) \quad \text{for all } \tau \in \bT_X.
	\end{equation}
	We prove (\ref{equation:subgame_optimal}) by backward induction over the period~$t$ in which the node~$\tau\in\bT_V$ lies. This means, we have to show that
	\begin{equation*}
		u_\tau(\xi^\epsilon, S) \ge u_\tau(\xi, S)\qquad\text{for all }\tau=(I_{1, i_1}, x_1, z_1, \ldots, I_{t-1, i_{t-1}}, x_{t-1}, z_{t-1}),\ \xi\in\Xi^\epsilon,
	\end{equation*}
	using that the assertion holds for $t$ instead of $t-1$ (the base case $t-1=T$ follows by the same arguments as below). Making use of the special structure of the Kyle game, the proof is shorter than the original one by Selten for general extensive form games (see \cite[Lemma~6]{selten_reexamination_1975}).
	
	{\em Step 1.} Throughout the proof, we fix a competing strategy~$\xi\in\Xi^\epsilon$. In the first step, we consider nodes of the form 
	$\tau':=(I_{1, i_1}, x_1, z_1, \ldots, I_{t,i_t})\in\bT_X$, i.e., nodes in which new insider information is already revealed, 
	and define the strategy $\xi':=(\xi^\epsilon_{-\tau'},\xi_{\tau'})$. From (\ref{equation:lemma_subgame_optimal}) it follows that
	\begin{equation*}
		u_{\tau'}(\xi^\epsilon, S)\ge u_{\tau'}(\xi', S).
	\end{equation*}  
	On the other hand, by definition of $u_{\tau'}$ and $u_{\tau'_{k,l}}$, one has
	\begin{equation}\label{equation:decomposition_tau'}
		u_{\tau'}(\xi', S) = \sum_{k=1}^K\sum_{l=1}^L \xi((I_{1, i_1}, x_1, z_1, \ldots , I_{t, i_t}),\{x^k\}) \; \zeta(\{z^l\}) \; u_{\tau'_{k,l}}(\xi^\epsilon, S)
	\end{equation}
	where $\tau'_{k,l}:=(I_{1, i_1}, x_1, z_1, \ldots, I_{t,i_t}, x^k,z^l)\in\bT_V$. From the induction hypothesis it follows that the RHS of (\ref{equation:decomposition_tau'}) dominates the LHS of
	\begin{equation*}
		\sum_{k=1}^K\sum_{l=1}^L \xi((I_{1, i_1}, x_1, z_1, \ldots , I_{t, i_t}),\{x^k\}) \; \zeta(\{z^l\}) \; u_{\tau'_{k,l}}(\xi, S)
		= u_{\tau'}(\xi, S).
	\end{equation*}
	Put together, we arrive at $u_{\tau'}(\xi^\epsilon, S)\ge u_{\tau'}(\xi, S)$. 
	
	{\em Step 2.} For nodes $\tau:=(I_{1, i_1}, x_1, z_1, \ldots, I_{t-1,i_{t-1}}, x_{t-1},z_{t-1})\in\bT_V$, the estimate $u_\tau(\xi^\epsilon, S)\ge u_\tau(\xi, S)$ follows from Step~1 by weighting $u_{\tau'}(\xi^\epsilon, S)$ and $u_{\tau'}(\xi, S)$ by the transition probabilities in the insider's private information tree, which are exogenous.
\end{proof}
The proof of Theorem~\ref{theorem:sequential_Kyle_equilibrium} is completed by the following lemma.
\begin{lemma}
	There exists a sequence $(\epsilon_n)_{n \in \bN}\subseteq\bR_+$ with $\epsilon_n \downarrow 0$, 
	a sequence $(\xi^{\epsilon_n}, \mu^{\epsilon_n})_{n \in \bN}$ with $(\xi^{\epsilon_n}, \mu^{\epsilon_n})\in\Xi^{\epsilon_n}\times\cM$, 
	and $(\xi^{*}, \mu^{*})\in\Xi\times\cM$ such that $(\xi^{\epsilon_n}, \mu^{\epsilon_n})$ is a fixed point of 
	$F^{\epsilon_n}$ for all $n\in\bN$ and $(\xi^{\epsilon_n}, \mu^{\epsilon_n}) \to (\xi^{*}, \mu^{*})$ as $n \to \infty$. In addition, $(\xi^{*}, \mu^{*})$ is a sequential Kyle equilibrium.	
	(Elements of $\Xi\times\cM$ can be identified with elements of $\bR^d$ for some $d$, and convergence is understood accordingly.)	
\end{lemma}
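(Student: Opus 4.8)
The plan is to feed the $\epsilon$-perturbed equilibria produced by Lemmas~\ref{lemma:fixed_point} and~\ref{lemma:subgame_optimal} into a compactness argument, and then to pass to the limit $\epsilon\downarrow 0$, the only subtlety being that the feasible set $\Xi^\epsilon$ must be enlarged to the full $\Xi$ in the limit. First I would fix any null sequence $\epsilon_n\downarrow 0$ with $\epsilon_n\in(0,1/|E_X|)$ and, by Lemma~\ref{lemma:fixed_point}, pick for each $n$ a fixed point $(\xi^{\epsilon_n},\mu^{\epsilon_n})\in\Xi^{\epsilon_n}\times\cM\subseteq\Xi\times\cM$ of $F^{\epsilon_n}$. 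Identifying $\Xi\times\cM$ with a compact subset of $\bR^d$ (a finite product of simplices), Bolzano--Weierstrass yields a convergent subsequence; after passing to it and relabelling, $(\xi^{\epsilon_n},\mu^{\epsilon_n})\to(\xi^{*},\mu^{*})\in\Xi\times\cM$.

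Consistency of $\mu^{*}$ with $\xi^{*}$ is then essentially immediate. Each $\xi^{\epsilon_n}\in\Xi^{\epsilon_n}$ is completely mixed since $\xi^{\epsilon_n}(\tau,\{x\})\ge\epsilon_n>0$ for all $\tau\in\bT_X$, $x\in E_X$, and the fixed-point property forces $\mu^{\epsilon_n}\in f_0^{\epsilon_n}(\xi^{\epsilon_n})$, i.e.\ $\mu^{\epsilon_n}$ is consistent with $\xi^{\epsilon_n}$; by the argument in Step~2 of the proof of Lemma~\ref{lemma:fixed_point}, for a completely mixed strategy the unique consistent belief is the Bayes belief, which is rational assuming $\xi^{\epsilon_n}$. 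Hence $(\xi^{\epsilon_n},\mu^{\epsilon_n})_{n\in\bN}$ is itself an approximating sequence of the kind demanded in Definition~\ref{definition:consistent}, so $\mu^{*}$ is consistent with $\xi^{*}$; this is the rational-pricing requirement.

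For profit maximisation, Lemma~\ref{lemma:subgame_optimal} gives, for each $n$ and each $\tau\in\bT_V$, that $u_\tau(\xi^{\epsilon_n},S^{\mu^{\epsilon_n}})\ge u_\tau(\xi,S^{\mu^{\epsilon_n}})$ for all $\xi\in\Xi^{\epsilon_n}$. The map $\mu\mapsto S^\mu$ is linear, hence continuous, and $(\xi,S)\mapsto u_\tau(\xi,S)$ is a polynomial in the finitely many coordinates (multilinear in the transition probabilities through $p_\tau^\xi$, affine in $S$), so $(\xi,\mu)\mapsto u_\tau(\xi,S^\mu)$ is jointly continuous. To remove the constraint in the limit, fix an arbitrary $\xi\in\Xi$ and define the smoothed strategies $\xi^{(n)}\in\Xi^{\epsilon_n}$ by $\xi^{(n)}(\tau,\{x\}):=(1-\epsilon_n|E_X|)\,\xi(\tau,\{x\})+\epsilon_n$; these lie in $\Delta E_X^{\epsilon_n}$ at every node and satisfy $\xi^{(n)}\to\xi$. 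Applying the inequality with $\xi^{(n)}$ in place of $\xi$ and letting $n\to\infty$, joint continuity gives $u_\tau(\xi^{*},S^{\mu^{*}})\ge u_\tau(\xi,S^{\mu^{*}})$. As $\tau\in\bT_V$ and $\xi\in\Xi$ were arbitrary, $\xi^{*}\in\argmax_{\xi\in\Xi}u_\tau(\xi,S^{\mu^{*}})$ for every $\tau$, i.e.\ $\xi^{*}$ is subgame optimal given $S^{\mu^{*}}$, and $(\xi^{*},\mu^{*})$ is a sequential Kyle equilibrium.

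I expect the genuinely delicate point to be precisely this enlargement of the feasible set from $\Xi^{\epsilon_n}$ to $\Xi$: a priori $\xi^{\epsilon_n}$ is only known to beat competitors that also put mass at least $\epsilon_n$ on every trade, and one must verify that the constraint evaporates in the limit rather than pinning the limit strategy down. The smoothing $\xi^{(n)}$, combined with joint continuity of $(\xi,\mu)\mapsto u_\tau(\xi,S^\mu)$, resolves this; the remaining ingredients — compactness of $\Xi\times\cM$, the consistency step, and continuity of $\mu\mapsto S^\mu$ — are routine.
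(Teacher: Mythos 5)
Your proposal is correct and follows essentially the same route as the paper: extract a convergent subsequence of fixed points of $F^{\epsilon_n}$, obtain consistency because each $\xi^{\epsilon_n}$ is completely mixed with its (unique, Bayes-rational) belief $\mu^{\epsilon_n}$, and prove subgame optimality by smoothing an arbitrary competitor into $\Xi^{\epsilon_n}$ via $\xi^{(n)}(\tau,\{x\})=(1-\epsilon_n\abs{E_X})\,\xi(\tau,\{x\})+\epsilon_n$ and passing to the limit by joint continuity of $(\xi,\mu)\mapsto u_\tau(\xi,S^\mu)$. This is exactly the paper's argument (with $\epsilon_n=1/n$), including the same convex-combination trick for enlarging the feasible set.
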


\begin{proof}
	For every $n\in\bN$, let $(\xi^{\sfrac{1}{n}}, \mu^{\sfrac{1}{n}})\in\Xi^{\sfrac{1}{n}}\times\cM$ be a fixed point of $F^{\sfrac{1}{n}}$ provided by Lemma~\ref{lemma:fixed_point}. Since $\xi^{\sfrac{1}{n}}(\tau,\{x\})\in [0,1]$ for all $\tau\in\bT_X$ and $x\in E_X$, there exists a  
	subsequence~$(n_k)_{k \in \bN}$ such that $\xi^{\sfrac{1}{n_k}} \to \xi^\star$ as $k \to \infty$ for some $\xi^\star$ with $\xi^\star(\tau,\{x^i\})\ge 0$, $i=1,\ldots,K$ and $\sum_{i=1}^K \xi^\star(\tau,\{x^i\})=1$. Because $1/{n_k} > 0$ for all $k\in\bN$, every $\mu^{\sfrac{1}{n_k}}$ is the unique system of beliefs that is rational assuming $\xi^{\sfrac{1}{n_k}}$. 
	There exists a further subsequence $(n_{k_j})_{j \in \bN}$ and $\mu^\star\in\cM$ such that $\smash{\mu^{1/n_{k_j}}} \to \mu^\star$ as $j \to \infty$. In conclusion, $\smash{(\xi^{\sfrac{1}{n_{k_j}}}, \mu^{\sfrac{1}{n_{k_j}}})} \to (\xi^\star, \mu^\star)$ as $j \to \infty$, $\smash{\xi^{\sfrac{1}{n_{k_j}}}}$ is completely mixed, and $\smash{\mu^{\sfrac{1}{n_{k_j}}}}$ is rational assuming $\xi^{n_{k_j}}$, so $\mu^\star$ is consistent with $\xi^\star$ (see Definition \ref{definition:consistent}).
	
	It remains to show that $\xi^\star$ is subgame optimal given $S^{\mu^\star}$. W.l.o.g. $\smash{n_{k_j}} = n$ for all $j \in \bN$. 
	We fix a node $\tau \in \bT_V$ and a competing strategy~$\widetilde{\xi}\in\Xi$. We construct an approximating sequence~$(\widetilde{\xi}^n)_{n\in\bN}$ with $\widetilde{\xi}^n\in\Xi^{\sfrac{1}{n}}$ by $\widetilde{\xi}^n(\tau,\{x\}):= {\sfrac{1}{n}} + \widetilde{\xi}(\tau,\{x\})(1-\abs{E_X}/n)$ for all $n>\abs{E_X}$. 
	Following Lemma~\ref{lemma:subgame_optimal} the strategy $\xi^{\sfrac{1}{n}}$ is subgame optimal in $\Xi^{\sfrac{1}{n}}$ given $S^{\mu^{\sfrac{1}{n}}}$ which implies 
	\begin{equation*}
		u_\tau\big(\xi^{\sfrac{1}{n}}, S^{\mu^{\sfrac{1}{n}}}\big) \geq u_\tau \big(\widetilde{\xi}^n, S^{\mu^{\sfrac{1}{n}}}\big)\quad\mbox{for all }n>|E_X|.
	\end{equation*}
	Because $(\xi, \mu) \mapsto u_\tau(\xi, S^\mu)$ is jointly continuous, $(\xi^{\sfrac{1}{n}}, \mu^{\sfrac{1}{n}}) \to (\xi^\star, \mu^\star)$, and $(\widetilde{\xi}^n, \mu^{\sfrac{1}{n}}) \to (\widetilde{\xi}, \mu^\star)$ as $n\to\infty$, we can pass to the limit $n \to \infty$ on both sides to get
	\begin{equation*}
		u_\tau \big(\xi^\star, S^{\mu^\star}\big) \geq u_\tau \big(\widetilde{\xi}, S^{\mu^\star}\big).
	\end{equation*}
\end{proof}

\begin{remark}[Semi-explicit characterisation of the set of Kyle equilibria]\label{25.6.2024.3}
	The specification of the Kyle model as a game has the benefit that equilibria can be characterised as fixed points of a certain 
	self-correspondence of best replies, which enables a systematic study of Kyle equilibria. 
The self-correspondence we need is a variant of the self-correspondence~$F^\epsilon$ from (\ref{25.6.2024.1}) that is used to get a {\em sequential} Kyle equilibrium. The variant acts on the set of mixed strategies
denoted by $\Xi^m$ (that are lotteries over behaviour strategies with Dirac measures). Analogous to (\ref{equation:realisation_probabilities}),
a mixed strategy induces realisation probabilities on the tree, which we use to define the expected utility of a mixed strategy and rational prices assuming a mixed strategy. Then, the new self-correspondence reads	
	\begin{equation}\label{27.6.2024.1}
		\begin{aligned}
			F^m : \Xi^m \times \mathcal{S} &\rightrightarrows \Xi^m \times \mathcal{S}, \\
			(\xi, S) &\mapsto \argmax_{\wt{\xi} \in \Xi^m} \; u(\wt{\xi}, S) \times \{\wt{S} \in \mathcal{S}: \wt{S} \text{ is rational assuming } \xi \}.
		\end{aligned}
	\end{equation}
Since $F^m$ acts on the set of mixed strategies, we have to convert mixed strategies into behaviour strategies and vice versa. 	
First, we start with a fixed point of $F^m$ denoted by $(\xi,S)\in \Xi^m \times \mathcal{S}$.
Because the insider has perfect information, the game has \emph{perfect recall}, and Kuhn's theorem applies: for every mixed strategy there exists a realisation equivalent behaviour strategy (see, e.g., \cite[Theorem~3.2.1]{gonzalez-diaz_introductory_2010}). The pair consisting of a behaviour strategy which is realisation equivalent to $\xi$ and the price function~$S$ is obviously a Kyle equilibrium in the sense of Definition~\ref{25.6.2024.2}.
Second, starting with a Kyle equilibrium in the sense of Definition~\ref{25.6.2024.2}, the insider's behaviour strategy induces an ``equivalent'' mixed strategy~$\xi\in \Xi^m$ (see, e.g., \cite[Definition~3.2.3]{gonzalez-diaz_introductory_2010}). 
Since the realisation probabilities on the tree are the same, the mixed strategy and the price function have to be a fixed point of $F^m$.
This means that by (\ref{27.6.2024.1}) we get all Kyle equilibria.
\end{remark}

\begin{example}[$\nexists$ equilibrium in pure strategies] \label{example:no_pure_equilibrium}
We provide a minimalist example of a two-period Kyle game in which no equilibrium  
with a pure strategy of the insider exists. There are multiple equilibria, but there is one node at which the insider must randomise in equilibrium.
Let $T=2$,  $E_V=\{0,1\}$, $\mathcal{I}_1=\mathcal{I}_2=\{\{0\},\{1\}\}$, i.e., the true value is completely revealed to the insider at the very beginning, $\nu(\{0\})=\nu(\{1\})=1/2$, $E_Z=\{-1,0,1\}$, $\zeta(\{-1\})=\zeta(\{1\})=\eps$, $\zeta(\{0\})=1-2\eps$, and $E_X=\{0,1\}$. The parameter~$\eps>0$ is not yet specified, but it should be ``small''. This means that the total demand coincides with the insider's demand with high probability (one can show numerically that the assertion holds for any $\eps \in (0, 0.19371\ldots)$).

By definition, prices have to lie between $0$ and $1$. Therefore, in the second period it is always optimal for the insider to buy if $v=1$ and to do nothing if $v=0$. Let us show that no pure strategy can be part of an equilibrium: if such a strategy is assumed by the market maker, he 
sets prices under which the strategy is not optimal for the insider. This is obvious for all but one pure strategy: the only interesting strategy is 
\begin{equation}\label{10.6.2024.1}
	\xi(\tau,\{1\})=1\ \text{for every node~$\tau$ after $v=1$}, \qquad {and} \qquad \xi(\tau,\{0\})=1\ \text{after $v=0$},
\end{equation}
i.e., for $v=1$ the insider already buys in the first period with probability~$1$. For all other strategies we refrain from writing down that they are not optimal given the associated rational prices (e.g., if the market maker assumes that the insider does not trade in the first period, he sets the price constant to $1/2$ in the first period, which gives the insider an incentive to buy, and so it does not lead to an equilibrium). 

Now, let $S$ be the pricing system that is rational assuming the insider plays~$\xi$ from (\ref{10.6.2024.1}). Let $\wt{\xi}$ be the alternative pure strategy where the insider does not trade in the first period: we set $\wt{\xi}(1,\{0\})=1$ and $\wt{\xi}=\xi$ for all other nodes. We want to show that 
\begin{equation}\label{11.6.2024.01}
	u(\wt{\xi},S)>u(\xi,S)\quad \mbox{for\ }\eps\ \mbox{small enough (thus $\nexists$ equilibrium in pure strategies).}  
\end{equation}
For this, it is sufficient to look at the asymptotic behaviour for $\eps\downarrow 0$. First, we consider $u(\xi,S)$. The insider mainly pays prices
$S_1(1)=1-O(\eps)$ and $S_2(1,1)=1-O(\eps)$ for the asset, whereas lower prices she only gets with probability~$O(\eps)$, and thus we obtain 
\begin{equation*}
	u(\xi,S)=O(\eps).
\end{equation*}
On the other hand, strategy~$\wt{\xi}$ buys at price~$S_2(0,1)$ with probability~$1-O(\eps)$. Assuming $\xi$, the 
state~$(y_1,y_2)=(0,1)$ is realized by two nodes with positive probability: $(v, x_1, z_1, x_2, z_2)=(0,0,0,0,1)$ and $(v, x_1, z_1, x_2, z_2)=(1,1,-1,1,0)$ (indicated with $\star$ in Figure~\ref{figure:no_pure_equilibrium}(a)).  
Both nodes have probability~$(1-2\eps)\eps/2$, and thus the rational price is given by $S_2(0,1)=1/2$. We arrive at $u(\wt{\xi},S)=1/2 + O(\eps)$ and finally (\ref{11.6.2024.01}).

The economic interpretation is as follows. If the noise trader's activity is small, the insider can only make small gains by doing the expected strategy $\xi$. 
But by buying only in the second period when $v=1$ occurs, the strategy $\wt\xi$, she irritates the market maker and gets the same price as if she were a noise trader.

Symbolic computations show that there exists an ``essentially unique'' mixed equilibrium for any $\eps \in (0, 0.19371\ldots)$: Let $\xi_\alpha$ be the mixed strategy where the insider buys with probability $\alpha$ in the first period when $v=1$, i.e. $\xi_\alpha(1,\{1\}) = \alpha = 1-\xi(1,\{0\})$ and $\xi_\alpha=\xi$ for all other nodes, there $\xi$ is optimal for any pricing system (an indifference case occurs at node~$(v,x_1,z_1)=(1,1,1)$, since in equilibrium $S_2(2,y_2)$ must be $1$ for all $y_2\in\{0,1,2\}$ if $\alpha>0$).
For fixed $\eps$ one can calculate the rational prices and associated profits as a function of $\alpha$. We have a mixed equilibrium if and only if the insider is indifferent between buying and doing nothing in the first period after $v=1$ (cf. the arguments in Step 1 of the proof of Lemma~\ref{lemma:fixed_point}, and see Figure~\ref{figure:no_pure_equilibrium}(b)). Computing equilibrium values of $\alpha$ boils down to finding the roots of a polynomial of degree seven. E.g. for $\eps=1/8$, we have to solve
\begin{equation*}
	\begin{split}
	750000\alpha^7-9485000\alpha^6+36365625\alpha^5-48108800\alpha^4-25782575\alpha^3 \qquad\\\qquad+80831674\alpha^2-21705040\alpha-10602816 = 0,
	\end{split}
\end{equation*}
and there is exactly one positive real root $<1$, $\alpha^\star=0.77464\ldots$
Putting together, $(\xi_{\alpha^\star},S)$ is a Kyle equilibrium where $S$ is given by (\ref{equation:rational_pricing_discrete}), 
excluding $S_2(2,-1)$ which can be chosen arbitrarily from $[0,1]$. To obtain all equilibria, one can replace 
$\xi_{\alpha^\star}((1,1,1),\{1\})=1$ by an arbitrary $\beta\in[0,1]$ (for $\beta<1$, $S_2(2,-1)$ must be $1$, and for $\beta=0$, $S_2(2,2)$ becomes arbitrary).
\end{example}

\noindent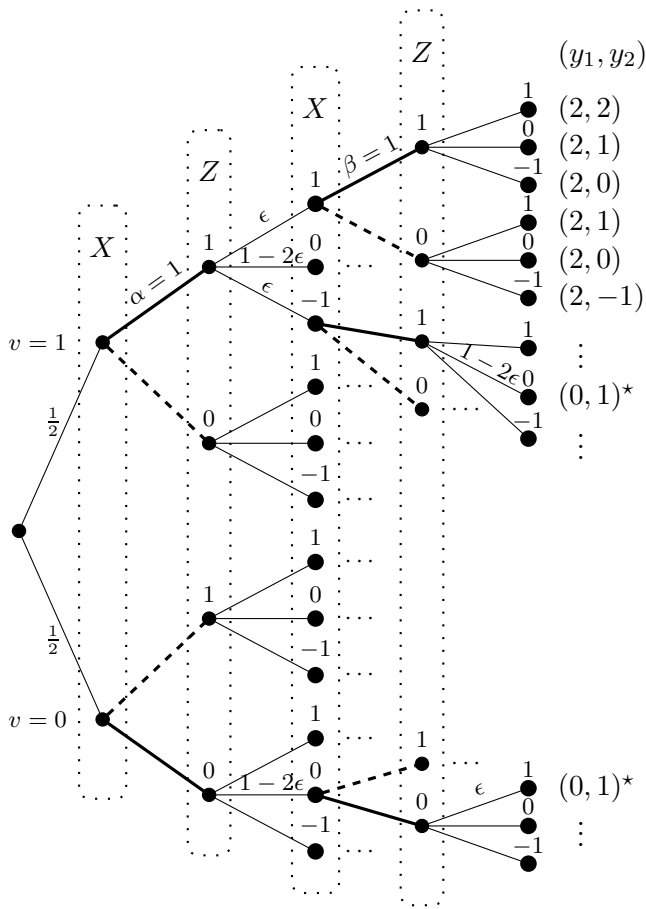
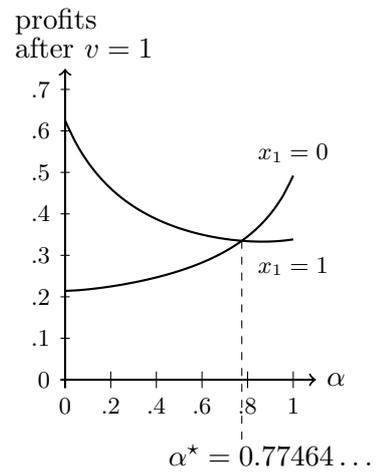
\begin{figure}[]%
	\tikzstyle{dot} = [draw, solid, fill, circle, inner sep=0.0cm, outer sep=0cm, minimum size=5pt]
	\tikzstyle{densely dashed}=[dash pattern=on 6pt off 1.5pt]
	\tikzstyle{every node}=[font=\footnotesize]
	\tikzstyle{normal} = [thin, solid]
	\tikzstyle{emph} = [very thick, solid]
	\tikzstyle{level 1}=[level distance=1.1cm, sibling distance=5cm]
	\tikzstyle{level 2}=[level distance=1.4cm, sibling distance=2cm]
	\tikzstyle{level 3}=[level distance=1.4cm, sibling distance=.75cm]
	\tikzstyle{level 4}=[level distance=1.4cm, sibling distance=1.5cm]
	\tikzstyle{level 5}=[level distance=1.4cm, sibling distance=.5cm]
	\begin{subfigure}[b]{0.65\textwidth} \centering
		\begin{tikzpicture}[grow=right]
			\node[dot] (r) {}
			child {
				node[dot, label={[left=.25cm]left: {$v=0$}}] (v0) {} 
				child {
					node[dot, label={[above=0pt] {$0$}}] (x1bottom) {}
					child {
						node[dot, label={[above=0pt] {$-1$}}] (z1bottom) {} node[right=.25cm] {$\cdots$}
						edge from parent[normal]
					}
					child {
						node[dot, label={[above=0pt] {$0$}}] () {}					
						child {
							node[dot, above=.25cm, label={[above=0pt] {$0$}}] (x2bottom) {}
							child {
								node[dot, label={[above=-3pt] {$-1$}}] () {} node[right=.25cm] {\normalsize $ $}
								edge from parent[normal]
							}
							child {
								node[dot, label={[above=-2pt] {$0$}}] () {} node[right=.5cm] {\normalsize $\vdots$}
								edge from parent[normal]
							}
							child {
								node[dot, label={[above=-2pt] {$1$}}] () {} node[right=.25cm] {\normalsize $(0,1)^\star$}
								edge from parent[normal] node[pos=.6, above, sloped] {\footnotesize$\epsilon$}
							}
							edge from parent[emph]
						}
						child {
							node[dot, below=.25cm, label={[above=0pt] {$1$}}] (cdots1) {} 
							edge from parent[emph, dashed]
						}
						edge from parent[normal] node[pos=.6, above, sloped, above=-3pt] {\footnotesize $1-2\epsilon$}
					}
					child {
						node[dot, label={[above] {$1$}}] () {} node[right=.25cm] {$\cdots$}
						edge from parent[normal]
					}
					edge from parent[emph] 
				}
				child {
					node[dot, above=.25cm, label={[above=0pt] {$1$}}] () {}
					child {
						node[dot, label={[above=0pt] {$-1$}}] () {} node[right=.25cm] {$\cdots$}
						edge from parent[normal]
					}
					child {
						node[dot, label={[above=0pt] {$0$}}] () {} node[right=.25cm] {$\cdots$}
						edge from parent[normal]
					}
					child {
						node[dot, label={[above=0pt] {$1$}}] () {} node[right=.25cm] {$\cdots$}
						edge from parent[normal]
					}
					edge from parent[emph, dashed] 
				}
				edge from parent[normal] node[pos=.4, below=3pt] {$\frac{1}{2}$}
			}
			child {
				node[dot, label={[left=.25cm]left: {$v=1$}}] (v1) {}
				child {
					node[dot, below=.25cm, label={[above=0pt] {$0$}}] () {}
					child {
						node[dot, label={[above=0pt] {$-1$}}] () {} node[right=.25cm] {$\cdots$}
						edge from parent[normal]
					}
					child {
						node[dot, label={[above=0pt] {$0$}}] () {} node[right=.25cm] {$\cdots$}
						edge from parent[normal]
					}
					child {
						node[dot, label={[above=0pt] {$1$}}] () {} node[right=.25cm] {$\cdots$}
						edge from parent[normal]
					}
					edge from parent[emph, dashed] node[pos=.5, above, sloped] {$ $}
				}
				child {
					node[dot, label={[above=0pt] {$1$}}] (x1top) {}
					child {
						node[dot, label={[above=-1pt] {$-1$}}] () {}
						child {
							node [dot, below=.3cm, label={[above=0pt] {$0$}}] (cdots) {}
							edge from parent[emph, dashed]
						}
						child {
							node[dot, below=.9cm, label={[above=0pt] {$1$}}] () {} 
							child {
								node[dot, below=.7cm, label={[above=-3pt] {$-1$}}] (00) {} node[right=.5cm of 00.center] {\normalsize $\vdots$}
								edge from parent[normal]
							}
							child {
								node[dot, below=.65cm, label={[above=-2pt] {$0$}}] (01) {} node[right=.25cm of 01.center] {\normalsize $(0,1)^\star$}
								edge from parent[normal] node[pos=.6, above=-3pt, sloped] {\footnotesize$1-2\epsilon$}
							}
							child {
								node[dot, below=.5cm, label={[above=-2pt] {$1$}}] (02) {} node[right=.5cm of 02.center] {\normalsize $\vdots$}
								edge from parent[normal]
							}
							edge from parent[emph]
						}
						edge from parent[normal] node[pos=.5, above, sloped, above=-2pt] {\footnotesize$\epsilon$}
					}
					child {
						node[dot, label={[above=0pt] {$0$}}] () {} node[right=.25cm] {$\cdots$}
						edge from parent[normal] node[pos=.6, above=-3pt, sloped] {\footnotesize$1-2\epsilon$}
					}
					child {
						node[dot, above=0cm, label={[above=0pt] {$1$}}] (z1top) {}
						child {
							node[dot, label={[above=0pt] {$0$}}] () {}
							child {
								node[dot, label={[above=-3pt] {$-1$}}] () {} node[right=.25cm] {\normalsize $(2,-1)$}
								edge from parent[normal]
							}
							child {
								node[dot, label={[above=-2pt] {$0$}}] () {} node[right=.25cm] {\normalsize $(2,0)$}
								edge from parent[normal]
							}
							child {
								node[dot, label={[above=-2pt] {$1$}}] () {} node[right=.25cm] {\normalsize $(2,1)$}
								edge from parent[normal]
							}
							edge from parent[emph, dashed] 
						}
						child {
							node[dot, label={[above=0pt] {$1$}}] (x2top) {}
							child {
								node[dot, label={[above=-3pt] {$-1$}}] () {} node[right=.25cm] {\normalsize $(2,0)$}
								edge from parent[normal]
							}
							child {
								node[dot, label={[above=-2pt] {$0$}}] () {} node[right=.25cm] {\normalsize $(2,1)$}
								edge from parent[normal]
							}
							child {
								node[dot, label={[above=-2pt] {$1$}}] (ylabel) {} node[right=.25cm] {\normalsize $(2,2)$}
								edge from parent[normal]
							}
							edge from parent[emph] node[pos=.6, above=-2pt, sloped] {\footnotesize$\beta=1$}
						}						
						edge from parent[normal] node[pos=.6, above, sloped] {\footnotesize$\epsilon$}
					}
					edge from parent[emph] node[pos=.6, above, sloped] {\footnotesize$\alpha=1$}
				}
				edge from parent[normal] node[pos=.4, above=3pt] {$\frac{1}{2}$}
			};
			\node [above=1cm of v1.center] (X1) {\normalsize $X$};
			\node [above=1cm of x1top.center] (Z1) {\normalsize $Z$};
			\node [above=1cm of z1top.center] (X2) {\normalsize $X$};
			\node [above=1cm of x2top.center] (Z2) {\normalsize $Z$};
			\node [below=.5cm of v0.center] (vs) {};
			\node [below=.25cm of x1bottom.center] (x1bottom+) {};
			\node [below=0cm of z1bottom.center] (z1bottom+) {};
			\node [below=.5cm of x2bottom.center] (x2bottom+) {};
			\node [above right=.4cm and .25cm of ylabel.center] () {\normalsize $(y_1, y_2)$};
			\node [right=.25cm of cdots.center] {$\cdots$};
			\node [right=.25cm of cdots1.center] {$\cdots$};
			\node () [rectangle, thick, rounded corners, draw, loosely dotted, fit=(X1) (vs), inner xsep=0pt, inner ysep=8pt] {};
			\node () [rectangle, thick, rounded corners, draw, loosely dotted, fit=(Z1) (x1bottom+), inner xsep=0pt, inner ysep=8pt] {};
			\node () [rectangle, thick, rounded corners, draw, loosely dotted, fit=(X2) (z1bottom+), inner xsep=0pt, inner ysep=8pt] {};
			\node () [rectangle, thick, rounded corners, draw, loosely dotted, fit=(Z2) (x2bottom+), inner xsep=0pt, inner ysep=8pt] {};
		\end{tikzpicture}
		\caption{The pure strategy $\xi$ (solid lines) where the insider buys when the true value is high ($v=1$) and does not trade when it is low ($v=0$).}
	\end{subfigure}\hfill
	\begin{subfigure}[b]{0.33\textwidth} \centering
		\tikzstyle{dot} = [draw, solid, fill, circle, inner sep=0.0cm, outer sep=0cm, minimum size=5pt]
		\tikzstyle{every node}=[font=\footnotesize]
		\begin{tikzpicture}[xscale=3,yscale=5.5]				
			\draw[->, thick] (-.02,0) -- (1.1,0) node[right] {\normalsize $\alpha$};
			\draw[->, thick] (0,-.02) -- (0,.75) node[above left=0cm and -1.3cm, align=left] {\normalsize profits\\\normalsize after $v=1$};
			
			\foreach \pos in {0,.2,.4,.6,.8,1}
			\draw[shift={(\pos,0)}] (0,.02) -- (0,-.02) node[below] {$\pos$};
			
			\foreach \pos in {0,.1,.2,.3,.4,.5,.6,.7}
			\draw[shift={(0,\pos)}] (.02,0) -- (-.02,0) node[left] {$\pos$};
			
			\draw[thick] plot[smooth] file {plot1.table};
			\draw[thick] plot[smooth] file {plot2.table};
			
			\draw (1,.32) node[below] {$x_1=1$};
			\draw (1,.5) node[above] {$x_1=0$};
			
			\draw[dashed] (0.7746420901,0.3350563687) -- (0.7746420901,-.15) node[below, below right= -0.1cm and -1.1cm] {\normalsize $\alpha^\star=0.77464\ldots$}; 
		\end{tikzpicture}
		\caption{The profits after $v=1$ for the pure strategies $x_1=1$ and $x_1=0$ as a function of the buying probability $\alpha$ assumed by the market maker (here we set $\eps=1/8$).} 
	\end{subfigure}
	\caption[]{Visualisation for the game from Example~\ref{example:no_pure_equilibrium} that has no equilibrium with pure strategies.} 
	\label{figure:no_pure_equilibrium}
\end{figure}%

\section{Structure of equilibria in single-period models} \label{section:single_period_structure}

In this section, we establish basic properties of equilibria in single-period Kyle games, i.e., $T=1$ and $\cI_1=\{\{1\},\{2\},\ldots,\{N\}\}$, and show that the insider's demands are uniformly bounded 
if the noise trader's demands lie in $[-1,1]$ and the probabilities of $\{-1\},\{1\}$ are bounded away from zero. 
It also becomes apparent that in discrete state Kyle games the selection of $E_Z$ and $E_X$ is a crucial issue.
We denote $E_V:=\{v^N,v^{N-1},\ldots,v^1\}$ and write an insider strategy as $\xi(v,\{x\})$, $v\in E_V$, $x\in E_X$.
Throughout the section, we assume that
\begin{equation*}
	\{-1,1\}\subseteq E_Z\subseteq [-1,1],\qquad\{0\}\in E_X,
\end{equation*}
and use the notation 
\begin{equation*}
	E^\xi_X:=\{x\in E_X : \exists v\in E_V\ \mbox{with}\ \xi(v,\{x\})>0\}.
\end{equation*}

\begin{assumption}\label{24.8.2023.2}
	One has $((E_X+E_Z)-E_X)\cap {\rm conv}(E_Z)\subseteq E_Z$. In other words, for every\
	$x_1,x_2\in E_X$, $z_1\in E_Z$,
	\begin{equation*}
	x_2\in[x_1+z_1-1,x_1+z_1+1]\quad\implies\quad \exists z_2\in E_Z\ \mbox{such that}\ \
	x_2+z_2 = x_1+z_1.
	\end{equation*}
\end{assumption}

The assumption is satisfied, for instance, if $E_Z$ is an equidistant grid and distances between elements 
of $E_X$ are multiples of the size of this grid. It rules out the effect that by the precise knowledge of $x+z$ 
the market maker can infer $x$, although $x+z$ is no extreme point of $E_X+E_Z$. 

\begin{lemma}\label{14.8.2023.1}
	Let $(\xi,S)$ be an equilibrium in a discrete single-period Kyle game. We have that 
	\begin{itemize}
		\item[(i)] $\xi(v^i,\{x\})>0$, $v^i<v^j$\ $\implies$\ $\xi(v^j,E_X\cap[x,\infty))=1$\quad for all $v^i,v^j\in E_V, x\in E_X$\\
		$\xi(v^i,\{x\})>0$, $v^i>v^j$\ $\implies$\ $\xi(v^j,E_X\cap(-\infty,x])=1$\quad for all $v^i,v^j\in E_V, x\in E_X$
		\item[(ii)] $x>0$, $\xi(v,E_X\cap[x-2,x))=0$ for all $v\in E_V$ 
		$\implies$\ $\int S(x'+z)\zeta(dz)=v'$\\ for all $x\in E_X$, $x'\in E_X\cap[x,\infty)$, $v'\in E_V$ with $\xi(v',\{x'\})>0$\\
		(the inverse assertion for $x<0$ holds as well)
		\item[(iii)] For $y_1,y_2\in E_Y$ with $p^\xi_Y(y_1),p^\xi_Y(y_2)>0$ and $y_2\ge y_1+2$, we have $S(y_1)\le S(y_2)$.
		\item[(iv)] For $y_1,y_2\in E_Y$ with $p^\xi_Y(y_1),p^\xi_Y(y_2)>0$, $y_2\ge y_1$, and $S(y_1)=v^1$, we have $S(y_2)=v^1$.\\
		For $y_1,y_2\in E_Y$ with $p^\xi_Y(y_1),p^\xi_Y(y_2)>0$, $y_2\ge y_1$ and $S(y_2)=v^N$, we have $S(y_1)=v^N$.
	\end{itemize}
\end{lemma}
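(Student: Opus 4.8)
Put $\ol S(x):=\int S(x+z)\,\zeta(dz)$ for $x\in E_X$; this is well defined since $x+z\in E_Y$ for all $z\in E_Z$. As each state $v^i$ is revealed with probability $\nu(\{i\})>0$, profit maximisation (Definition~\ref{25.6.2024.2}(i)) forces, for every $v\in E_V$, that $\supp\xi(v,\cdot)\subseteq\argmax_{x\in E_X}\pi(v,x)$, where $\pi(v,x):=x\,(v-\ol S(x))$. I also use rational pricing in its Bayesian form $S(y)=\bE[v\mid Y=y]$ at every realised $y$, where $Y=X+Z$ is the total order flow, $X$ is the insider's demand (so $\bP(X=x\mid v)=\xi(v,\{x\})$), and $Z$ is the noise demand, independent of $(v,X)$. \emph{Part~(i)} is a single-crossing argument: if $x\in\supp\xi(v^i,\cdot)$, $x'\in\supp\xi(v^j,\cdot)$ and $v^i<v^j$, then $\pi(v^i,x)\ge\pi(v^i,x')$ and $\pi(v^j,x')\ge\pi(v^j,x)$; adding and cancelling the $\ol S$-terms gives $x\,(v^i-v^j)\ge x'\,(v^i-v^j)$, hence $x\le x'$ as $v^i-v^j<0$. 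So every point of $\supp\xi(v^j,\cdot)$ dominates every point of $\supp\xi(v^i,\cdot)$ — the first assertion; the second is symmetric. I record the corollary used below: if $x<x'$ are both played with positive probability, every value playing $x$ is $\le$ every value playing $x'$, so the conditional law of $v$ given $X=x$ is nondecreasing in $x$ in the first-order stochastic (FOSD) sense.

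\emph{Part~(ii).} Assume $x>0$ and $\xi(v,E_X\cap[x-2,x))=0$ for all $v$. If $x'\ge x$ is played for value $v'$, then playing $0\in E_X$ yields profit $0$, so optimality gives $\pi(v',x')\ge0$, i.e.\ $\ol S(x')\le v'$ (since $x'>0$). For the reverse inequality, note that the $\sigma(Y)$-event $A:=\{Y\ge x-1,\ p^\xi_Y(Y)>0\}$ coincides up to null sets with $\{X\ge x\}$: if $X\ge x$ then $Y=X+Z\ge x-1$, while on $\{Y=y\}$ with $y\ge x-1$ any positive-probability path has demand $\tilde x=y-\tilde z\ge x-2$ with $\tilde z\in E_Z\subseteq[-1,1]$, and $\tilde x\notin[x-2,x)$ by hypothesis, so $\tilde x\ge x$. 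Hence, by the tower property and $Z\perp(v,X)$,
\[
0=\bE\big[(v-S(Y))\,\mathbf{1}_A\big]=\bE\big[(v-\ol S(X))\,\mathbf{1}_{\{X\ge x\}}\big]=\sum_{v'\in E_V}\sum_{x'\ge x}\bP(v=v',\,X=x')\,\big(v'-\ol S(x')\big).
\]
Every summand is $\ge0$ by the first sentence, so each vanishes, giving $\ol S(x')=v'$ whenever $x'\ge x$ and $\xi(v',\{x'\})>0$. The case $x<0$ is the mirror image.

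\emph{Parts~(iii) and~(iv).} For (iii), let $p^\xi_Y(y_1),p^\xi_Y(y_2)>0$ with $y_2\ge y_1+2$. On $\{Y=y_1\}$ one has $X=y_1-Z\le y_1+1$ and on $\{Y=y_2\}$ one has $X=y_2-Z\ge y_2-1\ge y_1+1$; moreover on $\{Y=y\}\cap\{X=x\}$ the conditional law of $v$ equals its law given $X=x$ (as $Z\perp(v,X)$). So the law of $v$ given $Y=y_2$ is a mixture of the laws of $v$ given $X=x$ over $x\ge y_1+1$, and the law of $v$ given $Y=y_1$ is such a mixture over $x\le y_1+1$; by the corollary in part~(i) each of the former FOSD-dominates each of the latter (the single possibly shared level $x=y_1+1$ is harmless, the laws then being equal), hence $S(y_1)=\bE[v\mid Y=y_1]\le\bE[v\mid Y=y_2]=S(y_2)$. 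For (iv): $S(y_1)=v^1$ together with $v^i\le v^1$ for all $i$ forces every positive-probability path to $y_1$ to carry value $v^1$; fix such a path $(v^1,x^0,z^0)$, so $x^0=y_1-z^0\le y_1+1$. Suppose some realised $y_2\ge y_1$ has $S(y_2)<v^1$; then a positive-probability path reaches $y_2$ with value $v^j<v^1$ and demand $x_j$, so $x_j\ge y_2-1\ge y_1-1$, while part~(i) (second bullet), applied with $v^1$ at its smallest demand $\hat x:=\min\supp\xi(v^1,\cdot)\le x^0$, forces $x_j\le\hat x\le x^0\le y_1+1$. Thus $x_j-y_1\in[-1,1]=\mathrm{conv}(E_Z)$, so Assumption~\ref{24.8.2023.2} (with $x_1:=x^0$, $z_1:=z^0$, $x_2:=x_j$) produces $z_2:=y_1-x_j\in E_Z$; then $(v^j,x_j,z_2)$ is a positive-probability path reaching $y_1$ with value $v^j<v^1$, contradicting $S(y_1)=v^1$. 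The $v^N$ statement follows by the mirror-image argument, applied at the largest demand of the value that reaches $y_2$.

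\textbf{Expected main obstacle.} Part~(iv) should be the crux: one must fabricate a spurious path to $y_1$ (resp.\ $y_2$) from the demand of a wrong-value path, and Assumption~\ref{24.8.2023.2} is exactly what guarantees that the needed noise increment lies in $E_Z$ — without it the market maker could infer the insider's demand from a non-extreme total order flow and the monotone-price claim would fail. Secondary care is needed for the measurability and sign bookkeeping around the event $A$ in part~(ii), and for the borderline case $y_2=y_1+2$ in part~(iii) where one demand level may feed both order flows.
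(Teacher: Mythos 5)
Your proposal is correct, and for parts (i), (iii) and (iv) it follows essentially the paper's own route: (i) is the same single-crossing argument; (iii) compares the conditional laws of $v$ given $Y=y_1$ and $Y=y_2$ through the insider demands compatible with each order flow (the paper pivots on the largest demand compatible with $y_1$ and compares conditional expectations directly, you phrase it as first-order stochastic dominance of mixtures --- same content, including the harmless shared level $x=y_1+1$); and (iv) is the same contradiction scheme in which Assumption~\ref{24.8.2023.2} manufactures a noise increment $z_2=y_1-x_j\in E_Z$ so that a wrong value reaches $y_1$, your version pivoting on the minimal demand of $v^1$ being, if anything, stated more carefully than the paper's two-case argument. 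The genuine difference is part (ii): the paper runs an iterative construction (the smallest played demand $x_2\ge x$, its smallest value $v^\star$, the largest demand $x_3$ of $v^\star$, then the next gap at $x_4>x_3+2$ obtained via Assumption~\ref{24.8.2023.2}), which yields along the way the stronger pointwise conclusion $S(x'+z)=v^\star$ for all $z\in E_Z$ and that all values played above the gap coincide. You instead note that the gap hypothesis together with $E_Z\subseteq[-1,1]$ makes $\{Y\ge x-1\}$ coincide almost surely with $\{X\ge x\}$, sum the rational-pricing identity over this event, and invoke insider optimality ($\int S(x'+z)\zeta(dz)\le v'$ for every played $x'\ge x>0$, since trading nothing is feasible) to force every nonnegative summand to vanish. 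This is shorter, does not use Assumption~\ref{24.8.2023.2} at all for (ii), and proves exactly the averaged identity stated in the lemma, which is also all that is needed later in the proof of Theorem~\ref{theorem:single_period_structure}; what it does not deliver is the paper's pointwise by-product.
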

Property~(i) says that the insider's demand is nondecreasing in the true value she observes. Property~(ii) states that the gaps between the
insider's order sizes (that depend on the true value) should not be larger than the range of the noise trader's order sizes. 
Otherwise, the market maker could infer the true value and profits vanish.  
Maybe surprisingly, the price function~$S$ is in general not nondecreasing, see (Counter-)Example~\ref{example:S_not_monotone}, but we have the weaker properties~(iii) and (iv).  
\begin{proof}[Proof of Lemma~\ref{14.8.2023.1}]
	Ad (i). By symmetry, we only have to prove the first implication. Consider the gain function 
	\begin{equation}\label{1.9.2023.1}
	x\mapsto x(v-\int S(x+z)\zeta(dz)).
	\end{equation}
	If $x$ is a maximiser (not necessarily unique) for $v=v^i$, it strictly dominates $x'< x$ for $v=v^j$.
	
	Ad (ii). Let $x_1\in E_X\cap(0,\infty)$ with $\xi(v,E_X\cap[x_1-2,x_1))=0$ for all $v\in E_V$. Define $x_2:=\inf\{x\in E^\xi_X\cap[x_1,\infty)\}$, $v^\star:=\inf\{v\in E_V : \xi(v,\{x_2\})>0\}$, and $x_3:=\sup\{x\in E_X : \xi(v^\star,\{x\})>0\}$. 
	Here, the case $x_2=\infty$ is trivial and thus excluded.
	By optimality of $\xi$, $0\in E_X$, and part~(i), we have that 
	\begin{equation}\label{24.8.2023.1}
	\int S(x+z)\zeta(dz)\le v^\star\le v\ \mbox{for all}\ x\in E_X\cap[x_1,x_3], v\in E_V\ \mbox{with}\ \xi(v,\{x\})>0.
	\end{equation}
	On the other hand, for each $x\in E^\xi_X\cap[x_1,x_3]$, $z\in E_Z$, the price $S(x+z)$ lies in the convex hull of $\{v\in E_V : \exists x'\in E_X\ \mbox{with}\ \xi(v,\{x'\})>0, z'\in E_Z\ \mbox{such that}\ x'+z' = x+z\}$ by rational pricing (cf. Definition~\ref{definition:rational_pricing_discrete} and observe that the set is nonempty by $x\in E^\xi_X$).
	Consequently, for each $x\in E^\xi_X\cap[x_1,x_3]$, the average price~$\int S(x+z)\zeta(dz)$ lies in the convex hull of
	$M_x:=\{v\in E_V : \exists x'\in E_X\ \mbox{with}\ \xi(v,\{x'\})>0\ \exists z,z'\in E_Z\ \mbox{such that}\ x'+z'=x+z\}$. But for $v\in M_x$ we must have that $v\ge v^\star$ by $x'\ge x-2  \ge x_1-2$ and again part~(i). Together with (\ref{24.8.2023.1}), we obtain that $v=v^\star$ for all
	$x\in E^\xi_X\cap[x_1,x_3]$ and $v\in M_x$. This implies equality in
	(\ref{24.8.2023.1}) and even more that
	\begin{equation}\label{23.8.2023.1}
	S(x+z) = v^\star = v\ \mbox{for all}\ x\in E_X\cap[x_1,x_3], v\in E_V\ \mbox{with}\ \xi(v,\{x\})>0, z\in E_Z.
	\end{equation}
	Now, define $x_4:=\inf\{x\in E_X : \xi(v,\{x\})>0\ \mbox{for some\ }v\in E_V\cap(v^\star,\infty)\}$. We want to show that $x_4>x_3+2$ and assume by contradiction that this does not hold. Then, by $x_4\ge x_3$ and Assumption~\ref{24.8.2023.2}, there would exist a $z'\in E_Z$ such that 
	$x_4+z' = x_3+1$ which would imply that $S(x_3+1)>v^\star$, a contradiction to (\ref{23.8.2023.1}).
	By definition of $x_3$ and $x_4$, the estimate~$x_4>x_3+2$ yields $\xi(v,E_X\cap[x_4-2,x_4))=0$ for all $v\in E_V$. This means that $x_4>x_1$ satisfies the properties that we required for $x_1$. By proceeding analogously, we obtain the assertion because of (\ref{23.8.2023.1}).
	
	Ad (iii). Let $x_1:=\sup\{x\in E^\xi_X : \exists z\in E_Z\ \mbox{such that}\ x+z=y_1\}\le y_1+1\le y_2-1$. Part~(i) and basic properties of the conditional expectation yield that 
	\begin{equation*}
	S(y_1) \le \frac{\sum_i \nu(v^i)\xi(v^i,\{x_1\})v^i}{\sum_i \nu(v^i)\xi(v^i,\{x_1\})}\le S(y_2). 
	\end{equation*}
	Ad (iv). By symmetry, we only have to prove the first implication. We assume by contradiction that there exist $x_2\in E^\xi_X$ with $\xi(v^1,\{x_2\})<1$ and $z_2\in E_Z$ such that $x_2+z_2=y_2$. By $p^\xi_Y(y_1)>0$, there must exist $x_1\in E^\xi_X$ and $z_1\in E_Z$ such that $x_1+z_1=y_1$.
	
	{\em Case} $x_1\le x_2$. It follows from part~(i) that $\xi(v^1,\{x_1\})<1$, a contradiction to $S(y_1)=v^1$.\\
	{\em Case} $x_1>x_2$. We have that $x_2\in[y_1-1,y_1+1]$. By Assumption~\ref{24.8.2023.2}, there exists a $z'\in E_Z$ with $x_2+z'=y_1$ 
	that is a contradiction to $S(y_1)=v^1$.
\end{proof}	

\begin{theorem}\label{theorem:single_period_structure}
	In any equilibrium~$(\xi,S)$ of a discrete single-period Kyle game, either insider's buy orders cannot be executed at a price below the maximal true value, i.e., $S(x+z)=v^1$ for all $x\in E_X\cap(0,\infty)$, $z\in E_Z$,  or their sizes are bounded by $6+6/\zeta(\{1\})$, i.e., $E^\xi_X\cap(0,\infty)\subseteq (0,6+6/\zeta(\{1\}))$. Analogously, either insider's sell orders cannot be executed at a price above the minimal true value or their sizes are bounded by $6+6/\zeta(\{-1\})$. 
\end{theorem}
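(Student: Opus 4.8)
The plan is to reduce the dichotomy to a statement about the insider's optimal gain at the highest value, and then to extract the quantitative bound from the structure of the price function near the largest order the insider uses. By symmetry it suffices to treat buy orders; the sell‑order case follows by replacing $x\mapsto -x$, $v^1\mapsto v^N$ and $\zeta(\{1\})\mapsto\zeta(\{-1\})$. Write $\ol S(x):=\int S(x+z)\,\zeta(dz)$ for $x\in E_X$ and set $P_1:=\max_{x\in E_X}x\big(v^1-\ol S(x)\big)$, the insider's optimal expected gain when she observes $v^1$. Since $S\le v^1$ we have $\ol S\le v^1$, so $x(v^1-\ol S(x))\le 0$ for $x\le 0$ and hence $P_1\ge 0$. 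The first alternative in the theorem is \emph{equivalent} to $P_1=0$: if $S(x+z)=v^1$ for all $x>0$ and $z\in E_Z$, then $\ol S(x)=v^1$ and the gain function vanishes on $E_X\cap(0,\infty)$, so $P_1=0$; conversely, $P_1=0$ forces $x(v^1-\ol S(x))\le 0$, i.e.\ $\ol S(x)\ge v^1$, hence $\ol S(x)=v^1$, for every $x\in E_X\cap(0,\infty)$, and since $\zeta>0$ and $S\le v^1$ this gives $S(x+z)=v^1$ for all $z\in E_Z$. So it remains to prove: if $P_1>0$, then $E^\xi_X\cap(0,\infty)\subseteq\big(0,\,6+6/\zeta(\{1\})\big)$.

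Assume $P_1>0$ and $E^\xi_X\cap(0,\infty)\neq\emptyset$, and put $x_{\max}:=\max\big(E^\xi_X\cap(0,\infty)\big)$. Since $x_{\max}$ is used with positive probability at some value and is the largest order ever used, Lemma~\ref{14.8.2023.1}(i) yields $\xi(v^1,\{x_{\max}\})>0$; by profit maximisation $x_{\max}$ therefore maximises $x\mapsto x(v^1-\ol S(x))$, so $P_1=x_{\max}\big(v^1-\ol S(x_{\max})\big)$ and, as $P_1>0$, $\ol S(x_{\max})<v^1$. Next I would rule out large gaps in the insider's order sizes: if $E^\xi_X\cap[x-2,x)=\emptyset$ for some $x\in E^\xi_X\cap(0,\infty)$, then $\xi(v,E_X\cap[x-2,x))=0$ for all $v$, and Lemma~\ref{14.8.2023.1}(ii) applied with the larger order $x_{\max}\ge x$ and the value $v^1$ forces $\ol S(x_{\max})=v^1$, a contradiction. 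Hence for every $x\in E^\xi_X\cap(0,\infty)$ there is $x'\in E^\xi_X$ with $x-2\le x'<x$, and iterating downward from $x_{\max}$ gives a chain $x_{\max}=x^{(m)}>x^{(m-1)}>\cdots$ of distinct positive orders with consecutive gaps at most~$2$.

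The decisive step is then the estimate $x_{\max}<6+6/\zeta(\{1\})$. The key observation is that the total demand $x_{\max}+1$ is realised \emph{only} through (insider plays $x_{\max}$, noise plays $+1$), because any $x'\in E^\xi_X$ with $x'+z'=x_{\max}+1$ and $z'\in E_Z\subseteq[-1,1]$ satisfies $x'\ge x_{\max}$, hence $x'=x_{\max}$. By rational pricing this gives $S(x_{\max}+1)=\bE[\,v\mid\text{insider plays }x_{\max}\,]\ge v^\star$, where $v^\star:=\min\{v\in E_V:\xi(v,\{x_{\max}\})>0\}$ (the inequality by Lemma~\ref{14.8.2023.1}(i)), while optimality of $x_{\max}$ at the value $v^\star$ gives $\ol S(x_{\max})\le v^\star\le S(x_{\max}+1)$. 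The plan is to combine this with Lemma~\ref{14.8.2023.1}(iii)--(iv), which controls how $S$ increases along the realised demands below $x_{\max}+1$ and where it reaches $v^1$, with the conditional‑expectation formulas for $S$ at the demands $x_{\max}+z$ ($z<1$) together with the monotonicity of the insider's orders in the true value, and with the insider's no‑deviation inequalities at $v^1$ for the orders $x^{(m-1)},x^{(m-2)},\dots$ of the no‑gap chain, in order to pin down $x_{\max}$ and obtain the explicit bound; here $\zeta(\{1\})$ enters because the demand $x_{\max}+1$ carries weight $\zeta(\{1\})$ in $\ol S(x_{\max})$ and is the one whose price is forced upward. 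The sell‑order statement is symmetric.

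I expect this last step to be the main obstacle. Rational pricing constrains $S$ only on the realised total demands, so controlling $\ol S$ near $x_{\max}$ requires a careful case analysis of which demands around $x_{\max}$ carry positive probability, of how the conditional expectations defining $S$ there interact with the monotonicity and propagation statements of Lemma~\ref{14.8.2023.1}, and of how the insider's no‑deviation conditions at non‑realised demands feed back into the estimate — together with the bookkeeping needed to extract precisely the constant $6+6/\zeta(\{1\})$.
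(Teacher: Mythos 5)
Your reduction and set-up are fine and essentially coincide with the paper's: the dichotomy via $P_1=\max_x x(v^1-\ol S(x))$ matches the paper's case split on $x_1:=\sup\{x>0:\xi(v^1,\{x\})>0,\ \ol S(x)<v^1\}$ (indeed $x_1=x_{\max}$ when $P_1>0$), the no-gap/chain argument via Lemma~\ref{14.8.2023.1}(ii) is the same device the paper uses, and the inequalities $\ol S(x_{\max})\le v^\star\le S(x_{\max}+1)$ are exactly the paper's preparation (its (\ref{24.8.2023.1}) and (\ref{14.8.2023.3})). But the theorem's actual content — the constant $6+6/\zeta(\{1\})$ — is never derived: your third paragraph is a plan, which you yourself flag as ``the main obstacle'', so the proof is incomplete precisely at the decisive step.

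Moreover, the plan as stated points in a direction that does not obviously close the gap. You propose to use the insider's no-deviation inequalities \emph{at the value $v^1$} along the chain; but at $v^1$ a very large order $x_{\max}$ with a small positive margin can legitimately dominate every smaller order, so those inequalities alone do not bound $x_{\max}$. The paper's key idea is to compare instead \emph{at the value $v^\star$}, the smallest value at which $x_{\max}$ is played, against a specific competing order $x_2\in E^\xi_X\cap[x_{\max}-6,x_{\max}-4)$ (its existence is what the gap argument is really for). One must then show $S(x_2+1)\le\ol S(x_{\max})$ and, crucially, $S(x_2+1)<v^\star$ — which requires a genuine case distinction between $v^\star=v^1$ and $v^\star<v^1$, the latter using rational pricing to exhibit a value $v<v^\star$ in the pricing pool at demands near $x_{\max}$ and Lemma~\ref{14.8.2023.1}(i),(iii) to push it down to $S(x_2+1)$. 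With $s:=S(x_2+1)$, optimality of $x_{\max}$ at $v^\star$ gives
\begin{equation*}
0\ \le\ x_{\max}\bigl(v^\star-\ol S(x_{\max})\bigr)-x_2\bigl(v^\star-\ol S(x_2)\bigr)\ \le\ (v^\star-s)\bigl(6-x_2\,\zeta(\{1\})\bigr),
\end{equation*}
where the second inequality uses $x_{\max}-x_2\le 6$, $s\le\ol S(x_{\max})$, the termwise comparison $S(x_2+z)\le S(x_{\max}+z)$ for realised demands (Lemma~\ref{14.8.2023.1}(iii)), and $S(x_{\max}+1)\ge v^\star$ at the single atom $z=1$ — this is exactly where $\zeta(\{1\})$ enters. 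Since $v^\star>s$, this forces $x_2\le 6/\zeta(\{1\})$ and hence $x_{\max}\le 6+6/\zeta(\{1\})$. None of this comparison order, the case analysis for $S(x_2+1)<v^\star$, or the final estimate appears in your proposal, so as it stands the quantitative half of the theorem is unproved.
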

\begin{example}
	Both the true value and the demand of the noise trader are $\pm 1$ with probability $1/2$. In addition, $E_X=\{-2n,-2(n-1),\ldots,0,2,\ldots,2n\}$ for some
	fixed $n\in\bN$. Then, one equilibrium is given by $\xi(v,\cdot)=\delta_{2n} 1_{(v=1)} + \delta_{-2n} 1_{(v=-1)}$ and
	$S(y)=-1_{(y<0)} + 1_{(y>0)}$ for all $y\in E_Y$. 
\end{example}

\begin{proof}[Proof of Theorem~\ref{theorem:single_period_structure}]
	We fix a Kyle game and an equilibrium~$(\xi,S)$ of this game. By symmetry, it is sufficient to show the assertion regarding the insider's buy orders. Let $x_1:=\sup\{x\in E_X\cap(0,\infty) : \xi(v^1,\{x\})>0\ \mbox{and}\ \int S(x+z)\zeta(dz)<v^1\}$. 
	
	{\em Case} $x_1=-\infty$. Let $x\in E^\xi_X\cap(0,\infty)$ (if no such $x$ exists we are done). 
	By Lemma~\ref{14.8.2023.1}(i), we have that $\xi(v^1,E_X\cap[x,\infty))>0$. This means that there is an order size $x'\in E^\xi_X\cap[x,\infty)$ with $\xi(v^1,\{x'\})>0$ that is an optimizer of (\ref{1.9.2023.1}) when the true value takes the maximal value~$v^1$. But, by $x_1=-\infty$, the gain~$v^1-\int S(x'+z)\zeta(dz)$ cannot be positive. Consequently, we must have that $\int S(x''+z)\zeta(dz)=v^1$ for all $x''\in E_X\cap(0,\infty)$
	since otherwise $x'$ could not be optimal given $v^1$. It follows that $S(x''+z)=v^1$ for all $z\in E_Z$. We conclude that the insider's buy orders are never executed at a price below $v^1$. 
	
	{\em Case} $x_1>0$.  Now, we turn to the ``main'' case. W.l.o.g. $x_1\ge 6$ since otherwise $x_1<6+6/\zeta(\{1\})$ and the upper bound could be verified similar to the previous case. Let $v^\star:=\inf\{v\in E_V : \xi(v,x_1)>0\}$. 
	By definition of $x_1$ and $v^\star$, we have that $\int S(x_1+z)\zeta(dz)<v^1$ and $\int S(x_1+z)\zeta(dz)\le v^\star$.  
	Let us find an $x_2\in E_X$ with 
	\begin{equation}\label{28.8.2023.1}
	S(x_2+1)\le\int S(x_1+z)\zeta(dz)\quad\mbox{and}\quad S(x_2+1)<v^\star.
	\end{equation}
	First, consider the case that $v^\star=v^1$. The first inequality in (\ref{28.8.2023.1}) is satisfied by any $x_2\in E^\xi_X\cap(-\infty,x_1-4]$ because of Lemma~\ref{14.8.2023.1}(iii). Lemma~\ref{14.8.2023.1}(ii) 
	guarantees that there exists an $x_2\in E^\xi_X\cap[x_1-6,x_1-4)$ since otherwise the insider could not make a profit on average with $x_1$ when $v^1$ occurs.
	Then, the second inequality in (\ref{28.8.2023.1}) follows from $\int S(x_1+z)\zeta(dz)<v^1$.
	
	We proceed with the case that $v^\star<v^1$. Since $\int S(x_1+z)\zeta(dz)\le v^\star$, $\xi(v^1,\{x_1\})>0$ and by rational pricing (cf. Definition~\ref{definition:rational_pricing_discrete}), there must exist $x\in E^\xi_X$, $z_1,z\in E_Z$, and $v\in E_V\cap(-\infty,v^\star)$  
	with $\xi(v,\{x\})>0$ and $x+z=x_1+z_1$. One has that $x\in E^\xi_X\cap[x_1-2,\infty)$. Consequently, for all $x'\in E^\xi_X\cap(-\infty,x_1-4)$
	we get $S(x'+1)\le v<v^\star$ by Lemma~\ref{14.8.2023.1}(i) and rational pricing.
	Thus, $x_2$ from above can also be taken in the case~$v^\star<v^1$ and (\ref{28.8.2023.1}) is shown. In addition, since the total demand $x_1+1$ can only occur if $v\ge v^\star$, rational pricing yields
	\begin{equation}\label{14.8.2023.3}
	S(x_1+1)\ge v^\star.
	\end{equation}
	Define $s:=S(x_2+1)$ and $\Delta := x_1(v^\star-\int S(x_1+z)\zeta(dz)) - x_2(v^\star-\int S(x_2+z)\zeta(dz))$.
	We have that 
	\begin{align*} 
		\Delta  &= (x_1-x_2)(v^\star-\int S(x_1+z)\zeta(dz)) + x_2( \int S(x_2+z)\zeta(dz) - \int S(x_1+z)\zeta(dz))\\
		&\le  (x_1-x_2)(v^\star-s) +x_2 (s-v^\star)\zeta(\{1\})\\
		&\le  (v^\star-s)(6 - x_2\zeta(\{1\})),
	\end{align*}
	where we use (\ref{14.8.2023.3}) to estimate $S(x_2+z)-S(x_1+z)$ for $z=1$. Since $x_1$ maximises (\ref{1.9.2023.1}) given $v^\star$, we have $\Delta\ge 0$.
	By $v^\star>s$, we obtain that $x_2\le 6/\zeta(\{1\})$ and arrive at $x_1\le 6 + 6/\zeta(\{1\})$. 
\end{proof} 
\begin{remark}
	We note that (\ref{14.8.2023.3}) need not hold for $S(x_1+z)$ 
	with $z<1$ instead of $S(x_1+1)$. This is the reason why the upper bound~$6+6/\zeta(\{1\})$ depends on the probability of the noise trader's demand at the boundary. For a sequence of discrete models, the bound could tend to infinity if $\zeta(\{1\})\to 0$.
\end{remark}
\begin{example}[$\nexists$ nondecreasing equilibrium price function] \label{example:S_not_monotone}
	We provide a minimalist example of a single-period Kyle game in which the equilibrium is unique and the equilibrium price function is strictly decreasing for some demands. The true value takes a high value $v^1:=1$, a medium value $v^2:=\sfrac{1}{2}$, or a low value $v^3 := 0$, with uniform distribution $\nu:=\sfrac{1}{3} \delta_{0} + \sfrac{1}{3} \delta_{\sfrac{1}{2}} + \sfrac{1}{3} \delta_{1}$. Both insider and noise trader can buy or sell one share of the asset or not trade at all, i.e., $E_X := E_Z := \{ -1, 0, 1\}$, and so the market maker observes a total order flow in $E_Y = \{ -2, -1, 0, 1, 2\}$. The crux in our example is that the noise trader shows bearish sentiment and is more likely to sell, namely  
	\begin{equation*}
		\zeta := \frac{6}{8} \delta_{-1} + \frac{1}{8} \delta_{0} + \frac{1}{8} \delta_1.
	\end{equation*}
	This leads to the effect that when the market maker observes $y=0$ he is more likely to believe that the insider and noise trader traded $x=1$ and $z=-1$, respectively, than he would be, ceteris paribus, under uniformly distributed noise trades. The conditional probability of $x=1$ can be higher under the condition $y=0$ than under $y=1$. This is the reason why the example can work although the insider's demand is nondecreasing in the true 
	value (in the sense of Lemma~\ref{14.8.2023.1}(i)) and the market maker's price is the conditional expectation of the true value given $y$. 
	
	In the following, we want to show that the Kyle equilibrium is unique, and its price function~$S$ satisfies $S(0)>S(1)$. For this, let $(\xi, S)$ be an arbitrary equilibrium.

	\emph{Step 1:} Let us first show that $\xi(1, \,\cdot\,)=\delta_1$ and $\xi(0, \,\cdot\,)=\delta_{-1}$, i.e., for the extreme true values 
	buying/selling is the unique optimal action for the insider. 
	Assume by contradiction that $\xi(1, \,\cdot\,)\not=\delta_1$, i.e., buying is at least not the only optimal action when $v=1$. Because $S\le 1$ 
	and $E_X = \{ -1, 0, 1\}$,
	this can only be the case if the profits are zero, i.e., the average price~$\int S(1+z)\zeta(dz)$ equals $1$, and this is equivalent 
	to $S(2)=S(1)=S(0)=1$. 
	But regardless of $\xi(\sfrac{1}{2}, \,\cdot\,)$ and $\xi(0, \,\cdot\,)$, the noise trader always ensures that $y=0$ is reached with positive probability when $v = \sfrac{1}{2}$ or $v = 0$, and so $S(0)<1$ by rational pricing, a contradiction. 
	Hence, we must have that $\xi(1, \,\cdot\,) = \delta_1$. By the same arguments, $\xi(0, \,\cdot\,) = \delta_{-1}$ is the unique optimal action when $v=0$.
	
	\emph{Step 2:} Now we turn to the optimal strategy when $v=\sfrac{1}{2}$. Using the notation $\xi(\sfrac{1}{2}, \,\cdot\,) = \alpha_1 \delta_1 + \alpha_0 \delta_{0} + \alpha_{-1} \delta_{-1}$, with $\alpha_1, \alpha_{0}, \alpha_{-1} \ge 0$ and $\alpha_1+\alpha_{0}+\alpha_{-1}=1$, we can compute the rational prices given $\xi$:
	\begin{align*}
		S(2) & = \frac{1 \cdot 1 + \alpha_1 \cdot \frac{1}{2}}{1 + \alpha_1}= \frac{1  + \frac{1}{2}\alpha_1 }{1 + \alpha_1},\\
		S(1) & = \frac{1 \cdot 1 + (\alpha_1 +\alpha_{0}) \cdot \frac{1}{2}}{1 + \alpha_1 + \alpha_{0}} = \frac{1 + \frac{1}{2}(\alpha_1 +\alpha_{0})}{1 + \alpha_1 + \alpha_{0}} , \\
		S(0) & =  \frac{6 \cdot 1 + (6\alpha_1 +\alpha_{0}+\alpha_{-1}) \cdot \frac{1}{2} + 1 \cdot 0}{6 + 6\alpha_1 + \alpha_{0} + \alpha_{-1} + 1} 
		= \frac{6  + \frac{1}{2} (6\alpha_1 +\alpha_{0}+\alpha_{-1}) }{7 + 6\alpha_1 + \alpha_{0} + \alpha_{-1} },	\\
		S(-1) & = \frac{(6\alpha_{0}+\alpha_{-1}) \cdot \frac{1}{2} + 1 \cdot 0}{6\alpha_{0} + \alpha_{-1}+1} = \frac{\frac{1}{2}(6\alpha_{0}+\alpha_{-1})}{1+ 6\alpha_{0} + \alpha_{-1}},  \\
		S(-2) & = \frac{6\alpha_{-1} \cdot \frac{1}{2} + 6 \cdot 0}{6\alpha_{-1} + 6} = \frac{\alpha_{-1}}{2+2\alpha_{-1}}. 	
	\end{align*}
	We observe that $S(2), S(1), S(0) > \sfrac{1}{2}$ and so the average price
	\begin{equation*}
		\int S(1+z)\zeta(dz) > \frac{1}{2},
	\end{equation*}
	which makes buying in $v=\sfrac{1}{2}$ suboptimal. As a result $\alpha_1=0$.	
	In addition, the above formulas show immediately that $S(0) < 1$, $S(-1) < \sfrac{1}{2}$, and $S(-2) \le \sfrac{1}{4}$. Thus, we get the rough estimate 
	\begin{equation*}
		\int S(-1+z)\zeta(dz) = \frac{1}{8} S(0) + \frac{1}{8} S(-1) + \frac{6}{8} S(-2) < \frac{6}{16}  < \frac{1}{2}.
	\end{equation*}
	Consequently, selling in $v=\sfrac{1}{2}$ is not profitable either, so $\alpha_{-1}=0$, and $\alpha_{0}=1$. In other words, the optimal strategy is $\xi(\sfrac{1}{2}, \,\cdot\,) = \delta_{0}$. In conclusion, the unique Kyle equilibrium $(\xi, S)$ is given by
	\begin{equation*}
		\xi(v, \,\cdot\,) := \delta_{2v-1}, \qquad S(-2)=0,\ S(-1)=\frac{3}{7},\ S(0)=\frac{13}{16},\ S(1)=\frac34,\ S(2)=1,
	\end{equation*}
	also depicted in Figure \ref{figure:example_S_not_monotone}. The price function $S$ is decreasing between $0$ and $1$.
\end{example}%
\noindent\begin{figure}[]%
	\tikzstyle{dot} = [draw, solid, fill, circle, inner sep=0.0cm, outer sep=0cm, minimum size=5pt]
	\tikzstyle{densely dashed}=[dash pattern=on 6pt off 1.5pt]
	\tikzstyle{every node}=[font=\footnotesize]
	\tikzstyle{normal} = [thin, solid]
	\tikzstyle{emph} = [very thick, solid]
	\tikzstyle{level 1}=[level distance=1.25cm, sibling distance=2.3cm]
	\tikzstyle{level 2}=[level distance=1.5cm, sibling distance=.65cm]
	\tikzstyle{level 3}=[level distance=1.25cm, sibling distance=.65cm]
	\begin{subfigure}[b]{0.5\textwidth} \centering
		\begin{tikzpicture}[grow=right]
			\node[dot] (r) {}
			child {
				node[dot, above=.75cm, label={[below=5pt] {$v=0$}}] (v0) {}
				child {
					node[dot, below=.5cm, label={[above=0pt] {$-1$}}] (x0) {}
					child {
						node[dot, label={[right=.5cm] {\normalsize $-2$}}] () {} node[above] {$-1$}
						edge from parent[normal]
					}
					child {
						node[dot, label={[right=.5cm] {\normalsize $-1$}}] () {} node[above=2pt] {$0$}
						edge from parent[normal]
					}
					child {
						node[dot, label={[right=.5cm] {\normalsize $0$}}] () {} node[above=2pt] {$1$}
						edge from parent[normal]
					}
					edge from parent[emph] 
				}
			}
			child {
				node[dot, label={[above=-1pt] {$v=\frac{1}{2}$}}] (v12) {}
				child {
					node[dot, label={[above=0pt] {$0$}}] (2) {}        
					child {
						node[dot, label={[right=.5cm] {\normalsize $1$}}] {} node[above] {$-1$}
						edge from parent[normal] 
					}
					child {
						node[dot, label={[right=.5cm] {\normalsize $0$}}] {} node[above=2pt] {$0$}
						edge from parent[normal] 
					}
					child {
						node[dot, label={[right=.5cm] {\normalsize $1$}}] {} node[above=2pt] {$1$}
						edge from parent[normal] 
					}
					edge from parent[emph]
				}
			}
			child {
				node[dot, below=.75cm, label={[above=0pt] {$v=1$}}] (v1) {}
				child {
					node[dot, above=.5cm, label={[above=0pt] {$1$}}] (x1) {}
					child {
						node[dot, label={[right=.5cm] {\normalsize $0$}}] () {} node[above] {$-1$}
						edge from parent[normal]
					}
					child {
						node[dot, label={[right=.5cm] {\normalsize $1$}}] () {} node[above=2pt] {$0$}
						edge from parent[normal]
					}
					child {
						node[dot, label={[right=.5cm] {\normalsize $2$}}] (yt) {} node[above=2pt] {$1$}
						edge from parent[normal]
					}
					edge from parent[emph]
				}
			};
			\node [above=1.25cm of v1.center] (X) {\normalsize $X$};
			\node [above=1cm of x1.center] (Z) {\normalsize $Z$};
			\node [below=.5cm of v0.center] (vs) {};
			\node [below=.25cm of x0.center] (xs) {};
			\node [above right=.5cm and .5cm of yt] () {\normalsize $y$};
			\node () [rectangle, thick, rounded corners, draw, loosely dotted, fit=(X) (vs), inner xsep=5pt, inner ysep=8pt] {};
			\node () [rectangle, thick, rounded corners, draw, loosely dotted, fit=(Z) (xs), inner xsep=5pt, inner ysep=8pt] {};
		\end{tikzpicture}
		\caption{The optimal (non-randomising) strategy of the insider in the game tree.}
	\end{subfigure}
	\begin{subfigure}[b]{0.5\textwidth} \centering
		\tikzstyle{dot} = [draw, solid, fill, circle, inner sep=0.0cm, outer sep=0cm, minimum size=5pt]
		\tikzstyle{every node}=[font=\footnotesize]
		\begin{tikzpicture}				
			\draw[->, thick] (-3,-2.5) -- (2.5,-2.5) node[right] {\normalsize $y$};
			\draw[->, thick] (-2.5,-3) -- (-2.5,2.5) node[above] {\normalsize $S(y)$};
			
			\foreach \pos in {-2,-1,0,1,2}
			\draw[shift={(\pos,-2.5)}] (0pt,2pt) -- (0pt,-2pt) node[below] {$\pos$};
			
			\draw[shift={(-2.5,-2)}] (2pt,0pt) -- (-2pt,0pt) node[left] {$0$};
			\draw[shift={(-2.5,0)}] (2pt,0pt) -- (-2pt,0pt) node[left] {$\frac{1}{2}$};	
			\draw[shift={(-2.5,2)}] (2pt,0pt) -- (-2pt,0pt) node[left] {$1$};	
			
			\path (-2,-2) node[dot] (-2) {} 
			(-1,-2/7) node[dot] (-1) {} 
			(0,10/8) node[dot] (0) {} 
			(1,2/2) node[dot] (+1) {} 
			(2,2) node[dot] (+2) {};
			\draw[dashed] (-2.center) -- (-1.center) -- (0.center) -- (+1.center) -- (+2.center);
		\end{tikzpicture}
		\caption{The rational price function $S$ is decreasing between $y=0$ and $y=1$.}
	\end{subfigure}
	\caption[]{The unique equilibrium from Example~\ref{example:S_not_monotone}.} 
	\label{figure:example_S_not_monotone}
\end{figure}
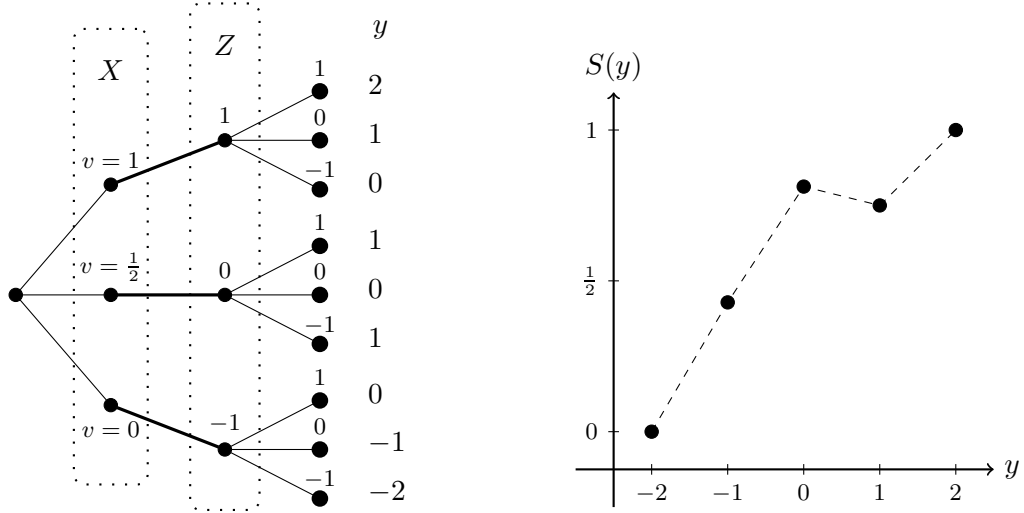%
\section{Continuous state game} \label{section:continuous_state_game}

In this section, we prove the existence of an equilibrium in the single-period Kyle game when the true value and the noise trader's demand are probability measures over the Borel $\sigma$-algebra~$\cB(\bR)$ with compact but not necessarily finite support. The main challenge is that the bounded price function of the market maker in the continuous game can only be expected to be measurable but in general not a continuous function of the total demand. Consequently, the insider's profit need not be continuous in her demand. The set of $[-1,1]$-valued Borel-measurable functions equipped with pointwise convergence almost everywhere is not sequentially compact. The standard example is a sequence of Rademacher functions (see, e.g., Example~3.2 in \citep{balder_new_2021}) that cannot have a convergent subsequence. Because of the lack of compactness of the set of price functions, we cannot apply standard infinite-dimensional fixed point theorems of Schauder-Tychonoff's or Kakutani-Fan's type (see, e.g., Theorem~10.1 and Theorem~13.1 in \cite{pata_fixed_2019}, respectively) directly to the continuous game. 

Instead, we consider equilibria of a sequence of finite Kyle games that are derived from the continuous game by discretising the true value and the noise trader's demand.
We assume that the noise trader's demand is absolutely continuous w.r.t. the Lebesgue measure~$\lambda$. This tames the expected gains of the insider, and a weak limit of her discrete equilibrium strategies (that exists along a subsequence by compactness of probability distributions on $[-1,1]$) is part of an equilibrium in the continuous model. On the other hand, after passing to forward convex combinations, the discrete equilibrium price functions of the market maker possess a pointwise limit $\lambda$-almost everywhere. Together with the insider strategy, it is an equilibrium price function in the continuous model.

\smallskip

Let us introduce the continuous state model. Let $E_V := [0,1]$, $\nu \in \cP(E_V, \cB(E_V))$, $E_Z := [-1,1]$, and $\zeta \in \cP(E_Z, \cB(E_Z))$. We assume that
\begin{equation*}
	\zeta\ \text{is absolutely continuous w.r.t. the Lebesgue measure } \lambda, \text{ i.e.,\ }\zeta \ll \lambda.
\end{equation*}%

\begin{definition}[Insider strategy] \label{definition:young_measure}
	Let $E_X=[\underline{x},\overline{x}]$ for some fixed $\underline{x},\overline{x}\in\bZ$ with $\underline{x}<0<\overline{x}$. Based on $(E_V, \cB(E_V), \nu)$, an insider strategy is a \textit{Young measure} $\xi: E_V \times \cB(E_X) \to [0,1]$, i.e., 
	\begin{enumerate}[(i)]
		\item $\xi(\;\cdot\;, B)$ is Borel-measurable for every $B \in \cB(E_X)$,
		\item $\xi(v, \;\cdot\;)$ is a probability measure for every $v \in E_V$.
	\end{enumerate}
	With $\Xi:=\Xi(E_V, \nu; E_X)$ we denote the set of insider strategies.
\end{definition}

The interpretation is analogous to that of discrete behaviour strategies (see Definition \ref{definition:behaviour_strategy}). We refer to \cite{balder_new_2021} and the references therein for an overview of the theory of Young measures and their applications in 
optimal control theory. The market maker observes the total order flow $y := x + z \in E_Y := E_X+E_Z = [\underline{x}-1,\overline{x}+1]$ and sets a price.

\begin{definition}[Price function of market maker]
	A price function is a Borel-measurable function
	\begin{equation*} 
		S : E_Y  \to E_V, \qquad
		y  \mapsto S(y).
	\end{equation*}
	With $\cS$ we denote the set of price functions.
\end{definition}

Given a price function $S \in \mathcal{S}$, the insider's objective is to maximise the \emph{expected utility}
\begin{equation} \label{equation:expected_utility}
	u(\xi, S) := \iiint \left[v - S(x+z)\right] x \;\zeta(dz) \xi(v, dx) \nu(dv) \rightarrow \max_{\xi \in \Xi} !
\end{equation}
Analogous to the discrete Kyle game, a price function~$S$ is \emph{rational} assuming a strategy $\xi \in \Xi$ of the insider if for all $A \in \cB(E_Y)$:
\begin{equation} \label{equation:rational_pricing_continuous}
	\iiint S(x+z) \Ind{A}(x+z) \;\zeta(dz)\xi(v, dx)\nu(dv) = \iiint v \;\Ind{A}(x+z) \; \zeta(dz)\xi(v, dx)\nu(dv).
\end{equation}
An equilibrium in the continuous state Kyle game has the familiar structure of simultaneously requiring optimality for the insider, and rational pricing for the market maker:

\begin{definition}[Continuous Kyle equilibrium]
	A \emph{continuous Kyle equilibrium} is a pair $(\xi_\star, S_\star) \in \Xi \times \mathcal{S}$ satisfying 
	\begin{enumerate}[(i)]
		\item Profit maximisation: Given $S_\star$, the strategy $\xi_\star$ maximises (\ref{equation:expected_utility}).
		\item Rational pricing: Given $\xi_\star$, the price system~$S_\star$ is rational according to (\ref{equation:rational_pricing_continuous}),
	\end{enumerate} 
\end{definition}

The main result of this section is the following theorem, which we prove in the remainder of the section.

\begin{theorem} \label{theorem:continuous_trade}
	The single-period, continuous state Kyle game admits a Kyle equilibrium. 
\end{theorem}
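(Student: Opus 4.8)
The plan is to realise the continuous game as a limit of finite Kyle games, to which Theorem~\ref{theorem:discrete_trade} applies, and to pass to the limit carefully, compensating for the fact that the set $\cS$ of price functions is not compact. For each $n$ I would fix the common $\tfrac1n$-grid on $E_V,E_X,E_Z$ (so $E_Y^n:=E_X^n+E_Z^n$ lies on the same grid and Assumption~\ref{24.8.2023.2} holds), let $\nu^n,\zeta^n$ be the cell-mass quantisations of $\nu,\zeta$ onto these grids (perturbed by a vanishing uniform component so that $\nu^n,\zeta^n>0$), so that $\nu^n\Rightarrow\nu$ and $\zeta^n\Rightarrow\zeta$. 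By Theorem~\ref{theorem:discrete_trade} the $n$-th game has an equilibrium $(\xi^n,S^n)$; I would view $\xi^n$ as a Young measure on $E_V\times\cB(E_X)$ (constant on the cells of the $E_V$-grid) and extend $S^n\in\cS$ so that it is constant on the cells of the $E_Y$-grid. The reason for the cell-mass discretisation is that then
\[
\iiint g(x+z)\,\zeta^n(dz)\,\xi^n(v,dx)\,\nu^n(dv)=\iiint g(x+z)\,\zeta(dz)\,\xi^n(v,dx)\,\nu^n(dv)
\]
for every $g$ constant on the $E_Y$-cells, in particular for $S^n$ and for $x\mapsto x\,S^n(x)$; hence the mass-, value- and demand-weighted order-flow measures on $E_Y$ induced by $(\xi^n,\zeta^n,\nu^n)$ may be replaced, for our computations, by the ones induced by $(\xi^n,\zeta,\nu^n)$, which --- since $\zeta\ll\lambda$ --- are absolutely continuous with densities $\gamma^n,\psi^n,\lambda^n$ obtained by convolving $\phi:=d\zeta/d\lambda$ with the $x$-marginals of the (possibly $v$-weighted) joint law of $(v,x)$.

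Next I would extract limits along a subsequence. Since $E_X$ is compact, the Young measures $\xi^n$ over the fixed base $(E_V,\nu)$ are sequentially compact in the stable topology, so $\xi^n\to\xi_\star$ stably; together with $\nu^n\Rightarrow\nu$ this gives weak convergence of the joint laws $\mu^n$ of $(v,x)$ to $\mu_\star:=\xi_\star\otimes\nu$. The sequence $(S^n)$ is bounded in $L^1(E_Y,\lambda)$ but, the set of $[0,1]$-valued measurable functions being non-compact for a.e.\ convergence (Rademacher functions), need not converge; here Koml\'os' theorem enters: passing to a further subsequence, the forward Ces\`aro means $\check S^k:=\tfrac1k\sum_{j=1}^k S^{n_j}$ converge $\lambda$-a.e.\ and in $L^1(E_Y,\lambda)$ to some $S_\star\in\cS$ with $0\le S_\star\le 1$. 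The analytic engine is that convolution with the fixed kernel $\phi\in L^1$ is continuous from bounded signed measures on a compact set (weak topology) into $L^1(\bR)$; hence $\gamma^n\to\gamma_\star$, $\psi^n\to\psi_\star$, $\lambda^n\to\lambda_\star$ in $L^1$, the limits being the densities attached to $(\xi_\star,\zeta,\nu)$.

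It then remains to verify the two equilibrium conditions for $(\xi_\star,S_\star)$. For rational pricing, I would average the finite-game identity $\int_B S^{n_j}\,d\Gamma^{n_j}=\Psi^{n_j}(B)$ over $j=1,\dots,k$ and let $k\to\infty$: using $\int_B S^{n_j}\gamma^{n_j}=\int_B S^{n_j}\gamma_\star+o(1)$ (from $\gamma^{n_j}\to\gamma_\star$ in $L^1$ and $\|S^{n_j}\|_\infty\le1$), then $\tfrac1k\sum_j\int_B S^{n_j}\gamma_\star=\int_B\check S^k\gamma_\star\to\int_B S_\star\gamma_\star$ (dominated convergence), and $\Psi^{n_j}(B)\to\Psi_\star(B)$ for subintervals $B$ of $E_Y$, one obtains $\int_B S_\star\,d\Gamma_\star=\Psi_\star(B)$; a Dynkin-class argument extends this to all $B\in\cB(E_Y)$, i.e.\ $(\ref{equation:rational_pricing_continuous})$. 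For profit maximisation, writing $u(\xi,S)=\iint xv\,\xi(v,dx)\nu(dv)-\int S\,d\Lambda_\xi$ and rerunning the same Ces\`aro/convolution computation yields $\tfrac1k\sum_j u(\xi^{n_j},S^{n_j})\to u(\xi_\star,S_\star)$ on one side and, for any competitor $\widetilde\xi\in\Xi$ discretised to $\widetilde\xi^{n_j}\in\Xi^{n_j}$ with $\widetilde\xi^{n_j}\otimes\nu^{n_j}\Rightarrow\widetilde\xi\otimes\nu$, $\tfrac1k\sum_j u(\widetilde\xi^{n_j},S^{n_j})\to u(\widetilde\xi,S_\star)$ on the other; since $u(\xi^{n_j},S^{n_j})\ge u(\widetilde\xi^{n_j},S^{n_j})$ for every $j$, passing to the limit gives $u(\xi_\star,S_\star)\ge u(\widetilde\xi,S_\star)$, so $\xi_\star$ is a best response to $S_\star$.

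I expect the main obstacle to be exactly the non-compactness of $\cS$: because $(S^n)$ can only be controlled through forward convex combinations, the tight coupling ``$S^n$ rational for $\xi^n$ and $\xi^n$ optimal against $S^n$'' is a priori lost, and the delicate part is to make the two different limiting procedures --- stable convergence of the Young measures $\xi^n$ and Koml\'os convergence of the $S^n$ --- cooperate so that a \emph{single} pair $(\xi_\star,S_\star)$ inherits both properties at once, including the correct behaviour of $S_\star$ off $\supp\Gamma_\star$, where the price is not pinned down by rational pricing but must still deter deviations. What makes it feasible is the absolute continuity of $\zeta$, used in two ways: it lets one choose the noise discretisation so that the relevant order-flow measures are absolutely continuous --- so the badly behaved $S^n$ is only ever integrated against well-behaved measures --- and it makes $\mu\mapsto\mu*\phi$ continuous into $L^1$, which equivalently turns the averaged price $x\mapsto\int S(x+z)\,\zeta(dz)$ into a continuous function of the insider's action, taming her optimisation problem.
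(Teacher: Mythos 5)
Your proposal is correct and follows essentially the same route as the paper: discretise on grids, take equilibria of the finite games from Theorem~\ref{theorem:discrete_trade}, combine narrow (stable) compactness of the Young measures with a Koml\'os-type limit for the uniformly bounded price functions, and pass to the limit in both equilibrium conditions via a Dynkin argument and a discretised competitor, everything hinging on $\zeta\ll\lambda$. The only substantive difference is cosmetic --- you run the smoothing through $L^1$-convergent order-flow densities (convolution with $d\zeta/d\lambda$), while the paper keeps it on the strategy side by showing $x\mapsto x\int S^n(x+z)\,\zeta(dz)$ is uniformly bounded and equicontinuous and invoking Lemma~\ref{lemma:rao_theorem} --- and two small points need tidying: the vanishing uniform perturbation of $\nu^n,\zeta^n$ destroys the exact cell-mass identity you then use (either drop it by restricting to atoms of positive mass or carry the resulting $o(1)$ terms), and with non-nested $1/n$-grids the finite-game pricing identity transfers to the absolutely continuous order-flow measures only for grid-aligned sets $B$, so either use nested dyadic grids as in the paper or control the vanishing boundary-cell errors.
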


\subsection{Discretisation and embedding} \label{subsection:results_proofs}

We construct a sequence of discrete state games acting on the refining sequence of \emph{dyadic, equidistant grids}
\begin{equation*}
	\begin{gathered}
	E_V^n := \left\{ \frac{k}{2^{n}}: k = 0,1,\ldots, 2^n\right\},\quad 
	E_Z^n := \left\{ \frac{k}{2^n}: k = -2^n, \ldots, 0, \ldots, 2^n\right\},\\
	\text{and}\quad E_X^n := \left\{ \frac{k}{2^n}: k = 2^n \underline{x}, \ldots, 0, \ldots, 2^n \overline{x}\right\},
	\quad n\in\bN.
	\end{gathered}
\end{equation*}
We embed the discrete strategies and price functions into the continuous model through piecewise constant continuation between the points of the $n$\textsuperscript{th} grid:
\begin{equation}\label{equation:embedding}
	\begin{aligned}
		\Xi^n &:= \left\{ \xi \in \Xi :\; \xi(v, \,\cdot\,) = \xi\left(\frac{\floor*{ 2^n v } }{2^n}, \;\cdot\; \right) \text{ and } \xi(v, E_X \setminus E^n_X)=0 \text{ for all } v\in E_V \right\},\\
		\cS^n &:= \left\{ S \in \cS:\; S(y) = S\left( \frac{\floor*{ 2^n y } }{2^n}\right) \text{ for all } y\in E_Y\right\}.
	\end{aligned}
\end{equation}
It is straightforward to check that the insider's expected utility from (\ref{equation:expected_utility_discrete}) in the $n$\textsuperscript{th} discrete game that is created by discretising the measures $\nu$ and $\zeta$, can be expressed in terms of the continuous quantities by
\begin{equation} \label{equation:expected_utility_discrete_nth_game}
	u^n(\xi, S) = \iiint x\left[\frac{\floor*{2^n v}}{2^{n}}-S\left(x+z\right) \right]\, \zeta(dz) \xi(v, dx) \nu(dv) \qquad \text{for } (\xi,S) \in \Xi^n \times \cS^n.
\end{equation}
Analogously, the rational pricing condition from Definition~\ref{definition:rational_pricing_discrete} for $S\in\cS^n$ given $\xi\in\Xi^n$ reads 
\begin{equation}\label{equation:rational_pricing_discrete_nth_game}
	\begin{split}
		&\iiint S(x+z) \Ind{A}\left(x+z\right)\, \zeta(dz) \xi(v, dx) \nu(dv) \\
		&\qquad=  \iiint \frac{\floor*{2^{n} v}}{2^{n}} \Ind{A}\left(x+z\right)\, \zeta(dz) \xi(v, dx) \nu(dv)
	\end{split}
\end{equation}
for all sets $A$ of the form $A=[k/2^n,(k+1)/2^n)$,  $k\in\{(\underline{x}-1)2^n,(\underline{x}-1)2^n+1,\ldots,(\overline{x}+1)2^n-1\}$. 

\subsection{Approximation and existence of a limit point}

For the convenience of the reader, we repeat the following definition.

\begin{definition}[Narrow convergence] \label{definition:narrow_convergence}
	A sequence of Young measures~$(\xi^n)_{n\in\bN}\subseteq\Xi(E_V, \nu; E_X)$ converges \emph{narrowly} to $\xi\in\Xi(E_V, \nu; E_X)$ if for all $A \in \cB(E_V), f \in \cC_b(E_X)$
	\begin{equation*}
		\iint \Ind{A}(v) f(x)\; \xi^n(v, dx)\nu(dv) \to \iint {\Ind{A}(v)} f(x)\; \xi(v, dx)\nu(dv).
	\end{equation*}
\end{definition}

An immediate consequence of narrow convergence is that
\begin{gather} \label{equation:weak_convergence_product_measure}
	\nu \otimes \xi^n\to \nu \otimes \xi\quad\text{weakly, where} \\
	\nu\otimes \xi^{n}(A \times B) := \int_A \xi^{n}(v, B)\; \nu(dv)\ \text{for } A \times B \in \cB(E_V) \times \cB(E_X) \nonumber
\end{gather}
(to see this, one applies Fubini's theorem for transition probabilities).
For further information regarding narrow convergence we refer to \cite{balder_new_2021} and the references therein. 

\begin{lemma}[Convergence of utilities] \label{lemma:convergence_utilities}
	Let $(\xi^n, S^n)_{n\in\bN}\subseteq \Xi^n \times \cS^n$ (that are not necessarily equilibria) such that $\xi^n \to \xi$ narrowly for some $\xi \in \Xi$ and $\sum_{k=0}^{k_n} \lambda_{n,k} S^{n+k} \to S$ $\lambda$-a.e. for some $S \in \cS$ and forward convex combinations~$(\lambda_{n,k})_{n\in\bN, k=0,1,\ldots,k_n}\subseteq\bR_+$, $k_n\in\bN$ with $\sum_{k=0}^{k_n}\lambda_{n,k}=1$. Then
	\begin{equation*}
		\sum_{k=0}^{k_n} \lambda_{n,k} u^{n+k}(\xi^{n+k}, S^{n+k}) \xrightarrow{n \to \infty} u(\xi, S).
	\end{equation*}
\end{lemma}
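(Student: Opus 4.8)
The plan is to split the expression $u^{n+k}(\xi^{n+k},S^{n+k})$ into the ``value part'' and the ``price part'' and handle them separately, then recombine via the convex weights. Writing $m = n+k$, from \eqref{equation:expected_utility_discrete_nth_game} we have
\[
	u^{m}(\xi^{m},S^{m}) = \iiint x\,\frac{\floor*{2^{m}v}}{2^{m}}\,\zeta(dz)\,\xi^{m}(v,dx)\,\nu(dv) \;-\; \iiint x\,S^{m}(x+z)\,\zeta(dz)\,\xi^{m}(v,dx)\,\nu(dv),
\]
and similarly $u(\xi,S)$ is the same with $\floor*{2^{m}v}/2^{m}$ replaced by $v$ and $S^{m}$ by $S$. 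For the value part, note $|\floor*{2^{m}v}/2^{m} - v|\le 2^{-m}\to 0$ uniformly, while $x$ is bounded on $E_X$; combined with $\nu\otimes\xi^{m}\to\nu\otimes\xi$ weakly \eqref{equation:weak_convergence_product_measure} and the fact that $(v,x)\mapsto xv$ is bounded continuous on $E_V\times E_X$, one gets $\iiint x\,\frac{\floor*{2^{m}v}}{2^{m}}\,\zeta(dz)\xi^{m}(v,dx)\nu(dv)\to\iiint xv\,\zeta(dz)\xi(v,dx)\nu(dv)$. Since this holds for each $m\to\infty$ and the convex weights sum to $1$, the weighted value parts converge to the same limit; this step is routine.

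The substance is the price part $\iiint x\,S^{m}(x+z)\,\zeta(dz)\,\xi^{m}(v,dx)\,\nu(dv)$. First I would change variables: for fixed $x$, substitute $y = x+z$, so $\zeta(dz)$ pushes forward to a measure $\zeta_x$ on $E_Y$ with $\zeta_x\ll\lambda$ (translate of $\zeta\ll\lambda$), and the integral becomes $\iiint x\,S^{m}(y)\,\zeta_x(dy)\,\xi^{m}(v,dx)\,\nu(dv)$. The key point is that $S^{m}$ appears only integrated against a measure absolutely continuous with respect to $\lambda$, so the hypothesis $\sum_{k}\lambda_{n,k}S^{n+k}\to S$ $\lambda$-a.e.\ is exactly what is needed — but we must be careful about the interaction between the $m$-dependence in $S^{m}$ and the $m$-dependence in $\xi^{m}$. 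The clean way is to decompose: (a) replace $S^{m}$ inside by the limit $S$ up to an error, and (b) handle the resulting $S$-integral against $\xi^{m}\to\xi$. For (b), $(v,x)\mapsto x\int S(x+z)\zeta(dz)$ need not be continuous in $x$ (since $S$ is only measurable), so one cannot apply weak convergence directly; instead I would invoke that $z\mapsto S(x+z)$ integrated against $\zeta\ll\lambda$ produces, by a Scheffé/dominated-convergence argument together with the continuity of translation in $L^1(\lambda)$, a function of $x$ that \emph{is} continuous and bounded, hence $\iiint x\,S(x+z)\zeta(dz)\xi^{m}(v,dx)\nu(dv)\to\iiint x\,S(x+z)\zeta(dz)\xi(v,dx)\nu(dv)$.

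For step (a) — the main obstacle — I would estimate
\[
	\left|\sum_{k=0}^{k_n}\lambda_{n,k}\!\iiint x\bigl(S^{n+k}(x+z)-S(x+z)\bigr)\zeta(dz)\xi^{n+k}(v,dx)\nu(dv)\right| \le C\sum_{k=0}^{k_n}\lambda_{n,k}\!\iint \bigl|S^{n+k}(x+z)-S(x+z)\bigr|\zeta(dz)\,(\text{Leb-type }dx),
\]
where the $\xi^{n+k}(v,dx)\nu(dv)$-integral is dominated by a multiple of Lebesgue measure on $E_X$ uniformly in $n,k$ — this uses that $E_X$ is a fixed bounded interval and $|x|\le\max(|\underline x|,|\overline x|)$, so the only genuinely measure-theoretic input is bounding $\iint |g|(x+z)\,\zeta(dz)\,dx \le \|g\|_{L^1(\lambda,E_Y)}\cdot(\text{const})$ by Fubini and $\zeta\ll\lambda$ with bounded density after a further mollification, or more robustly by noting $\zeta(dz)\,dx$ pushed forward under $(z,x)\mapsto x+z$ is absolutely continuous with bounded density on $E_Y$. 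Then $\sum_k\lambda_{n,k}\|S^{n+k}-S\|_{L^1(\lambda)}\to 0$: indeed $S^{n+k}\to?$ need not hold termwise, but $\sum_k\lambda_{n,k}S^{n+k}\to S$ $\lambda$-a.e.\ with all functions bounded by $1$ gives $\sum_k\lambda_{n,k}S^{n+k}\to S$ in $L^1(\lambda)$ by dominated convergence; however we need the stronger $\sum_k\lambda_{n,k}\|S^{n+k}-S\|_{L^1}\to 0$, which does \emph{not} follow from $\|\sum_k\lambda_{n,k}S^{n+k}-S\|_{L^1}\to 0$ in general. The resolution, which I expect is the intended one, is that the forward convex combinations are chosen (via Komlós' theorem in the surrounding argument) so that $\sum_{k}\lambda_{n,k}S^{n+k}\to S$ holds and moreover one may assume the same along every further tail — equivalently one first passes to a subsequence where $S^{n}\to S$ in the Cesàro/Komlós sense and then the convexity bound $\|\sum_k\lambda_{n,k}(S^{n+k}-S)\|_{L^1}\ge$ is replaced by the pointwise a.e.\ statement combined with uniform boundedness to get $L^1$ convergence of the combination, and then the linearity of the price part in $S$ — crucially, \eqref{equation:expected_utility_discrete_nth_game} is linear in $S$ — lets us pull the convex combination out: $\sum_k\lambda_{n,k}\iiint x S^{n+k}(x+z)\zeta(dz)\xi^{n+k}(v,dx)\nu(dv)$ is \emph{not} a combination we can simplify unless $\xi^{n+k}$ were fixed, so instead I would first use (b)-type continuity to replace each $\xi^{n+k}$ by $\xi$ up to an error that is summable against $\lambda_{n,k}$ (uniform narrow convergence rate is not available, so one argues $\sum_k\lambda_{n,k}|a_{n+k}|\to 0$ whenever $a_m\to 0$ and $|a_m|$ bounded, a standard Toeplitz-limit fact), and only then invoke linearity in $S$ together with $L^1$-convergence of $\sum_k\lambda_{n,k}S^{n+k}$. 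Assembling (a) and (b) and adding back the value part yields the claim.
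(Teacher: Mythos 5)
Your final sketch is, in outline, the paper's argument (first replace $\xi^{n+k}$ by $\xi$, then exploit linearity in $S$ and pass the convex combination inside against the fixed measure), but the two concrete routes you offer for the price part do not hold up. The first route fails at the displayed estimate: the measure $\nu(dv)\,\xi^{n+k}(v,dx)$ is purely atomic -- the discretised strategies charge only the finite grid $E_X^{n+k}$ -- so its $x$-marginal is not dominated by any multiple of Lebesgue measure on $E_X$, uniformly in $n,k$ or otherwise; and, as you yourself concede, even if one arrived at the bound $C\sum_{k}\lambda_{n,k}\norm{S^{n+k}-S}_{L^1(\lambda)}$, the Koml\'os-type hypothesis only gives $\norm{\sum_k\lambda_{n,k}S^{n+k}-S}_{L^1(\lambda)}\to 0$ and not the termwise statement, so this route cannot be patched by any choice of the convex weights.

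In the fallback plan, the step ``replace each $\xi^{n+k}$ by $\xi$ up to an error summable against $\lambda_{n,k}$'' is exactly the crux of the lemma, and you assert it rather than prove it. What is needed is $a_m\to 0$ for $a_m:=\iint f^m(x)\,\xi^m(v,dx)\nu(dv)-\iint f^m(x)\,\xi(v,dx)\nu(dv)$ with $f^m(x)=x\int S^m(x+z)\,\zeta(dz)$; here \emph{both} the integrand and the measure vary with $m$, so narrow convergence tested against a fixed integrand -- your ``(b)-type continuity'', which you established only for the fixed limit $S$ -- does not apply. The paper closes precisely this gap: by $\zeta\ll\lambda$, the uniform bound $\abs{S^m}\le 1$, and $L^1$-continuity of translations, the family $(f^m)_{m\in\bN}$ is uniformly bounded and equicontinuous with a modulus independent of $m$, and Rao's theorem (Lemma~\ref{lemma:rao_theorem}: weak convergence is uniform over such families) applied to $\nu\otimes\xi^n\to\nu\otimes\xi$ gives $\sup_m\abs{\iint f^m\,d(\nu\otimes\xi^n)-\iint f^m\,d(\nu\otimes\xi)}\to 0$, hence $a_m\to0$; the Toeplitz-type summation over the weights and the dominated-convergence step for $\sum_k\lambda_{n,k}f^{n+k}\to f$ against the fixed measure $\nu\otimes\xi$ then finish as you indicate. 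Without this uniformity ingredient (or an equivalent Arzel\`a--Ascoli extraction using the uniform modulus of continuity), your argument has a hole at its central point; the value part of your proposal is fine and matches the paper.
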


\begin{proof}
	We split the integrals in (\ref{equation:expected_utility_discrete_nth_game}) and (\ref{equation:expected_utility}) into two parts and handle them  separately. We start with the first parts including the true value. Let $\epsilon >0$. For every $m\in\bN$, we get 
	\begin{equation}\label{equation:integral_approximation_11}
		\begin{aligned}
			\iint \abs*{x \frac{\floor*{2^m v}}{2^m} - xv }  \;  \xi^m(v, dx) \nu(dv) 
			&\le \max(|\underline{x}|,|\overline{x}|)\sup_{v \in [0,1]} \abs*{ \frac{\floor*{2^m v}}{2^m} -v} \\
			&\le \max(|\underline{x}|,|\overline{x}|)\frac{1}{2^m}.
		\end{aligned}
	\end{equation}
	By \cite[Theorem~2.2(b)]{balder_generalized_1988}, $\xi^n \to \xi$ narrowly implies that there exists $n\in\bN$ such that
	\begin{equation} \label{equation:integral_approximation_12}
		\abs*{ \iint xv \;\xi^{n+k}(v, dx) \nu(dv) - \iint xv \;\xi(v, dx) \nu(dv)  } < \epsilon\quad\mbox{for all}\ k\in\bN_0.
	\end{equation}
	Combining the two estimates (\ref{equation:integral_approximation_11}) and (\ref{equation:integral_approximation_12}) yields that for $n$ large enough
	\begin{align}\label{integral_approximation_1} 
		&\abs*{ \sum_{k=0}^{k_n} \lambda_{n,k} \iint x \frac{\floor*{2^{n+k} v}}{2^{n+k}} \;\xi^{n+k}(v, dx) \nu(dv) - \iint xv \;\xi(v, dx) \nu(dv) } \nonumber\\
		&\quad \le \sum_{k=0}^{k_n} \lambda_{n,k} \abs*{  \iint x \frac{\floor*{2^{n+k} v}}{2^{n+k}} \;\xi^{n+k}(v, dx) \nu(dv) - \iint xv \;\xi^{n+k}(v, dx) \nu(dv)  } \nonumber\\
		&\quad\quad + \sum_{k=0}^{k_n} \lambda_{n,k} \abs*{  \iint xv \;\xi^{n+k}(v, dx) \nu(dv)  - \iint xv \;\xi(v, dx) \nu(dv)  } \nonumber\\
		&\quad \le  \sum_{k=0}^{k_n} \lambda_{n,k} \left(\max(|\underline{x}|,|\overline{x}|)\frac{1}{2^{n+k}} + \epsilon \right) < 2\epsilon.
	\end{align}
	Now we turn to the second parts of the integrals in (\ref{equation:expected_utility_discrete_nth_game}) and (\ref{equation:expected_utility})
	including the price functions. This is more challenging as both integrand and integrator vary with $n$. The key is that the absolute continuity of the noise trader's demand~$\zeta$ smooths the realised price $\int S^n(x+z) \zeta(dz)$ per share for an order of size~$x$. Define
	\begin{equation} \label{equation:definition_fn}
		f^{n}(x) := x \int S^{n}(x+z)\;\zeta(dz), \qquad  n\in\bN.
	\end{equation}
	Since $\zeta\ll \lambda$, there exists a density $g\in L^1(\lambda)$ with $\zeta(A) = \int_A g\,d\lambda$ for all $A\in\cB(\bR)$.
	Due to the translation invariance of the Lebesgue measure, we can write
	\begin{equation*}
		f^{n}(x) = x \int S^{n}(x+z) g(z) \,\lambda(dz) = x \int S^{n}(y) g(y-x) \,\lambda(dy).
	\end{equation*}
	Since $g\in L^1(\lambda)$, we have that $\int \abs*{g(z-h) - g(z)} \lambda (dz) \to 0$ as $h \to 0$, see, e.g., \cite[Lemma~2.7]{kallenberg_foundations_2021}. The reason is that $g$ can be approximated in $L^1(\lambda)$ by (uniformly) continuous functions, and for a uniformly continuous function $\wt{g}$ we can use the estimate $\int \abs*{\wt{g}(z-h) - \wt{g}(z)} \lambda (dz) \le \sup_{z\in E_Z} \abs*{\wt{g}(z-h) - \wt{g}(z)} \lambda(E_Z)$ which tends to 0 as $h \to 0$. 
	As a result, we obtain
	\begin{equation}\label{equation:continuity_fn}
		\abs*{f^{n}(x+h) - f^n(x)} \le \sup_{x\in E_X} \abs*{x} \sup_{y\in E_Y} \abs*{S^{n}(y)} \int \abs*{ g(y-x-h) - g(y-x)} \,\lambda(dy) \xrightarrow{h \to 0} 0.
	\end{equation}
	The uniform estimate $\abs*{S^{n}} \le 1$ yields that the family~$(f^n)_{n\in\bN}$ is equicontinuous at every fixed $x$.
	Consequently, we can apply Lemma~\ref{lemma:rao_theorem} and (\ref{equation:weak_convergence_product_measure}) to obtain that	
	\begin{equation*}
		\sup_{m\in\bN} \;\abs*{ \iint f^m(x) \xi^n(v, dx) \nu(dv) - \iint f^m(x) \xi(v, dx) \nu(dv)} \xrightarrow{n \to \infty} 0.
	\end{equation*}
	It follows that for $n$ large enough and all $k\in\bN_0$
	\begin{equation} \label{equation:integral_approximation_21}
		\left| \iint f^{n+k}(x) \xi^{n+k}(v, dx) \nu(dv) - \iint f^{n+k}(x) \xi(v, dx) \nu(dv)  \right| < \epsilon.
	\end{equation}
	Furthermore, by $\zeta\ll \lambda$, $\sum_{k=0}^{k_n} \lambda_{n,k} S^{n+k}\to S$ $\lambda$-a.e., and dominated convergence, we obtain
	\begin{equation*}
		\sum_{k=0}^{k_n} \lambda_{n,k} f^{n+k}(x) = x\int \sum_{k=0}^{k_n} \lambda_{n,k} S^{n+k}(x+z) \; \zeta(dz) \longrightarrow \; x\int S(x+z) \;\zeta(dz) =: f(x) 
	\end{equation*}	
	pointwise. Thus, dominated convergence yields that for $n$ large enough
	\begin{equation} \label{equation:integral_approximation_22}
		\left| \iint \sum_{k=0}^{k_n} \lambda_{n,k} f^{n+k}(x)\; \xi(v, dx) \nu(dv) - \iint f(x)\; \xi(v, dx) \nu(dv)  \right| < \epsilon. 
	\end{equation} 
	Combining the two estimates (\ref{equation:integral_approximation_21}) and (\ref{equation:integral_approximation_22}), we get that for $n$ large enough
	\begin{align*} 
		& \Bigg| \sum_{k=0}^{k_n} \lambda_{n,k} \iiint xS^{n+k}(x+z) \;\zeta(dz)\xi^{n+k}(v, dx) \nu(dv) \\[-2ex]
		&\hspace{16em} - \iiint xS(x+z) \;\zeta(dz)\xi(v, dx)\nu(dv) \Bigg| \\
		&\quad =
		\abs*{ \sum_{k=0}^{k_n} \lambda_{n,k} \iint f^{n+k}(x) \;\xi^{n+k}(v, dx) \nu(dv) - \iint f(x) \;\xi(v, dx)\nu(dv) } \\
		&\quad \le \sum_{k=0}^{k_n} \lambda_{n,k} \abs*{ \iint f^{n+k}(x) \;\xi^{n+k}(v, dx) \nu(dv) - \iint f^{n+k}(x) \;\xi(v, dx)\nu(dv) } \\
		&\qquad +  \abs*{ \iint \sum_{k=0}^{k_n} \lambda_{n,k}f^{n+k}(x) \;\xi(v, dx) \nu(dv) - \iint f(x) \;\xi(v, dx)\nu(dv) } \\
		&\quad < 2\epsilon.
	\end{align*}
	Together with (\ref{integral_approximation_1}), the assertion follows.
\end{proof}

\begin{proof}[Proof of Theorem~\ref{theorem:continuous_trade}]
	
	{\em Step 1:} For each $n\in\bN$, we choose a Kyle equilibrium~$(\xi^n_\star, S^n_\star) \in \Xi^n \times \cS^n\subseteq \Xi\times  \cS$ from the embedded discrete model 
	(\ref{equation:embedding})/(\ref{equation:expected_utility_discrete_nth_game})/(\ref{equation:rational_pricing_discrete_nth_game})
	that exists by Theorem~\ref{theorem:discrete_trade}. 
	The sequence $(\xi^n_\star)_{n\in\bN}$ of Young measures is obviously tight in the sense of \cite[Definition 3.3]{balder_new_2021} since their support~$[\underline{x},\overline{x}]$ is bounded. From Prohorov's theorem for Young measures \cite[Theorem~4.10]{balder_new_2021}, it follows that $(\xi^n_\star)_{n\in\bN}$ is relatively sequentially compact in the narrow topology. That is, there exists a subsequence $(n_j)_{j\in\bN}$ and a limit point $\xi_\star\in \Xi(E_V, \nu; [\underline{x},\overline{x}])$ such that $\xi^{n_j}_\star \to \xi_\star \quad\text{narrowly as } j \to \infty$. W.l.o.g. $n_j=j$ for all $j\in\bN$.
	
	Next, we apply a version of Koml\'os' theorem to obtain a limiting price function. Since $|S^n|\le 1$ for all $n\in\bN$, \cite[Lemma~A1.1]{delbaen_general_1994} guarantees the existence of a measurable, $E_V$-valued function $S$ and forward convex combinations $(\lambda_{n,k})_{n\in\bN, k=0,1,\ldots,k_n}\subseteq\bR_+$, $k_n\in\bN$ with $\sum_{k=0}^{k_n} \lambda_{n,k}=1$ such that 
	\begin{equation*}
		\sum_{k=0}^{k_n} \lambda_{n,k} S^{n+k}_\star \to S_\star \quad \lambda-\text{a.e. as } n \to \infty.
	\end{equation*}
	It remains to show that $(\xi_\star, S_\star)$ is a Kyle equilibrium in the continuous model.
	
	\smallskip	
	
	{\em Step 2 (Optimality of $\xi_\star$ given $S_\star$):} Let $\xi_0 \in \Xi$. We have to show that
	\begin{equation}\label{equation:optimality}
		u(\xi_0, S_\star) \le u(\xi_\star, S_\star).
	\end{equation}%
	Let us discretise $\xi_0$ according to Lemma~\ref{lemma:approximation_narrow_topology}, such that $\xi_0^n \in \Xi^n$ is a trading strategy of the $n$\textsuperscript{th} discrete game and $\xi_0^n \to \xi_0$ narrowly. 
	But $\xi^n_\star$ from Step~1 is an optimal strategy (in the $n$\textsuperscript{th} discrete game) given $S^n_\star$ and hence $u^n(\xi_0^n, S^n_\star) \leq u^n(\xi^n_\star, S^n_\star)$ for all $n$, and so
	\begin{equation} \label{equation:optimality_approximation}
		\sum_{k=0}^{k_n} \lambda_{n,k} u^{n+k}(\xi_0^{n+k}, S^{n+k}_\star) \leq \sum_{k=0}^{k_n} \lambda_{n,k} u^{n+k}(\xi^{n+k}_\star, S^{n+k}_\star)
	\end{equation}
	for the forward convex combinations~$(\lambda_{n,k})_{n\in\bN, k=0,1,\ldots,k_n}$ from Step~1. Using that $\sum_{k=0}^{k_n} \lambda_{n,k}\allowbreak S^{n+k}_\star \to S_\star$ $\lambda$-a.e., we can apply Lemma~\ref{lemma:convergence_utilities} to both sides of (\ref{equation:optimality_approximation}) to conclude (\ref{equation:optimality}).

	\smallskip
	
	{\em Step 3 (Rationality of $S_\star$ given $\xi_\star$):}
	Our goal is to show that from the rational pricing condition~(\ref{equation:rational_pricing_discrete_nth_game}) of the discrete games for all $n\in\bN$, we can deduce the rational pricing condition~(\ref{equation:rational_pricing_continuous}) of the continuous game.
	Sets of the form
	\begin{equation}\label{equation:pi_system}
		A=[k/2^m,(k+1)/2^m),\ m\in\bN,\ k\in\{(\underline{x}-1)2^m,(\underline{x}-1)2^m+1,\ldots,(\overline{x}+1)2^m-1\}
	\end{equation}
	constitute a $\cap$-stable generator of the $\sigma$-algebra~$\cB(E_Y)$. Thus, by a Dynkin-argument (see, e.g., \cite[Theorem~1.1]{kallenberg_foundations_2021}), it is sufficient to verify (\ref{equation:rational_pricing_continuous}) for $A$ from (\ref{equation:pi_system}).
	We fix a set $A$ of this form with $m\in\bN$. For $n \ge m$, the rational pricing condition~(\ref{equation:rational_pricing_discrete_nth_game}) for the $n$\textsuperscript{th} discrete game implies that
	\begin{equation}\label{equation:rational_pricing_nth_game}
		\begin{split}
			\iiint S^n_\star(x+z) \Ind{A}\left(x+z\right)\, \zeta(dz) \xi^n_\star(v, dx) \nu(dv) \qquad\\\qquad
			=  \iiint \frac{\floor*{2^{n} v}}{2^{n}} \Ind{A}\left(x+z\right)\, \zeta(dz) \xi^n_\star(v, dx) \nu(dv).
		\end{split}
	\end{equation}
	We note that (\ref{equation:rational_pricing_nth_game}) only needs to hold for $n\ge m$ since $\zeta$ is already the limiting measure and 
	integrating the function~$\Ind{A}(x+\cdot\,)$ for a strict subset~$A$ of $[l/2^n,(l+1)/2^n)$ could produce a bias.  
	
	We show the convergence of (\ref{equation:rational_pricing_nth_game}) to (\ref{equation:rational_pricing_continuous}) for the left- and right-hand sides separately, starting with the right-hand side.  
	The function~$(v,x) \mapsto  v \int \Ind{A}\left(x+z\right)\, \zeta(dz)$ is continuous in $x$ by the same reasons which lead to (\ref{equation:continuity_fn}). Since it is furthermore measurable in $v$, it is a suitable integrand for narrow convergence according to \cite[Theorem~2.2(b)]{balder_generalized_1988}. From the narrow convergence of $\xi^n_\star \to \xi_\star$ it follows that 
	\begin{equation*}
		\iiint v\Ind{A}\left(x+z\right)\, \zeta(dz)\xi^n(v,dx) \nu(dv) \to \iiint v\Ind{A}\left(x+z\right)\, \zeta(dz)\xi(v,dx) \nu(dv). 
	\end{equation*}
	Then, the convergence follows from (\ref{equation:integral_approximation_11}) as in the proof of Lemma~\ref{lemma:convergence_utilities}. 
	
	Now we turn to the left-hand side in (\ref{equation:rational_pricing_nth_game}). The convergence of the right-hand side already implies the convergence of the left-hand side (without passing to convex combinations), but it remains to verify that  
	$\iiint S(x+z) \Ind{A}(x+z) \;\zeta(dz)\xi(v, dx)\nu(dv)$ is the limit.
	This follows as in the proof of Lemma~\ref{lemma:convergence_utilities} by passing to convex combinations. The only difference is that instead of (\ref{equation:definition_fn}) we consider the functions~$\wt{f}^{n}(x) := \int  S^{n}(x+z) \Ind{A}(x+z) \,\zeta(dz)$, $n\in\bN$. 
\end{proof}

\begin{remark}
In short, the approach is that averaging over the noise trader's demand~$z_1$ makes the insider's expected gain~$\int x_1(v-S_1(x_1+z_1))\zeta(dz_1)$   
continuous in $(v,x_1)\in E_V\times E_X$, although the price function $S_1$ can be discontinuous. 
Unfortunately, this approach does not work in multi-period models. 
The reason is that, on the one hand, the price~$S_2$ in the second period can discontinuously depend on $x_1+z_1$.
Thus, the gain in the second period,
\begin{equation}\label{17.6.2024.1}
\int x_2(v-S_2(x_1+z_1,x_2+z_2)) \zeta(dz_2),
\end{equation}
is in general not continuous in $(v,x_1,z_1,x_2)$.
On the other hand, the control variable~$x_2$ can be conditioned on $z_1$, and thus averaging the gain~(\ref{17.6.2024.1}) 
over $z_1$ while keeping $x_2$ fixed would not be consistent with the maximisation problem of the insider.
\end{remark}

\appendix
\section{Appendix}

In the proof of Theorem~\ref{theorem:continuous_trade} we need the following lemma, for a condensed proof see, e.g., \cite[Theorem~2.2.8]{bogachev_weak_2018}.

\begin{lemma}\label{lemma:rao_theorem}\cite[Theorem~3.2]{rao_relations_1962}
	Suppose that a sequence of probability measures~$(\mu^n)_{n\in\bN}$ on $(\bR, \cB(\bR))$ converges weakly to a probability measure~$\mu$, and let $F$ be a family of uniformly bounded, pointwise equicontinuous functions from $\bR$ to $\bR$. 
	Then 
	\begin{equation*}
		\sup_{f\in F}\; \abs*{ \int f(y) \mu^n(dy) - \int f(y) \mu(dy)}\to 0\quad\mbox{as}\ n \to \infty.
	\end{equation*}
\end{lemma}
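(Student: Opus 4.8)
The plan is to localise to a compact interval by tightness and, there, to approximate every member of $F$ \emph{uniformly over $F$} by a single finite simple function adapted to $\mu$; the uniformity is bought by upgrading pointwise equicontinuity on a compact set to uniform equicontinuity of the whole family.

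First I would assemble the ingredients. Since $(\mu^n)_{n\in\bN}$ converges weakly, the sequence together with its limit is tight, so given $\epsilon>0$ there is $M>0$ with $\mu^n(\bR\setminus K)<\epsilon$ for all $n$ and $\mu(\bR\setminus K)<\epsilon$, where $K:=[-M,M]$; set $C:=\sup_{f\in F}\sup_{y\in\bR}\abs{f(y)}<\infty$. Next, for each $x\in K$ choose $\delta_x>0$ with $\abs{f(y)-f(x)}<\epsilon$ for all $f\in F$ whenever $\abs{y-x}<\delta_x$; the open intervals $(x-\delta_x/2,\,x+\delta_x/2)$, $x\in K$, cover the compact set $K$, so a finite subcover yields a single $\delta>0$ such that $\abs{f(y)-f(z)}<2\epsilon$ for all $f\in F$ and all $y,z\in K$ with $\abs{y-z}<\delta$. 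Now partition $K$ into finitely many half-open intervals $I_1,\dots,I_m$ of length $<\delta$ whose endpoints $a$ all satisfy $\mu(\{a\})=0$ (possible since $\mu$ has at most countably many atoms, which makes every $I_j$ a $\mu$-continuity set), pick $x_j\in I_j$, and set $\tilde f:=\sum_{j=1}^m f(x_j)\1_{I_j}$; then $\abs{f-\tilde f}<2\epsilon$ on $K$ simultaneously for every $f\in F$.

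Then I would run the telescoping estimate: for any $f\in F$,
\[
\abs*{\int f\,d\mu^n-\int f\,d\mu}\le \int_K\abs{f-\tilde f}\,d\mu^n+\abs*{\int_K\tilde f\,d\mu^n-\int_K\tilde f\,d\mu}+\int_K\abs{f-\tilde f}\,d\mu+\int_{\bR\setminus K}\abs f\,d\mu^n+\int_{\bR\setminus K}\abs f\,d\mu,
\]
where the first and third terms are $\le 2\epsilon$ and the last two are $\le C\epsilon$, all uniformly in $f$. For the middle term, $\int_K\tilde f\,d\mu^n=\sum_{j=1}^m f(x_j)\,\mu^n(I_j)$ and, each $I_j$ being a $\mu$-continuity set, the Portmanteau theorem gives $\mu^n(I_j)\to\mu(I_j)$, so
\[
\abs*{\int_K\tilde f\,d\mu^n-\int_K\tilde f\,d\mu}\le C\sum_{j=1}^m\abs{\mu^n(I_j)-\mu(I_j)}\xrightarrow{\ n\to\infty\ }0,
\]
with a bound independent of $f$. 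Hence $\limsup_{n\to\infty}\sup_{f\in F}\abs*{\int f\,d\mu^n-\int f\,d\mu}\le(4+2C)\epsilon$, and letting $\epsilon\downarrow0$ finishes the proof.

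The hard part is the uniformity over $F$, and it is concentrated in one place: one needs a single $\delta$ — hence one finite partition and one simple-function template — that works for every $f\in F$ at once, which is exactly what the covering argument from pointwise equicontinuity on the compact $K$ delivers. Once the partition is fixed, the only genuine limiting step concerns the finitely many numbers $\mu^n(I_j)$, and uniformity in $f$ there is automatic from $\abs{f(x_j)}\le C$; the rest (tightness of a weakly convergent sequence, Portmanteau for continuity sets, the telescoping bound) is routine.
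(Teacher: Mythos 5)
Your proof is correct and self-contained; since the paper does not prove this lemma itself (it is quoted from \cite{rao_relations_1962}, with a condensed proof referenced in \cite{bogachev_weak_2018}), your argument---uniform tightness of the convergent sequence, upgrading pointwise equicontinuity to uniform equicontinuity of the whole family on a compact interval, a single partition into $\mu$-continuity intervals shared by all $f\in F$, and Portmanteau for the finitely many intervals---is essentially the standard argument behind that citation. The only step worth making explicit is that the outer endpoints $\pm M$ of $K$ must themselves be non-atoms of $\mu$ so that the boundary intervals are $\mu$-continuity sets; this is implicit in your endpoint condition and is achievable since only countably many values of $M$ are excluded.
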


Next we show how an insider strategy in the continuous Kyle game can be approximated in the narrow topology through a sequence of strategies in the discretised games (cf. Subsection~\ref{subsection:results_proofs}).

\begin{lemma}\label{lemma:approximation_narrow_topology}
	Let $\xi \in \Xi(E_V, \nu; E_X)$ be an insider strategy in the continuous Kyle game. There exists a sequence of strategies $\xi^n \in \Xi^n$, $n\in\bN$, in the discrete games such that $(\xi^n)_{n\in\bN}$ converges narrowly to $\xi$. 
\end{lemma}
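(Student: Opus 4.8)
The idea is to obtain $\xi^n$ from $\xi$ by two successive discretisations — one in the action variable $x$, one in the information variable $v$ — and then to verify narrow convergence by splitting the test integral according to these two steps.

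\emph{Construction of $\xi^n$.} Let $T_n:E_X\to E_X^n$ be $T_n(x):=\lfloor 2^nx\rfloor/2^n$; this maps into $E_X^n$ (using that $\underline x,\overline x\in\bZ$) and satisfies $|T_n(x)-x|\le 2^{-n}$. Put $\xi'_n(v,\cdot):=(T_n)_\ast\,\xi(v,\cdot)$, a Young measure supported on $E_X^n$. Let $\mathcal G_n\subseteq\cB(E_V)$ be the finite $\sigma$-algebra generated by the level sets $J_{n,k}$ of $v\mapsto\lfloor 2^nv\rfloor/2^n$ on $E_V=[0,1]$ (dyadic cells, plus the singleton $\{1\}$), and define, for $v\in J_{n,k}$ with $\nu(J_{n,k})>0$,
\[
\xi^n(v,\cdot):=\frac{1}{\nu(J_{n,k})}\int_{J_{n,k}}\xi'_n(w,\cdot)\,\nu(dw),
\]
and $\xi^n(v,\cdot):=\delta_0$ on the ($\nu$-null) remaining cells. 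Then $\xi^n$ is a Young measure: it is $\cB(E_V)$-measurable in $v$ because it is constant on each of the finitely many cells, and $\xi^n(v,\cdot)$ is a probability measure by monotone convergence. It is supported on $E_X^n$ and constant on the cells of the $n$-th dyadic partition, so $\xi^n\in\Xi^n$.

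\emph{Reduction.} Fix $A\in\cB(E_V)$ and $f\in\cC_b(E_X)$; by Definition~\ref{definition:narrow_convergence} it suffices to show $\iint\Ind A(v)f(x)\,\xi^n(v,dx)\nu(dv)\to\iint\Ind A(v)f(x)\,\xi(v,dx)\nu(dv)$. By the triangle inequality this difference is bounded by $(I)+(II)$, where $(I)$ compares $\xi^n$ with $\xi'_n$ (the $v$-averaging step) and $(II)$ compares $\xi'_n$ with $\xi$ (the $x$-pushforward step). For $(II)$: since $\xi'_n(v,\cdot)=(T_n)_\ast\xi(v,\cdot)$, the integrand changes only by $f(T_n(x))-f(x)$, and as $E_X$ is compact, $f$ is uniformly continuous on it, so $(II)\le\sup_{x\in E_X}|f(T_n(x))-f(x)|\to 0$ because $|T_n(x)-x|\le 2^{-n}$.

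\emph{The averaging step.} Set $h_n(v):=\int_{E_X}f\,d\xi'_n(v,\cdot)$, a Borel function with $\|h_n\|_\infty\le\|f\|_\infty$. Unwinding the definition of the mixture (Fubini for kernels), $\int f\,d\xi^n(v,\cdot)=\bE_\nu[h_n\mid\mathcal G_n](v)$ for $\nu$-a.e.\ $v$; hence, using that $\int\Ind A\,g\,d\nu=\int\bE_\nu[\Ind A\mid\mathcal G_n]\,g\,d\nu$ for $\mathcal G_n$-measurable $g$,
\[
(I)=\Bigl|\int\Ind A\,\bE_\nu[h_n\mid\mathcal G_n]\,d\nu-\int\Ind A\,h_n\,d\nu\Bigr|=\Bigl|\int\bigl(\bE_\nu[\Ind A\mid\mathcal G_n]-\Ind A\bigr)h_n\,d\nu\Bigr|\le\|f\|_\infty\,\bigl\|\bE_\nu[\Ind A\mid\mathcal G_n]-\Ind A\bigr\|_{L^1(\nu)}.
\]
Since the dyadic partitions refine and generate $\cB(E_V)$, we have $\mathcal G_n\uparrow\cB(E_V)$, so the martingale convergence theorem gives $\bE_\nu[\Ind A\mid\mathcal G_n]\to\Ind A$ in $L^1(\nu)$, whence $(I)\to 0$. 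This completes the proof.

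\emph{Main obstacle.} The only subtlety is in $(I)$: both the integrand $h_n$ and the conditioning $\sigma$-algebra $\mathcal G_n$ depend on $n$, so one cannot directly invoke ``$\bE[h\mid\mathcal G_n]\to h$''. The rearrangement above is what resolves this — it isolates the $n$-dependence of $h_n$ in the uniform bound $\|h_n\|_\infty\le\|f\|_\infty$ and extracts all convergence from $\bE_\nu[\Ind A\mid\mathcal G_n]\to\Ind A$; everything else (the $x$-discretisation, the treatment of $\nu$-null cells) is routine.
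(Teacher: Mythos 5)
Your construction of $\xi^n$ is exactly the paper's approximation: the mass of $\xi$ on each dyadic product cell is pushed to the left grid point in $x$ and averaged over the dyadic cell in $v$ (you even dispense with the paper's ``w.l.o.g.'' simplifications $\nu(\{1\})=0$ and $\xi(v,\{\overline{x}\})=0$ by treating $\{1\}$ as its own cell and noting $T_n(\overline{x})=\overline{x}$). Where you genuinely differ is in the verification of narrow convergence. The paper first invokes \cite[Theorem~2.2(d)]{balder_generalized_1988} to reduce the test sets $A$ to dyadic intervals; for such $A$ and $n\ge m$ the $v$-averaging is exactly invisible, so the integral against $\xi^n$ coincides identically with $\iint \Ind{A}(v)\,f\big(\floor{2^nx}/2^n\big)\,\xi(v,dx)\,\nu(dv)$, and the whole limit is a single application of bounded convergence. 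You instead verify Definition~\ref{definition:narrow_convergence} for arbitrary $A\in\cB(E_V)$, splitting the error into the $x$-pushforward part (uniform continuity of $f$ on the compact $E_X$, $|T_n(x)-x|\le 2^{-n}$) and the $v$-averaging part, which you control by identifying $\int f\,d\xi^n(v,\cdot)=\bE_\nu[h_n\mid\mathcal{G}_n](v)$, moving the conditional expectation onto $\Ind{A}$ by the tower/self-adjointness property, and applying L\'evy's upward martingale convergence theorem to get $\bE_\nu[\Ind{A}\mid\mathcal{G}_n]\to\Ind{A}$ in $L^1(\nu)$; this rearrangement correctly neutralises the $n$-dependence of $h_n$ via the uniform bound $\norm{h_n}_\infty\le\norm{f}_\infty$, which is indeed the only delicate point. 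Both routes are sound: the paper's buys brevity by restricting to a generating class of test sets (at the price of citing Balder's reduction result), while yours is self-contained with respect to the definition of narrow convergence but imports martingale convergence; the points you gloss over (Borel measurability of $h_n$, i.e.\ of $v\mapsto\int f\circ T_n\,d\xi(v,\cdot)$, and the kernel Fubini step for the mixture) are routine monotone-class arguments.
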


\begin{proof}
	Let $\xi \in \Xi(E_V, \nu; E_X)$. To shorten formulae, we assume w.l.o.g. that $\nu(\{1\})=\xi(v,\{\overline{x}\})=0$ for all $v\in E_V$. We define approximating Young measures lying in $\Xi^n$ by
	\begin{equation*}
		\begin{aligned}
			\xi^n(v,\,\cdot\,) := \sum_{\substack{k=0\\ \nu(D^n_k)>0}}^{2^n-1} \Ind{D^n_k}(v)\sum_{l=2^n\underline{x}}^{2^n\overline{x}-1}\delta_{l2^{-n}}  
			\frac{\nu \otimes \xi(D^n_k \times D^n_l)}{\nu(D_k^n)} + \sum_{\substack{k=0\\ \nu(D^n_k)=0}}^{2^n-1} \Ind{D^n_k}(v)\, \delta_0,
		\end{aligned}
		\quad
		\begin{aligned}
			v\in E_V,\\ n\in\bN,
		\end{aligned}
	\end{equation*}		
	where $D^n_j:=[j2^{-n},(j+1)2^{-n})$. By \cite[Theorem~2.2(d)]{balder_generalized_1988}, it is sufficient to show
	\begin{equation} \label{equation:narrow_convergence_approximation_lemma}
		\iint \Ind{A}(v) f(x) \,\xi^n(v, dx) \nu(dv) \to \iint \Ind{A}(v) f(x) \,\xi(v, dx) \nu(dv)
	\end{equation}
	for all sets $A=[k/2^m,(k+1)/2^m),\ m\in\bN,\ k\in\{(\underline{x}-1)2^m,(\underline{x}-1)2^m+1,\ldots,(\overline{x}+1)2^m-1\}$ and $f \in \cC_b(E_X)$. For every $n \ge m$, the endpoints of $A$ fall onto the $n$\textsuperscript{th} grid, and we get 
	\begin{align*}
		&\iint \Ind{A}(v) f(x) \xi^n(v, dx) \nu(dv)\\ 
		&\quad= \iint f(x) \sum_{\substack{k=0\\ \nu(D^n_k)>0,\ D^n_k \subseteq A}}^{2^n-1} \sum_{l=2^n\underline{x}}^{2^n\overline{x}-1} \Ind{D^n_k}(v) \frac{\nu \otimes \xi(D^n_k \times D^n_l)}{\nu(D_k^n)} \;\delta_{l 2^{-n}} (dx) \nu(dv)  \\		
		&\quad= \sum_{\substack{k=0\\D^n_k \subseteq A}}^{2^n-1}\sum_{l=2^n\underline{x}}^{2^n\overline{x}-1} f(l2^{-n}) \int_{D^n_k} \xi(v, D^n_l) \nu(dv) \\ 
		&\quad= \iint \Ind{A}(v) f\left( \frac{\floor{2^{n} x}}{2^{n}} \right)  \xi(v, dx) \nu(dv).
	\end{align*}
	By pointwise convergence of $\Ind{A}(v) f(2^{-n}\floor{2^{n} x}) \to \Ind{A}(v)f(x)$ as $n\to\infty$, bounded convergence leads to
	\begin{equation*}
		\iint \Ind{A}(v) f\left(\frac{\floor{2^{n} x}}{2^{n}}\right) \,\xi(v, dx) \nu(dv) \to \iint \Ind{A}(v) f(x) \,\xi(v, dx) \nu(dv),
	\end{equation*}
	which shows that $\xi^n \to \xi$ narrowly.
\end{proof}

\parskip-0.5em\renewcommand{\baselinestretch}{0.9}\small
\setlength{\bibsep}{.5em plus 0.3ex}

\end{document}